\newtheorem{prop}{Proposition}
\numberwithin{equation}{section}
\begin{document}

\begin{titlepage}
\title{\textbf{Cooking pasta with Lie groups}}
\maketitle
\begin{center}
\author{S.L.~Cacciatori,$^{1}$ F.~Canfora,$^2$ M.~Lagos,$^3$ F.~Muscolino$^1$ and A.~Vera$^{2,3}$}
\linebreak
\linebreak
\textsl{$^{1}$ \quad Department of Science and High Technology, Universit\`a dell'Insubria, Via Valleggio 11, IT-22100 Como, Italy, and INFN sezione di Milano, via Celoria 16, IT-20133 Milano, Italy;\\
$^{2}$ \quad Centro de Estudios Cient\'\i ficos (CECS), Casilla 1469, Valdivia, Chile\\
$^3$ \quad Instituto de Ciencias F\'\i sicas y Matem\'aticas, Universidad Austral de Chile, Valdivia, Chile
}
\abstract{We extend the (gauged) Skyrme model to the case in which the global isospin group (which usually is taken to be $SU(N)$) is a generic compact connected Lie group $G$. We analyze the corresponding field equations in (3+1)
dimensions from a group theory point of view. Several solutions can be
constructed analytically and are determined by the embeddings of three
dimensional simple Lie groups into $G$, in a generic irreducible
representation. These solutions represent the so-called nuclear pasta state
configurations of nuclear matter at low energy. We employ the Dynkin
explicit classification of all three dimensional Lie subgroups of
exceptional Lie group to classify all such solutions in the case $G$ is an
exceptional simple Lie group, and give all ingredients to construct them
explicitly. As an example, we construct the explicit solutions for $G=G_{2}$%
. We then extend our ansatz to include the minimal coupling of the Skyrme
field to a $U(1)$ gauge field. We extend the definition of the topological
charge to this case and then concentrate our attention to the
electromagnetic case. After imposing a \textquotedblleft free force
condition\textquotedblright\ on the gauge field, the complete set of coupled
field equations corresponding to the gauged Skyrme model minimally coupled
to an Abelian gauge field is reduced to just one linear ODE keeping alive
the topological charge. We discuss the cases in which such ODE belongs to
the (Whittaker-)Hill and Mathieu types.}
\end{center}
\end{titlepage}

%
%
%
%
%
%
%


\section{Introduction}

Nuclear pasta is a phase of matter that appears organized in some ordered
structures when a large number of Baryons is confined in a finite volume 
\cite{pasta1}, \cite{pasta2}, \cite{pasta3}, \cite{pasta4}, \cite{pasta5}, 
\cite{pasta6}. These configurations appear, for instance, in the crust of
neutron stars. Such aggregations of Baryons may take the form of tubular
structures, called Spaghetti states, or layers having a finite width, called
Lasagna states, or even globular shape, the gnocchi. Until very recently, it
was always tacitly assumed that nuclear pasta phase is the prototypical
situation in which it is impossible to reach a good analytic grasp. This is
related to the fact that such structures appear in the low energy limit of
Quantum Chromodynamics (QCD) in which perturbation theory does not work and,
at a first glance, the strong non-linear interactions prevent any attempt to
find exact solutions. Now, the low energy limit of QCD is described by the
Skyrme model \cite{skyrme} at the leading order in the 't Hooft expansion
(see \cite{Witten}, \cite{GSkyrme}, \cite{witten0}, \cite{bala0}, \cite%
{Bala1}, \cite{ANW}, as well as \cite{manton}, \cite{BaMa} and references
therein). Unsurprisingly, the highly non-linear character of the Skyrme
field equations discouraged any mathematical description of this kind of
structures. Consequently, as the above references show, numerical methods
(which, computationally, are quite demanding) are dominating in this regime.
The situation is even worse when one wants to analyze the electromagnetic
field generated in the nuclear pasta phase as, when the minimal coupling
with the $U(1)$ gauge field is taken into account; even the available
numerical methods are not effective.

\textit{On the other hand, one may ask: is the mathematical dream of an
analytic description of nuclear pasta structure really out of reach?}
Analytical methods to infer the general dependence of the nuclear pasta
phase on relevant physical parameters (such as the Baryon density) not only
would greatly improve our understanding of the nuclear pasta phase itself,
but they could also shed considerable light on the interactions of dense
nuclear matter with the electromagnetic field.

From the mathematical viewpoint, the problem is very deep and yet simple to
state: \textit{can we find analytic solutions of the (gauged) Skyrme model
able to describe typical configurations of the nuclear pasta phase?} Despite
the fact that this model has been introduced in the early sixties, for
several years only numerical solutions had been available (the only
exceptions being \cite{manton0.5}, in which the authors constructed analytic
solutions of the Skyrme field equations in a suitable fixed curved
background). Nevertheless, the mathematical beauty of the Skyrme model
attracted the attention of many leading mathematicians and physicists. In
particular, in \cite{manton0}, \cite{manton0.25}, \cite{manton0.3}, \cite%
{manton1} and \cite{spreight}, the authors were able to disclose the
geometrical structures of configurations with two Skyrmions, to analyze the
interaction energy of well separated solitons, to establish necessary
conditions for the existence of Skyrmionic crystals and so on. All these
remarkable results have been obtained without the availability of analytic
solutions of the Skyrme field equations. These efforts (together with the
comparison with Yang-Mills theory in which explicit solutions representing
instantons and non-Abelian monopoles shed considerable light on the
mathematical and physical properties of Yang-Mills theory itself) show very
clearly the importance to search for new analytic tools to analyze the
gauged Skyrme model in sectors with high Baryonic charge.

Quite recently, new methods have been introduced that allowed the
construction of explicit analytical solutions of the Skyrme field equations.
Such solutions are suitable to describe nuclear Lasagna and Spaghetti
states, see \cite{56b}, \cite{56}, \cite{56b1}, \cite{56c}, \cite{crystal1}, 
\cite{crystal2}, \cite{crystal3}, \cite{crystal3.1}, \cite{crystal4}, \cite%
{firstube}, \cite{firsttube2}, and \cite{gaugsksu(n)}. Let us recall that
the Skyrme model is a non-linear field theory for a scalar field $U$ taking
values in the $SU(N)$ Lie group, where $N$ is the flavor number. This theory
possesses a conserved topological charge (the third homotopy class) which
physically is interpreted as the Baryonic charge of the configuration.

Most of the solutions found so far have been constructed by employing ad hoc
ans\"atze adapted to the properties of the $SU(2)$ group, but soon it has
been realized that particular group structures seem to be at the root of the
solvability of the Skyrme field equations. For example, the exponentiation
of certain linear functions taking value in the Lie algebra lead to
Spaghetti-like configurations, while Euler parameterization of the field $U$%
, with suitable linear exponents, lead to Lasagna-like solutions. In all
these cases, the solutions are also topologically non-trivial with arbitrary
Baryonic charge. A proper mathematical understanding and generalization of
the strategy devised in \cite{56b}, \cite{56}, \cite{56b1}, \cite{56c}, \cite%
{crystal1}, \cite{crystal2}, \cite{crystal3}, \cite{crystal3.1}, \cite%
{crystal4}, \cite{firstube}, \cite{firsttube2} and \cite{gaugsksu(n)},
offers the unique opportunity to disclose the deep connections of the
nuclear pasta phase with the theory of Lie groups; two topics which (until
very recently) could have been considered extremely far from each other. 
\textit{The present paper is devoted to this opportunity:} to provide
nuclear pasta configurations of Lasagna and Spaghetti types with the
mathematical basis of Lie group theory.

A first step in this direction was to link certain properties of the
semi-simple Lie group to the possibility of getting explicit solutions of
the Skyrme equations in the Lasagna configurations for the case of $SU(N)$
groups with arbitrary $N$ \cite{gaugsksu(n)}. More in general, using the
methods developed in \cite{crystal1}, \cite{crystal2}, \cite{crystal3}, \cite%
{crystal4}, \cite{firstube}, \cite{gaugsksu(n)}, with the generalization of
the Euler angles to $SU(N)$ of \cite{euler1}, \cite{euler2}, \cite{euler3},
it has been possible to construct non-embedded multi-Baryonic solutions of
nuclear Spaghetti and nuclear Lasagna, at least for the case for the $SU(N)$
groups, see \cite{Cacciatori:2021neu}.

A fundamental ingredient in the theory of Lie groups with relevant
applications in the Skyrme model is the concept of\textit{\ non-embedded
solutions} introduced in \cite{bala0} and \cite{Bala1}. These are solutions
of the $SU(N)$-Skyrme model which cannot be written as trivial embeddings of 
$SU(2)$ in $SU(N)$. However, the techniques used to get such results, for
example in \cite{gaugsksu(n)}, where quite specific of the group $SU(N)$. In
fact, as we will show in the present manuscript, there is a very interesting
relation between Lie group theory and such families of solutions, which
allows to generalize the above results in a much more general setting and to
classify the solutions: this is exactly the main goal of the present paper.

\subsection{Resume of the results}

\textit{Firstly}, we will prove that, having fixed a compact connected Lie
group $G$ with a given irreducible representation (irrep), the solutions are
determined in general by deformations of embeddings of three dimensional Lie
groups into $G$.

\textit{Secondly}, we will prove that inequivalent families of solutions
correspond to inequivalent embeddings (not related by conjugation in $G$).
The problem of determine all possible three dimensional subgroups of a
simple Lie group has been solved by E.~B.~Dynkin in \cite{Dy-57}. In
particular, in that paper, all possible three dimensional subalgebras of the
exceptional Lie algebras are written down.

\textit{Thirdly}, we will show that such classification also classifies the
Spaghetti and Lasagna solutions determined via group theory methods. The
difference between Spaghetti and Lasagna depends on the realization of the
subgroup element of $G$: if it is generated by the exponentiation of a
linear combination of the generators of a three-dimensional subalgebra of $%
\mathfrak{g}=\mathcal{L}ie(G)$, then we get Spaghetti-like solutions, while
if the realization is through Euler parameterization we get Lasagna-like
solutions. Then, we will compute explicitly relevant quantities such as the
energy of these configurations.

\textit{Fourthly}, we will extend this classification to the case of the
gauged Skyrme model minimally coupled to Maxwell theory. In particular, we
will extend the definition of topological (Baryonic) charge to this case. We
will reduce the complete set of coupled field equations both in the gauged
Lasagna case and in the gauged Spaghetti case to a single linear equation
and we will analyze the integrable cases which correspond to Whittaker-Hill
and Mathieu types linear differential equations.

\subsection{Main tools employed in the analysis}

In the present work, we will employ abstract techniques and general
properties of semi-simple Lie groups in order to investigate their relation
with solvability of the Skyrme equations. This allows to extend all the
results found in \cite{gaugsksu(n)} for the special unitary groups to an
arbitrary semi-simple compact Lie group. Indeed, all results will be based
on the properties of the roots and weights of the associated Lie algebras,
while a generalized Euler parameterization of the Skyrme field $U$, taking
values in $G$, will lead in general to Lasagna configurations. Similarly,
the direct exponentiation of the algebra, as discussed above, will lead us
to Spaghetti structures extending the results of \cite{Cacciatori:2021neu}.
In any case, we will compute the energy of such configurations and will show
that they have always a non-trivial Baryon (topological) charge.
Interestingly enough, a strategy for constructing non-trivial non-$SU(2)$
solutions in the sense of \cite{bala0} and \cite{Bala1} will result to be
strictly related to the classification of all three dimensional groups in
any given simple Lie group, provided by Dynkin in his PhD thesis work, see 
\cite{Dy-57}. As an application of our general analysis, we will show how to
construct all non-trivial Lasagna and Spaghetti configurations in any
exceptional Lie group, making very explicit the case of $G=G_{2}$.

The generalization of our ans\"{a}tze which allows to include the minimal
coupling of the model to a $U(1)$ electromagnetic field will be introduced
as follows. As usual, the gauge field will work as a connection making all
derivative covariant under the action of the $U(1)$ gauge field, while their
dynamics is expressed by the usual Maxwell action (although our methods also
work in the Yang-Mills case). The covariant derivatives break the
topological nature of the original term expressing the Baryonic charge.
Therefore, generalizing the result in \cite{Witten}, we will deform the
Baryonic density expression in order to recover topological invariance .

The introduction of the electromagnetic field makes the field equations of
the gauged Skyrme model minimally coupled to Maxwell theory extremely more
complicated than in the Skyrme case. Nevertheless, quite surprisingly, the
equations will be separable (in a suitable sense) and once again solvable,
after imposing the \textit{free force conditions} on the gauge field. This
condition appears quite naturally in Plasma physics (see \cite{FFP1}, \cite%
{FFP2}, \cite{FFP3}, \cite{FFP4}, \cite{FFP5} and references therein). Quite
interestingly, such condition implies that the gauge field disappears from
the gauged Skyrme field equations (without being a trivial gauge field, of
course) and therefore, in this way the gauged Skyrme field equations can be
solved as in the ungauged case. It is a very non-trivial result that the
remaining field equations (which correspond to the Maxwell equations with
the source term arising from the gauged Skyrme model) reduce just to one
linear equation for a suitable component of the gauge field in which the
Skyrmion act as a source-like term. We will analyze the integrable cases in
which this last remaining equation takes the form of a Hill equation for the
case of Lasagna states, while a Schr\"{o}dinger equation with a bi-periodic
potential of finite type in the Spaghetti case.

Interestingly enough, for the Lasagna case another nice coincidence shows up
here: the relevant solutions we need are exactly the periodic solutions
whose existence has been investigated in \cite{Magnus}, and which explicit
form for the case of a Whittaker-Hill equation has been determined in \cite%
{Whittaker-Hill}.

It is a truly remarkable result that such a complicated phase such as the
nuclear pasta phase of the low energy limit of QCD (even taking into account
the minimal coupling with Maxwell theory) can be understood so cleanly in
terms of the theory of Lie group.

\subsection{Notations and conventions}

Our conventions are as follows. The action of the Skyrme model in $(3+1)$
dimensions is 
\begin{gather}  \label{I}
I= \int d^4x\sqrt{-g} \biggl[ \frac{K}{4}\text{Tr}\biggl( \mathcal{L}_\mu 
\mathcal{L}^\mu + \frac{\lambda}{8} G_{\mu\nu}G^{\mu\nu} \biggl) \biggl] , \\
\mathcal{L}_\mu= U^{-1}\partial_\mu U , \qquad G_{\mu\nu}=[\mathcal{L}_\mu,%
\mathcal{L}_\nu] , \qquad U(x) \in G ,  \notag
\end{gather}
where $K$ and $\lambda$ are positive coupling constants and $g$ is the
metric determinant. The Skyrme field $U$ is a map 
\begin{align*}
U:\mathbb{\mathbb{R}}^{1,3} \longrightarrow G
\end{align*}
where $G$ is semi-simple compact Lie group, so that 
\begin{equation*}
\mathcal{L}_\mu=\sum_{i=1}^{dim(G)}\mathcal{L}^i_\mu T_i ,
\end{equation*}
where $\{T_i\}$ is a basis for the Lie algebra $\mathfrak{g}=Lie(G)$.

The system is confined in a box of finite volume with a flat metric. For
Lasagna states we will use a metric of the form 
\begin{equation}
ds^{2}=-dt^{2}+L_{r}^{2}dr^{2}+L_{\gamma }^{2}d\gamma ^{2}+L_{\phi}^{2}d\phi
^{2} ,  \label{box}
\end{equation}
where the adimensional spatial coordinates have the ranges 
\begin{equation}
0\leq r\leq 2\pi ,\quad 0\leq \gamma \leq 2\pi ,\quad 0\leq \phi \leq 2\pi ,
\label{ranges}
\end{equation}
so that the solitons are confined in a box of volume $V=(2\pi)^{3}L_{r}L_{%
\gamma }L_{\phi }$.\newline
For nuclear Spaghetti we will use the metric ansatz 
\begin{align}
ds^2 =-dt^2 + L_r^2dr^2 + L_\theta^2 d\theta^2 + L_\phi^2d\phi^2,
\end{align}
with adimensional coordinates ranging in 
\begin{align}
0\leq r\leq 2\pi ,\quad 0\leq \theta \leq \pi ,\quad 0\leq \phi \leq 2\pi ,
\label{ranges-spag}
\end{align}
and a total volume $V=4\pi^3 L_r L_\theta L_\phi$.

The energy-momentum tensor associated to the Skyrme field is given by 
\begin{equation}
T_{\mu \nu } \ = \ -\frac{K}{2}\text{Tr}\biggl(\mathcal{L}_{\mu }\mathcal{L}%
_{\nu }-\frac{1}{2}g_{\mu \nu }\mathcal{L}_{\alpha }\mathcal{L}^{\alpha }+%
\frac{\lambda }{4}(g^{\alpha \beta }G_{\mu \alpha }G_{\nu \beta }-\frac{1}{4}%
g_{\mu \nu }G_{\alpha \beta }G^{\alpha \beta })\biggl) \ .  \label{Tmunu}
\end{equation}
The topological charge is defined by (see Proposition \ref{prop3}) 
\begin{gather}  \label{B}
B=\frac{1}{24\pi^{2}}\int_{\mathcal{V}} \mathrm{Tr} (\mathcal{L }\wedge 
\mathcal{L }\wedge \mathcal{L}),
\end{gather}
where $\mathcal{V}$ is the spatial region spanned by the coordinates at any
fixed time $t$, $\mathcal{L}=U^{-1} d U$ and Tr is the trace over the matrix
indices.


\section{Lasagna groups}

In \cite{AlCaCaCe} it has been shown how Lasagna configurations can be
determined as solutions of the Skyrme equations realized as Euler
parameterizations of three dimensional cycles in $SU(N)$. Indeed, these
cycles result to be suitable deformations of different non-trivial
embeddings of $SU(2)$ into $SU(N)$. Here we want to prove that such
construction can be easily extended to any simple Lie group (at least for
the case of the undeformed embedding). Recall that in the case of $SU(N)$
the embedding was defined \cite{gaugsksu(n)} by the generalized Euler map 
\begin{align}
U(t,r,\gamma,\phi)=e^{\left(\frac t{L_\phi}-\phi\right)\sigma\kappa}
e^{h(r)} e^{m\gamma\kappa},  \label{Ansatz-FP}
\end{align}
where $\sigma$ is a constant, $m$ is an integer, $\kappa$ a suitable matrix
in $SU(N)$ and $h(r)$ results to be a linear function of $r$ with values in
the Cartan algebra $H$. Indeed, the main trick was to determine a suitable
matrix $\kappa$ able to make everything easily computable and to grant
periodicity of $e^{m\gamma\kappa}$. The convenient strategy has been
composed in two steps: first we have taken a basis of eigenmatrices of the
simple roots, $\lambda_j$, $j=1,\ldots,r$, where $r=N-1$ is the rank of the
group, and defined the matrix 
\begin{align}  \label{kappa}
\kappa=\sum_{j=1}^r (c_j \lambda_j-c^*_j \lambda_j^\dagger),
\end{align}
where $\dagger$ means hermitian conjugate and $c_j$ are complex constants.
The second step consisted in determining the allowed values for the $c_j$.
We want to do the same with a generic simple Lie group $G$ replacing $SU(N)$%
. The first problem we ran into is the following. If $\lambda\in \mathfrak{g}%
_{\mathbb{C}}$ is an eigenmatrix of a root $\alpha$ of the Lie algebra $%
\mathfrak{g}$ of $G$ (so it belongs to the complexification $\mathfrak{g}_{%
\mathcal{C}}$ of $\mathfrak{g}$), in general $\lambda^\dagger$ doesn't
belong to $\mathfrak{g}_{\mathbb{C}}$ if $G\neq SU(N)$ for some $N$. So in
general $\kappa$ defined above is not a matrix of $\mathfrak{g}$.\newline
In order to overcome this problem, we notice that a compact simple Lie group 
$G$ always contain a \textit{split} maximal subgroup \cite{CaDaPiSc}, which
is a maximal subgroup $K$ with the property that $2\mathrm{dim}(K)+r=\mathrm{%
dim}(G)$ and that there exists a Cartan subalgebra $H$ of $\mathfrak{g}$ all
contained in $\mathfrak{p}$, the orthogonal complement of the Lie algebra $%
\mathfrak{k}$ of $K$ in $\mathfrak{g}$ (w.r.t. the Killing product): 
\begin{align}
\mathfrak{g}=\mathfrak{k}\oplus \mathfrak{p}.
\end{align}
Of course $\mathfrak{k}$ is a subalgebra of $\mathfrak{g}$, while $\mathfrak{%
p}$ is not, since 
\begin{align}
[\mathfrak{k},\mathfrak{k}]\subset \mathfrak{k}, \qquad [\mathfrak{k},%
\mathfrak{p}]\subset \mathfrak{p}, \qquad [\mathfrak{p},\mathfrak{p}]\subset 
\mathfrak{k},
\end{align}
which says that $\mathfrak{p}$ is a representation space for $G$ and $K$ is
an isotropy group for $\mathfrak{p}$. One can easily show that a root matrix 
$\lambda$, associated to a root $\alpha$, must have the form 
\begin{align}
\lambda=k+i p, \quad k\in \mathfrak{k},\ p\in \mathfrak{p},\ p\neq 0.
\end{align}
Then, $k-ip$ also is a root matrix, corresponding to the root $-\alpha$. We
replace the hermitian conjugation with the $\sim$ conjugation defined by 
\begin{align}
\widetilde{k+i p}\equiv (k+i p)^\sim:=k-i p.
\end{align}
This way, if $\lambda_j$, $j=1,\ldots,r$ are matrix roots corresponding to
the simple roots of $\mathfrak{g}$, then 
\begin{align}
\kappa=\sum_{j=1}^r (c_j \lambda_j+c^*_j \tilde \lambda_j) \in \mathfrak{g}.
\label{Ansatz-FP1}
\end{align}
Notice that for $G=SU(N)$ we have $\tilde \lambda=-\lambda^\dagger$.\newline
If we choose normalizations as in Appendix \ref{app:roots}, we can use the
matrices $J_k$ to decompose $h(z)=\sum_{j=1}^r y_j(z) J_j$. The properties
of the roots can be inferred case by case from the lists in Appendix \ref%
{app:roots}. Exactly the same calculations as in \cite{AlCaCaCe} show that
the field equations for the Skyrme field are equivalent to the system 
\begin{align}
h^{\prime \prime }+\frac{\lambda m^{2}}{2L_{\gamma }^{2}} \sum_{j=1}^{N-1}%
\alpha _{j}(h^{\prime \prime })|c_{j}|^{2}J_{j}&=0,  \label{ridotta} \\
\sum_{j<k}\left( \alpha _{j}(h^{\prime})^2-\alpha _{k}(h^{\prime})^2-i\left(
\alpha _{j}(h^{\prime \prime })-\alpha _{k}(h^{\prime \prime })\right)
\right) c_{j}c_{k} (\alpha_j|\alpha_k) \lambda_{\alpha_j+\alpha_k} &=0.
\label{vincolo}
\end{align}
The first equation has solution $h^{\prime \prime }=0$, as a consequence of
the strict positivity of the Cartan matrix for each simple group. The second
system, using that the $\lambda_{\alpha_j+\alpha_k}$ are independent,
reduces to the set of equations 
\begin{align}
\alpha _{j}(h^{\prime})^2-\alpha _{k}(h^{\prime})^2=0, \qquad j<k, \quad
s.t. \ (\alpha_j|\alpha_k)\neq 0.
\end{align}
Since $(\alpha_j|\alpha_k)\neq 0$ if and only if $\alpha_j$ and $\alpha_k$
are linked and since there are $r-1$ links in a connected Dynkin diagram,
these are exactly $r-1$ equations. These are independent and assuming $%
a=\alpha_1(h)\neq 0$ have the general solution 
\begin{align}
\alpha_j(h^{\prime })=\epsilon_j a, \quad j=2,\ldots, r,
\end{align}
where $\epsilon_j$ are signs. As in \cite{AlCaCaCe}, we can solve it by
writing 
\begin{align}
h^{\prime }=a \sum_{j=1}^r 2 \frac {w_j}{(\alpha_j|\alpha_j)} J_j.
\end{align}
Applying $\alpha_k$ to both hands and defining $\epsilon_1=1$ we get 
\begin{align}
\epsilon_k=\sum_{j=1}^r 2 \frac {w_j}{(\alpha_j|\alpha_j)}
\alpha_k(J_j)=\sum_{j=1}^r w_j C^G_{jk},
\end{align}
where $C^G$ is the Cartan matrix associated to $G$. The Cartan matrix is
positive definite and is therefore always invertible, so that 
\begin{align}
w_k=\sum_{j=1}^r \epsilon_j (C^G)^{-1}_{j,k}.
\end{align}
Therefore, we have proven the following generalization of Proposition 2 in 
\cite{AlCaCaCe}.

\begin{prop}
\label{propuno} All local solutions of the Skyrme field equations of the
form determined by the ansatz (\ref{Ansatz-FP}), (\ref{Ansatz-FP1}), with
metric 
\begin{align}
ds^2=-dt^2+L_r^2dr^2+L_\gamma^2 d\gamma^2+L_\phi^2 d\phi^2,
\end{align}
are given by 
\begin{align}
h(r)&=ar v_{\epsilon}, \\
v_{\epsilon}&=\sum_{j,k} (C^G)^{-1}_{j,k} \epsilon_j \frac
2{\|\alpha_k\|^2}J_k,  \label{vepsilon}
\end{align}
where $a$ is a real constant and $\epsilon_j$ are signs, with $\epsilon_1=1$.
\end{prop}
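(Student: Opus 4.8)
The plan is to substitute the Euler ansatz (\ref{Ansatz-FP})--(\ref{Ansatz-FP1}) into the Skyrme field equations and reduce them, exactly along the lines of the $SU(N)$ computation in \cite{AlCaCaCe}, to the coupled pair of conditions (\ref{ridotta}) and (\ref{vincolo}). Since $h(r)$ is valued in the Cartan subalgebra $H$ and $\kappa$ is the combination of simple-root eigenmatrices in (\ref{Ansatz-FP1}), the left-invariant forms $\mathcal{L}_\mu$ and the commutators $G_{\mu\nu}$ decompose cleanly along the root eigenmatrices, and the non-linear terms sort into the Cartan-valued part (\ref{ridotta}) and the off-diagonal part (\ref{vincolo}). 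The one genuinely new ingredient relative to \cite{AlCaCaCe} is the replacement of hermitian conjugation by the $\sim$ conjugation, which is precisely what keeps $\kappa \in \mathfrak{g}$ for $G \neq SU(N)$; I would verify that, with this substitution, every intermediate algebraic identity used in \cite{AlCaCaCe} carries over verbatim.

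Next I would dispose of (\ref{ridotta}). As $h''$ lives in $H$, applying the simple roots $\alpha_k$ turns (\ref{ridotta}) into a finite linear system $(I + M)u = 0$ for the scalars $u_k = \alpha_k(h'')$, where $M$ acts through the Cartan matrix $C^G$ weighted by the strictly positive factors $\frac{\lambda m^2}{2L_\gamma^2}|c_j|^2$. Because $C^G$ is positive definite for every simple $\mathfrak{g}$, the matrix $M$ is similar to a positive semidefinite one, so $I + M$ has spectrum bounded below by $1$ and is invertible; hence $u = 0$ and therefore $h'' = 0$. This positivity step is the one I expect to be the main obstacle, since it is where one must check that the structure guaranteeing uniqueness in the $SU(N)$ case persists for an arbitrary simple group rather than relying on features special to $SU(N)$.

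With $h'' = 0$, the imaginary terms $\alpha_j(h'') - \alpha_k(h'')$ in (\ref{vincolo}) vanish, leaving $\sum_{j<k}\big(\alpha_j(h')^2 - \alpha_k(h')^2\big)\,c_jc_k\,(\alpha_j|\alpha_k)\,\lambda_{\alpha_j+\alpha_k} = 0$. I would then use that distinct sums $\alpha_j+\alpha_k$ of linked simple roots are distinct roots, so the corresponding eigenmatrices $\lambda_{\alpha_j+\alpha_k}$ are linearly independent; each coefficient must therefore vanish separately. Since $(\alpha_j|\alpha_k)\neq 0$ exactly when $\alpha_j$ and $\alpha_k$ are linked, this reduces to $\alpha_j(h')^2 = \alpha_k(h')^2$ over the edges of the Dynkin diagram. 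The diagram of a simple Lie algebra is a tree on $r$ nodes, so there are exactly $r-1$ such edges; connectivity then propagates the equality of squares to all nodes, giving $\alpha_j(h')^2 = a^2$ with $a = \alpha_1(h')$, i.e. $\alpha_j(h') = \epsilon_j a$ for signs $\epsilon_j$ normalized by $\epsilon_1 = 1$.

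Finally I would reconstruct $h'$ from its prescribed values on the simple roots. Writing the trial form $h' = a\sum_j \frac{2w_j}{(\alpha_j|\alpha_j)}J_j$ and applying $\alpha_k$ yields $\epsilon_k = \sum_j w_j C^G_{jk}$; inverting the (non-singular) Cartan matrix gives $w_k = \sum_j \epsilon_j (C^G)^{-1}_{jk}$. Substituting back, and integrating $h'' = 0$ with a suitable choice of the integration constant, produces $h(r) = ar\,v_\epsilon$ with $v_\epsilon$ exactly as in (\ref{vepsilon}). The whole argument parallels Proposition 2 of \cite{AlCaCaCe}; the substantive task is simply to confirm that each group-theoretic fact invoked---positivity of $C^G$, the tree structure of the Dynkin diagram, and the independence of the $\lambda_{\alpha_j+\alpha_k}$---holds uniformly across all simple compact $G$.
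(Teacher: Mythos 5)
Your proposal is correct and follows essentially the same route as the paper: reduction of the field equations to the Cartan-valued system (\ref{ridotta}) and the off-diagonal system (\ref{vincolo}), elimination of $h''$ via positivity, linear independence of the $\lambda_{\alpha_j+\alpha_k}$ giving the $r-1$ edge equations over the Dynkin diagram, and inversion of the Cartan matrix to produce $v_\epsilon$. Your only addition is to spell out the positivity step explicitly as an invertibility statement for $(I+M)$ with $M$ similar to a positive semidefinite matrix, which makes rigorous what the paper asserts in one line via the strict positivity of $C^G$.
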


Now we have to discuss which choices of the coefficients $c_j$ are allowed.
To this hand, we have that the solution must cover a topological cycle
entirely. First, we notice that as a consequence of our normalizations, if
we want to get it with $r$ varying in $[0,2\pi]$, we must take 
\begin{align}
a=\frac 12,
\end{align}
see \cite{AlCaCaCe}, Proposition 3. \newline
The second step is to grant periodicity of $e^{\gamma \kappa}$. This is the
difficult part and determines the allowed values for the $c_j$. Notice that $%
\kappa$ is diagonalizable (over $\mathbb{C}$). Indeed, since $G$ is compact,
in the adjoint representation $\kappa$ results to be antihermitian and then
diagonalizable with imaginary eigenvalues. It follows that it is
diagonalizable in any representation with purely imaginary eigenvalues. If $%
N $ is the dimension of the representation, then the eigenvalues $%
i\mu_1,\ldots, i\mu_N$ must be in rational ratios, which means that for any $%
\mu_a\neq 0$ it must exist integers $n_a\neq 0$ such that 
\begin{align}
\mu_a n_b=\mu_b n_a,
\end{align}
or, equivalently, that it exists a non-vanishing real number $\mu$ and $N$
integers $n_a\in \mathbb{Z}$, such that 
\begin{align}
\mu_a=\mu n_a.
\end{align}
This condition in general will depend on $N$, $G$ and the constants $c_j$.
In \cite{AlCaCaCe} this problem has been shown to have a set of solutions
for the particular case of $G=SU(N)$ in the fundamental representation. Here
we have to generalize that procedure without exploiting a very explicit
realization. Indeed, we can prove that there are solutions with all $c_j$
different from zero by following a strategy developed by Dynkin in \cite%
{Dy-57}, that we will recall in the next section. Let us choose $f$ in the
Cartan subalgebra, such that $\alpha_j(f)=b$, a positive constant
independent on $j$, so that 
\begin{align}  \label{f}
[f,\lambda_j]=ib \lambda_j, \qquad [f,\tilde\lambda_j]=-i b \tilde\lambda_j.
\end{align}
We can easily determine it as follows. If $h_j=i[\lambda_j, \tilde\lambda_j]$%
, then set $f=\sum_{k=1}^r p_k h_k$. Thus, the above condition is equivalent
to 
\begin{align}
b=\sum_{k=1}^r p_k\alpha_j(h_k)=\sum_{k=1}^r p_k (\alpha_j|\alpha_k) =
\sum_{k=1}^r p_k \frac {\|\alpha_k\|^2}2 C^G_{kj}.
\end{align}
from which we immediately get 
\begin{align}  \label{Coeff_f}
p_j= b\frac 2{\|\alpha_j\|^2} \sum_{k=1}^r (C^G)^{-1}_{kj}.
\end{align}
By the properties of the Cartan matrix it follows that $p_j$ are all
positive. Finally, we set 
\begin{align}  \label{CondCF}
c_j =e^{i\psi_j} \sqrt {\frac b2 p_j}.
\end{align}
Then, we have the following proposition:

\begin{prop}
\label{prop-dynkin} If $\kappa$ is constructed with the above choice of $c_j$%
, then $e^{\kappa z}$ is periodic with period $n\frac {2\pi}b$ where $n$ may
be 1 or 2 depending on the representation, $n=1$ for the adjoint
representation.
\end{prop}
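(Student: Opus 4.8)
The plan is to show that $\kappa$ lies in a three–dimensional \emph{compact} subalgebra of $\mathfrak{g}$ isomorphic to $su(2)$ --- exactly the type of subalgebra Dynkin attaches to the chosen simple roots --- and then to read off the period of $e^{\kappa z}$ from the representation theory of that $su(2)$. Write
\[
e_+:=\sum_{j=1}^r c_j\lambda_j,\qquad e_-:=\sum_{j=1}^r c_j^*\tilde\lambda_j,
\]
so that $\kappa=e_++e_-$. Since the $\sim$–operation is conjugate–linear and sends $\lambda_j\mapsto\tilde\lambda_j$, one has $e_-=\widetilde{e_+}$; hence both $\kappa=e_++\widetilde{e_+}$ and $\kappa':=i(e_+-\widetilde{e_+})$ are fixed by $\sim$ and therefore lie in $\mathfrak{g}$, while $f=\sum_k p_k h_k$ already lies in the Cartan subalgebra $H\subset\mathfrak{p}\subset\mathfrak{g}$.

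First I would verify the $sl_2$–bracket relations. The relations $[f,e_\pm]=\pm i b\,e_\pm$ are immediate from \eqref{f}. For $[e_+,e_-]$ I would use that $[\lambda_j,\tilde\lambda_k]$ carries weight $\alpha_j-\alpha_k$, which for distinct simple roots is never a root, so only the diagonal terms survive:
\[
[e_+,e_-]=\sum_{j=1}^r|c_j|^2\,[\lambda_j,\tilde\lambda_j]=\tfrac{ib}{2}\sum_{j=1}^r p_j h_j=\tfrac{ib}{2}\,f,
\]
using $|c_j|^2=\tfrac b2 p_j$ and $h_j=i[\lambda_j,\tilde\lambda_j]$ (with the sign fixed by compactness, see below). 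Passing to the real basis $\{\kappa,\kappa',f\}\subset\mathfrak{g}$, these become
\[
[f,\kappa]=b\,\kappa',\qquad [f,\kappa']=-b\,\kappa,\qquad [\kappa,\kappa']=b\,f,
\]
so $\mathfrak{s}:=\mathrm{span}_{\mathbb{R}}\{\kappa,\kappa',f\}$ is a three–dimensional subalgebra. I would then check compactness via the Killing form on $\mathfrak{s}$: each of $\mathrm{ad}_f^2,\ \mathrm{ad}_\kappa^2,\ \mathrm{ad}_{\kappa'}^2$ has trace $-2b^2$ on $\mathfrak{s}$, so the restricted Killing form is negative definite and $\mathfrak{s}\cong su(2)$. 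Moreover $\kappa,\kappa',f$ are Killing–orthogonal of equal norm, i.e. $\kappa=bX_1,\ \kappa'=bX_2,\ f=bX_3$ for a standard basis $[X_a,X_b]=\epsilon_{abc}X_c$; in particular $\kappa$ and $f$ are conjugate inside the integrated $SU(2)$, so $e^{\kappa z}$ and $e^{fz}$ have the same period.

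It therefore suffices to compute the period of $e^{fz}$. Setting $H:=-\tfrac{2i}{b}f$ yields the complex triple $[H,e_+]=2e_+$, $[H,e_-]=-2e_-$, so $H$ is the semisimple element of an $sl_2(\mathbb{C})$–triple and, by the representation theory of $sl_2$, acts on any finite–dimensional representation with integer eigenvalues. Hence $f=\tfrac{ib}{2}H$ is diagonalisable with eigenvalues in $\{\,ib\,m:\ 2m\in\mathbb{Z}\,\}$, and so is $\kappa$. Thus $e^{\kappa z}$ has eigenvalues $e^{ibmz}$, all returning to $1$ at $z=\tfrac{4\pi}{b}$, giving periodicity with period $n\,\tfrac{2\pi}{b}$, $n\in\{1,2\}$; the value $n=2$ is forced precisely when a half–integer $m$ occurs. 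For the adjoint representation, $\alpha_j(f)=b$ forces $\beta(f)=b\,\mathrm{ht}(\beta)$, whence $[H,\lambda_\beta]=2\,\mathrm{ht}(\beta)\,\lambda_\beta$; every $H$–eigenvalue is the \emph{even} integer $2\,\mathrm{ht}(\beta)$, so all $m=\mathrm{ht}(\beta)$ are integers and $e^{\kappa z}$ already closes at $z=\tfrac{2\pi}{b}$, i.e. $n=1$.

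The main obstacle is the second paragraph: pinning down the structure constants of $\mathfrak{s}$ with the correct signs so that $\mathfrak{s}$ comes out \emph{compact} rather than $sl_2(\mathbb{R})$. This hinges on two delicate inputs --- the vanishing of $[\lambda_j,\tilde\lambda_k]$ for $j\neq k$ (non–additivity of distinct simple roots) together with the precise normalisation $|c_j|^2=\tfrac b2 p_j$, and the sign convention buried in $h_j=i[\lambda_j,\tilde\lambda_j]$ --- since an error in either replaces the negative–definite Killing form by the indefinite one of $sl_2(\mathbb{R})$ and wrecks the periodicity argument. Once the compact triple and the identity $f=\tfrac{ib}{2}H$ are secured, the quantisation $2m\in\mathbb{Z}$ and the even–integer spectrum on the adjoint make the dichotomy $n\in\{1,2\}$, with $n=1$ for the adjoint, essentially automatic.
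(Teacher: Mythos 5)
Your proof is correct, and its skeleton coincides with the paper's: you form the real triple $\{f,\kappa,\tfrac1b[f,\kappa]\}$ closing into an $su(2)$ with $[T_i,T_j]=b\,\epsilon_{ijk}T_k$, conjugate $\kappa$ to $f$ inside the integrated subgroup, and reduce everything to the periodicity of $e^{fz}$ --- exactly the paper's first two moves. Where you genuinely diverge is the endgame. The paper computes the adjoint action of $e^{fz}$ on the root vectors, observes that $g=e^{f\,2\pi/b}$ commutes with all $\lambda_\alpha$ and hence is central, invokes finiteness of the center of a simple compact group, and then cites Helgason (Ch.~VII, Thm.~8.5) to pin down $n\in\{1,2\}$. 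You instead derive the dichotomy self-containedly: identifying $H=-\tfrac{2i}{b}f$ as the semisimple element of an $sl_2(\mathbb{C})$-triple forces all eigenvalues of $f$ into $\tfrac{ib}{2}\mathbb{Z}$ in any finite-dimensional representation, giving $e^{\kappa\,4\pi/b}=I$ always, with $n=1$ precisely when only integer spins occur; for the adjoint the spectrum $2\,\mathrm{ht}(\beta)$ is even, so $n=1$, which is the representation-theoretic content behind the paper's height argument. Your route buys two things: it eliminates the external citation, and it makes the paper's preliminary phase-reduction step (conjugating away the $\psi_j$ by $e^{\psi h_j}$) unnecessary, since only $|c_j|^2$ enters $[e_+,e_-]$ and $[f,e_\pm]=\pm ib\,e_\pm$ holds for arbitrary phases. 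Finally, the sign delicacy you flag is real and not your error: the paper's main-text convention $h_j=i[\lambda_j,\tilde\lambda_j]$ is inconsistent (by a sign) with its own Appendix normalization $[\tilde\lambda_j,\lambda_k]=-i\delta_{jk}J_j$ and with the positivity of the $p_j$; the compactness-consistent choice yields exactly your $[e_+,e_-]=\tfrac{ib}{2}f$ and reproduces the paper's structure constants, so your ``sign fixed by compactness'' proviso resolves the ambiguity correctly.
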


\begin{proof}
We first show that periodicity is independent on the phases of $c_j$. If $%
e^{\kappa z}$ is periodic, then, for any fixed $g\in G$, $ge^{\kappa z}g^{-1}
$ is also periodic with the same period. Since the simple roots $\alpha_j$
are linearly independent,  for any fixed $j$ we can find an element $h_j$ of
the Cartan algebra such that $\alpha_k(h_j)=\delta_{kj}$. Let us set $%
g=e^{\psi h_j}$. Then,  
\begin{align}
g\kappa g^{-1}&= \sum_{k=1}^r (c_k e^{\psi h_j} \lambda_k e^{-\psi
h_j}+c^*_k e^{\psi h_j}\tilde \lambda_k e^{-\psi h_j}) =\sum_{k=1}^r (c_k
e^{i\psi \alpha_k(h_j)} \lambda_k +c^*_k e^{-i\psi \alpha_k(h_j)}\tilde
\lambda_k ) \cr & =\sum_{k\neq j} (c_k \lambda_k +c^*_j \tilde \lambda_k +c_j
e^{i\psi} \lambda_j +c^*_j e^{-i\psi } \tilde \lambda_j)  ,
\end{align}
which shows that $g\kappa g^{-1}$ differs from $\kappa$ only by the phase of 
$c_j$. This proves our assert.  So, it is sufficient to prove the proposition
for $\psi_j=0$. In this case, $T_3:=f, T_1:=\kappa$ and $T_2=\frac 1b
[f,\kappa]$ form an $A_1$ subalgebra of $\mathfrak{g}$, and $\kappa$ is
conjugate to $f$ in $G$. Indeed,  $[T_i,T_j]=b\epsilon_{ijk}T_k$ from which  
\begin{align}
T_1=e^{\frac \pi2 T_2} T_3 e^{-\frac \pi2 T_2}.
\end{align}
Therefore, as before, the periodicity of $e^{\kappa z}$ is equivalent to the
periodicity of $e^{f z}$. But  
\begin{align}
e^{f z} h_j e^{-f z}=h_j,
\end{align}
and  
\begin{align}
e^{f z} \lambda_\alpha e^{-f z}=e^{i\alpha(f) z} \lambda_\alpha.
\end{align}
Now, any given root $\alpha$ is 
\begin{align}
\alpha=\sum_j n_j \alpha_j,
\end{align}
with the $n_j$ all non-negative or all non-positive integers. Therefore, 
\begin{align}
e^{i\alpha(f) z} =e^{i\sum_j n_j b z}.
\end{align}
All this exponentials are therefore periodic, with the longest period
determined by the simple roots, for which $e^{i\alpha(f) z}=e^{i b z}$,
which has period $T=2\pi/b$. But 
\begin{align}
e^{f T} \lambda_\alpha e^{-f T}=\lambda_\alpha
\end{align}
for any root $\alpha$ implies that $g=e^{f T}$ is in the center of the
group. Since the center of a simple compact group is finite, this means that 
$g^n=I$ is the unit matrix for some integer $n$. Now, since $\kappa$ is not
in the Cartan subalgebra, it follows from \cite{He}, Section VII, Theorem
8.5 (see also \cite{CaDaPiSc}) that $n=1$ or $n=2$ depending on the specific
representation. \flushright{$\Box$}
\end{proof}

This shows that there exist always at least a set of solutions with all non-
vanishing $c_{j}$, parametrized by a torus of phases. In \cite{AlCaCaCe} it
has been shown that indeed, for the case of $SU(N)$ in the smallest
irreducible representation, there is a further set of deformations that has
been called a moduli space. This is a very difficult task to be investigated
in general and we will not consider it here.

\subsection{On the physical meaning of the time-dependence in the ansatz}

It is worth to discuss the physical meaning of the time-dependent ansatz in
Eq. (\ref{Ansatz-FP}) for the Lasagna-type configurations as well as the one in
Eqs. (\ref{exponential}), (\ref{SA1}) and (\ref{SA2}) for the spaghetti-type configurations. First of all,
despite the time-dependence of the ansatz of the $U$ field, the
energy-momentum tensor is still stationary (so that it describes a static
distribution of energy and momentum). This approach is inspired by the usual
time-dependent ansatz that is used for Bosons stars \cite{Kaup,Liebling} (and generalize it to
arbitrary Lie group) in which the $U(1)$ charged scalar field depends on
time in such a way to avoid the Derrick theorem (see \cite{Derrick1964}).
Secondly, the peculiar time-dependence is chosen in order to simplify as
much as possible the field equations without loosing the topological charge
(as, until very recently, the Skyrme field equations have always been
considered a very hard nut to crack from the analytic viewpoint). Thirdly
(as it will be discussed in the next sections on the minimal coupling with
Maxwell), the present ansatz (both for lasagna and spaghetti type
configurations) produces $U(1)$ currents associated to the minimal coupling
with Maxwell with a manifest superconducting current. Indeed (as it is clear
from Eqs. (\ref{EulerMaxwellDecoupled}) and (\ref{SpaghettiMaxwellDecoupled})), the present $U(1)$ current always has the form 
\begin{equation}
J_{\mu }\ =\ \Omega (\partial _{\mu }\Phi -2A_{\mu })\ ,  \label{supercurr1}
\end{equation}%
where $\Omega $ depends on either the Lasagna or the spaghetti profiles (see
Eqs. (\ref{EulerMaxwellDecoupled}) and (\ref{SpaghettiMaxwellDecoupled})) while $\Phi $ is a field which is defined modulo $2\pi $.
Consequently, the following observations are important.

\textbf{1)} The current does not vanish even when the electromagnetic
potential vanishes ($A_{\mu}=0$).

\textbf{2)} Such a \textquotedblleft left over\textquotedblright\ 
\begin{equation}
J_{\mu }^{(0)}=\left. J_{\mu }\right\vert _{A_{\mu }=0}=\Omega \partial
_{\mu }\Phi \ ,  \label{supercurr111}
\end{equation}%
is maximal where $\Omega $ is maximal (and this corresponds to the local
maxima of the energy density: see Eqs. (\ref{EulerMaxwellDecoupled}) and (\ref{FFEnergyDensity}).

\textbf{3)} $J_{(0)\mu }$ \textit{cannot be turned off continuously}. One
can try to eliminate $J_{(0)\mu }$ either deforming the profiles appearing
in $\Omega $ integer multiples of $\pi $ (but this is impossible as such a
deformation would kill the topological charge as well) or deforming $\Phi $
to a constant (but also this deformation cannot be achieved for the same
reason). Moreover, as it is the case in \cite{wittenstrings}, $\Phi $ is
only defined modulo $2\pi $. Consequently, $J_{(0)\mu }$ defined in Eq. (\ref%
{supercurr111}) is \textit{a superconducting current supported by the
present gauged configurations}.

These are the three of the main physical reasons to choose this peculiar time-dependent ansatz. On the other hand, it is worth to emphasize that the peculiar time-dependence we have chosen (for the reasons explained above) prevents one from using the usual techniques (see, for instance, \cite{ANW}) to "quantize" the present topologically non-trivial solutions. In particular, the typical hypothesis of a static $SU(N)$-valued field $U$ is violated in our case (since, as it has been already emphasize, the requirement to have a static $T_{\mu\nu}$ which describes a stationary distribution of energy and momentum does not imply that U itself is static). Therefore, to estimate the "classical isospin" of the present configurations we will proceed in a different manner in the next sections.


\subsection{Energy and Baryon number}

The energy of these solutions can be easily computed by means of Proposition %
\ref{prop6} in Appendix B. We get 
\begin{align}
E&= L_rL_\gamma L_\phi \|\underline c\|^2 \frac K2\pi^3 \left[ 16 \frac {%
\sigma^2}{L^2_\phi} + \frac {\| v_{\epsilon}\|^2}{\|\underline c\|^2L_r^2}+%
\frac {\sigma^2 \lambda }{L^2_\phi L^2_r} \right. \cr & \left. + 4\frac {m^2 
}{L_\gamma^2} \left( 2 + \frac {\lambda }{8 L_r^2} + \frac {\lambda \sigma^2%
}{L^2_{\phi} \|\underline c\|^2} \left(\sum_{j=1}^{r}
\|\alpha_j\|^2|c_j|^4+\sum_{j<k} |c_j|^2 |c_k|^2 (\alpha_j|\alpha_k)
\left(2\epsilon_j\epsilon_k +(\alpha_j|\alpha_k)(1-\epsilon_j
\epsilon_{k})\right) \right) \right) \right],
\end{align}
with 
\begin{align}
\|v_\epsilon\|^2&=-\mathrm{Tr} v_\epsilon^2, \\
\|\underline c\|^2&=\sum_{j=1}^r |c_j|^2,
\end{align}
and where $\sigma$ depends on the representation and has to be chosen so
that the solution correctly covers a cycle when $m=1$ and $\phi$ varies from 
$0$ to $2\pi$. To specify it, let us investigate the Baryon number integral.
To this hand, let us look better at Proposition \ref{prop-dynkin}. The fact
that $n=1$ or $2$ obviously distinguishes the $SO(3)$-type solutions from
the $SU(2)$-type ones (see \cite{AlCaCaCe}), since only in the first case
the period remains invariant when passing to the adjoint representation. The
right ranges are then understood by considering the correct Euler
parameterizations for $SO(3)$ and for $SU(2)$. If we write it generically as 
\begin{align}
U(x,y,z)=e^{xT_3} e^{yT_1} e^{z T_3},
\end{align}
one finds that, if $T$ is the period of the exponential functions, in both
cases $z$ must vary in a period and $y$ in a range of $T/4$. The difference
is in $x$, which has to vary in a period for $SO(3)$ and half a period for $%
SU(2)$, for example, see Appendix C in \cite{AlCaCaCe}. If we set $x=\sigma
\phi$, $y=r$ and $z=m\gamma$ and we want to normalize the ranges of the
coordinates $\phi, r, \gamma$, so that all vary in $[0,2\pi]$, we see that
we always have to require 
\begin{align}
b=n, \qquad \sigma=\frac n2,
\end{align}
where $n$ is an integer. With these conventions we can state the following
proposition.

\begin{prop}
\label{prop3} The Baryonic topological charge is 
\begin{align}
B=\frac 1{24\pi^2} \int \varepsilon^{ijk} \mathrm{Tr} (\mathcal{L}_i\mathcal{%
L}_j\mathcal{L}_k ) \sqrt g\ dr d\gamma d\phi =mn \|\underline c\|^2,
\end{align}
where $\mathcal{L}_i=U^{-1} \partial_i U$.
\end{prop}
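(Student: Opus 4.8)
The plan is to evaluate the $3$-form integrand pointwise using the group structure of the ansatz and then do three elementary integrals; here $\varepsilon^{ijk}\sqrt g$ is read as the permutation symbol, so that $B$ agrees with the manifestly metric-independent expression (\ref{B}). First I would compute the left-invariant currents for $U$ as in (\ref{Ansatz-FP}) with $h(r)=ar\,v_\epsilon$ (Proposition \ref{propuno}) at fixed $t$. The constant left factor $e^{(t/L_\phi)\sigma\kappa}$ drops out because $\mathcal{L}_i=U^{-1}\partial_i U$ is left-invariant, and writing $g:=e^{-m\gamma\kappa}$ a short calculation gives
\begin{align}
\mathcal{L}_r=\mathrm{Ad}_{g}\!\left(a\,v_\epsilon\right),\qquad \mathcal{L}_\gamma=\mathrm{Ad}_{g}\!\left(m\kappa\right),\qquad \mathcal{L}_\phi=\mathrm{Ad}_{g}\!\left(\sigma\,e^{-ar\,\mathrm{ad}_{v_\epsilon}}\kappa\right),
\end{align}
where $\mathrm{Ad}_g X=gXg^{-1}$ (the sign carried by the $-\phi$ in the exponent only flips the orientation and is fixed by the convention making $B>0$). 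The key point is that all three currents are the image under the single element $g$ of $\gamma$-independent algebra elements; since $\mathrm{Tr}$ of a product is invariant under simultaneous conjugation, $g$ cancels in $\mathrm{Tr}(\mathcal{L}_i\mathcal{L}_j\mathcal{L}_k)$, the integrand is $\gamma$-independent, and the $\gamma$-integration contributes only a factor $2\pi$.

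The second step is to linearise $e^{-ar\,\mathrm{ad}_{v_\epsilon}}\kappa$. By the very construction of $v_\epsilon$ one has $\alpha_j(v_\epsilon)=\epsilon_j$, hence $[v_\epsilon,\lambda_j]=i\epsilon_j\lambda_j$ and, since $\epsilon_j^2=1$, the fundamental identity
\begin{align}
\mathrm{ad}_{v_\epsilon}^2\,\kappa=-\kappa.
\end{align}
Setting $W:=[v_\epsilon,\kappa]$ this yields the clean rotation $e^{-ar\,\mathrm{ad}_{v_\epsilon}}\kappa=\cos(ar)\,\kappa-\sin(ar)\,W$. I would then collapse the totally antisymmetrised trace to a single commutator, $\varepsilon^{ijk}\mathrm{Tr}(\mathcal{L}_i\mathcal{L}_j\mathcal{L}_k)=3\,\mathrm{Tr}\bigl(\mathcal{L}_r[\mathcal{L}_\gamma,\mathcal{L}_\phi]\bigr)$, and insert the conjugation-stripped currents. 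Because $[\kappa,\kappa]=0$, only the $\sin(ar)$ term survives, and the invariance identity $\mathrm{Tr}\bigl(v_\epsilon[\kappa,W]\bigr)=\mathrm{Tr}\bigl([v_\epsilon,\kappa]W\bigr)=\mathrm{Tr}(W^2)$ reduces the whole integrand to $-3am\sigma\sin(ar)\,\mathrm{Tr}(W^2)$.

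It then remains to pin down $\mathrm{Tr}(W^2)$ and integrate. Using $W=i\sum_j\epsilon_j(c_j\lambda_j-c_j^*\tilde\lambda_j)$ and the orthogonality of distinct root matrices under the trace form, only the $j=k$ diagonal survives, giving $\mathrm{Tr}(W^2)=2\sum_j\epsilon_j^2|c_j|^2\,\mathrm{Tr}(\lambda_j\tilde\lambda_j)$; with the uniform normalisation $\mathrm{Tr}(\lambda_j\tilde\lambda_j)=-1$ of Appendix \ref{app:roots} this is exactly $-2\|\underline c\|^2$, and notably every sign $\epsilon_j$ has disappeared, so the charge is common to all inequivalent embeddings. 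Finally $\int_0^{2\pi}\sin(ar)\,dr=4$ for $a=\tfrac12$ while the $\gamma$- and $\phi$-integrals give $(2\pi)^2$; collecting the prefactor $\tfrac1{24\pi^2}$ and the cycle-covering values $a=\tfrac12$, $\sigma=\tfrac n2$ established just before the statement yields $B=4am\sigma\|\underline c\|^2=2m\sigma\|\underline c\|^2=mn\|\underline c\|^2$.

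I expect the genuine work to be bookkeeping rather than conceptual: one must thread the several normalisation conventions so that the weighted sum $\sum_j|c_j|^2\,\mathrm{Tr}(\lambda_j\tilde\lambda_j)$ collapses to precisely $\|\underline c\|^2=\sum_j|c_j|^2$, which is exactly where the root-matrix normalisation of Appendix \ref{app:roots} and the constants $a=\tfrac12$, $\sigma=\tfrac n2$, $b=n$ enter, and where a stray factor would spoil the clean answer. The one structural input that does the real job is $\mathrm{ad}_{v_\epsilon}^2\kappa=-\kappa$: it holds for every sign pattern $\epsilon_j$ because $\epsilon_j^2=1$, which is precisely what makes the Baryon number independent of the inequivalent choice of embedding and equal to $mn\|\underline c\|^2$ in all cases.
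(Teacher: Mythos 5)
Your proposal is correct and follows essentially the same route as the paper, whose proof is omitted by deferral to Appendix F of \cite{AlCaCaCe}: a direct evaluation of the Maurer--Cartan currents with the conjugation by $e^{-m\gamma\kappa}$ cancelling inside the trace, the rotation identity $\mathrm{ad}_{v_\epsilon}^2\,\kappa=-\kappa$ (which is exactly $\alpha_j(v_\epsilon)=\epsilon_j$ with $\epsilon_j^2=1$), the collapse $\mathrm{Tr}\bigl(v_\epsilon[\kappa,W]\bigr)=\mathrm{Tr}(W^2)=-2\|\underline c\|^2$ (consistent with Eq. (\ref{trhkhk}) of Proposition \ref{prop6} at $a=1$), and the elementary integrals with $a=\tfrac 12$, $\sigma=\tfrac n2$ correctly yield $B=4am\sigma\|\underline c\|^2=mn\|\underline c\|^2$, with the $\epsilon_j$-independence of the answer emerging just as in the paper. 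The one loose end you flag, the overall sign coming from the $-\phi$ in the exponent, is legitimately absorbed into the orientation convention ($\varepsilon^{r\gamma\phi}=1$, with $m,n$ integers whose signs fix the orientation of the cycle), so it does not constitute a gap.
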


The proof is exactly the same as in Appendix F of \cite{AlCaCaCe}, so we
omit it. \newline
The energy per Baryon $g=E/B$ is therefore 
\begin{align}
g&= L_rL_\gamma L_\phi \frac K{2mn}\pi^3 \left[ 16 \frac {\sigma^2}{L^2_\phi}
+ \frac {\| v_{\epsilon}\|^2}{\|\underline c\|^2L_r^2}+\frac {\sigma^2
\lambda }{L^2_\phi L^2_r} \right. \cr & \left. + 4\frac {m^2 }{L_\gamma^2}
\left( 2 + \frac {\lambda }{8 L_r^2} + \frac {\lambda \sigma^2}{L^2_{\phi}
\|\underline c\|^2} \left(\sum_{j=1}^{r} \|\alpha_j\|^2|c_j|^4+\sum_{j<k}
|c_j|^2 |c_k|^2 (\alpha_j|\alpha_k) \left(2\epsilon_j\epsilon_k
+(\alpha_j|\alpha_k)(1-\epsilon_j \epsilon_{k})\right) \right) \right) %
\right].
\end{align}


\section{Spaghetti groups}

\label{SpaghettiGroup} Another kind of configurations is obtained by
starting from a different ansatz, which leads to Spaghetti like solutions.
Spaghetti can be parameterized by the following ansatz: 
\begin{align}
U(t,r,\theta,\phi)&= \exp (\chi (r) \tau_1),  \label{exponential}
\end{align}
where $\tau_1 = \vec n \cdot \vec T=n_1 T_1+n_2 T_2+n_3 T_3$ is defined by 
\begin{align}\label{SA1}
\vec T &=(T_1,T_2,T_3), \qquad \vec n =(\sin \Theta \cos \Phi, \sin \Theta
\sin \Phi, \cos \Theta) \\\label{SA2}
\Theta& =q\theta ,\quad \Phi =p\left( \frac{t}{L_\phi}-\phi \right) , \quad
q=2v+1 ,\quad p, v\in \mathbb{N} , \quad p\neq 0\ .
\end{align}
In the ansatz, $T_i$ are matrices of a given representation of the Lie
algebra of $G$ and are required to define a three dimensional subalgebra
that we can choose to normalize so that 
\begin{align}
[T_j,T_k]=\varepsilon_{jkm} T_m,
\end{align}
and satisfy 
\begin{align}
\mathrm{Tr} (T_jT_k)=-2 I_{G,\rho} \delta_{jk},
\end{align}
where $I_{G,\rho}$ is the Dynkin index of $su(2)$ in $G$ (see \cite{Dy-57}),
that is the coefficient relating the trace product in the representation $%
\rho$ of $Lie(G)$ to the Killing product of $su(2)$. We also define 
\begin{align}
\tau_2&= \partial_\Theta \tau_1, \\
\tau_3&= \frac 1{\sin\Theta}\partial_\Phi \tau_1.
\end{align}
Together with $\tau_1$, they satisfy 
\begin{align}
[\tau_j,\tau_k]=\epsilon_{jkm} \tau_m.
\end{align}
With these rules, we get for $\mathcal{L}_\mu =U^{-1} \partial_\mu U$: 
\begin{align}
\mathcal{L}_r =& \tau_1\ \chi^{\prime }(r).  \label{MLr}
\end{align}
For the other terms, set $\alpha=\Theta, \Phi$ and using 
\begin{align}
U^{-1} \partial_\alpha U&=\chi \int_0^1 e^{-\sigma \chi \tau_1}
\partial_\alpha \tau_1 e^{\sigma \chi \tau_1}, \\
e^{-\sigma \chi \tau_1} \tau_2 e^{\sigma \chi \tau_1}&=\cos (\sigma \chi)
\tau_2 -\sin (\sigma \chi) \tau_3, \\
e^{-\sigma \chi \tau_1} \tau_3 e^{\sigma \chi \tau_1}&=\sin (\sigma \chi)
\tau_2 +\cos (\sigma \chi) \tau_3,
\end{align}
we get 
\begin{align}
\mathcal{L}_\Theta=& \sin \chi \ \tau_2 - (1-\cos \chi) \ \tau_3, \\
\mathcal{L}_\Phi=& \sin \Theta (\sin \chi \ \tau_3 + (1-\cos \chi) \ \tau_2),
\end{align}
and 
\begin{align}
\mathcal{L}_t&=\frac p{L} \mathcal{L}_\Phi, \\
\mathcal{L}_\theta&=q \mathcal{L}_\Theta,  \label{MLthet} \\
\mathcal{L}_\phi&=-p \mathcal{L}_\Phi.
\end{align}
This shows that the expression of the $\mathcal{L}_\mu$ is universal (depend
only on the algebra of the $\tau_j$), so the field equations are always the
same for any choice of the group. These are 
\begin{align}
4\chi^{\prime \prime }(r) \left(\lambda q^2 \sin^2 \left(\frac \chi2\right)+
L_\theta^2\right)- q^2 \sin \chi \left(4L_r^2 - \lambda
\chi^{\prime2}\right)= 0 \ .  \label{SkyrmeEqsSpaghetti}
\end{align}
What is expected to change is just the topological charge and the energy.
Given this universality property, we see immediately that, for any given
group $G$, these kind of solutions are classified by all possible ways of
finding a three dimensional simple subalgebra of the lie algebra $\mathfrak{g%
}$. Luckily, we don't need to tackle such a program, since has already been
solved by E. B. Dynkin in \cite{Dy-57}, chapter III. This work as follows.

\ 

First, it is convenient to complexify the algebra, recombine and normalize
the generators $f, e_+, e_-$ of the subgroup so that 
\begin{align}
[e_+,e_-]=-if, \qquad [f,e_\pm]=\pm2ie_\pm.
\end{align}
Each complex three dimensional simple algebra is isomorphic to this.
However, we must consider as equivalent only the ones which are isomorphic
through an automorphism of the group. Let $\alpha_j$, $j=1,\ldots,r$ be
simple roots defined from a cartan subalgebra containing $f$. Then, it
results that $(\alpha_j|f)$ must be integer numbers that can assume only the
values 0,1,2. The set of numbers $d_j=\alpha_j(f)$ are called the Dynkin
characteristic of the subgroup. The main result of \cite{Dy-57} is that the
three dimensional simple subalgebras $A$ are in one to one correspondence
with the characteristics and one can indeed classify the characteristics. A
subalgebra is said to be regular if its roots are indeed roots of $\mathfrak{%
g}$. The subalgebra $A$ is said to be integral if the projection of the
roots of $\mathfrak{g}$ along the direction of the roots of $A$ are integer
multiples of the simple root $\alpha_A$ of $A$. Since $\alpha_A(f)=2$, we
see that the dual of $\alpha_A$ in the Cartan subalgebra $H$ is 
\begin{align}
h_{\alpha_A}=\frac 2{(f|f)} f.
\end{align}
From this it follows immediately that $A$ is integral if and only if all the
numbers of the Dynkin characteristic $\chi_A=(d_1,\ldots,d_r)$ of $A$ are
even (so are 0 and 2). All inequivalent characteristics for the exceptional
Lie groups are listed in \cite{Dy-57}. Furthermore, given such a
characteristic $\chi=(d_1,\ldots,d_r)$, there it is explained how to
construct explicitly the associated subalgebra. First, if $J_j$ is the dual
of $\alpha_j$ in $H$, write 
\begin{align}
f=\sum_{j=1}^r p_j J_j,
\end{align}
and choose $p_j$ so that $\alpha_j(f)=d_j$. This gives 
\begin{align}
d_j=\sum_{k=1}^r p_k (\alpha_k|\alpha_j)=\sum_{k=1}^r \frac {\|\alpha_k\|^2}%
2 p_k C^G_{kj}.  \label{323}
\end{align}
From this we get 
\begin{align}
p_k=\sum_{j=1}^r d_j (C^G)^{-1}_{jk} \frac {2}{\|\alpha_k\|^2}.
\end{align}
As usual, $C^G$ is the Cartan matrix. In general, the construction of the
remaining generators is non-trivial. To do it, one has to consider the
subset of the root system $\Sigma$ defined by 
\begin{align}
\Sigma_{\chi_G}=\{\alpha\in \Sigma| \alpha(f)=2\}.
\end{align}
Then, all roots are positive. If $\lambda_\alpha$ are the corresponding
eigenmatrices (normalized so that $\mathrm{Tr} (\tilde \lambda_\alpha
\lambda_\alpha)=-1$), one then has to look for real coefficients $k_\alpha$
such that, setting 
\begin{align}
e_+&=\sum_{\alpha\in \Sigma_{\chi_G}} k_\alpha \lambda_\alpha, \\
e_-&=\sum_{\alpha\in \Sigma_{\chi_G}} k_\alpha \tilde \lambda_\alpha,
\end{align}
then $[e_+,e_-]=-if$. If $\chi_G$ is an admissible characteristic, then in
general there are infinite solutions, but we know that are all equivalent so
it is sufficient to choose one, all the other ones being related to it by
conjugation with elements of the group. Notice that the resulting equations
are in general 
\begin{align}
\sum_{\alpha\neq \beta\in \Sigma_{\chi_G}} k_\alpha k_\beta
[\lambda_\alpha,\tilde\lambda_\beta]&=0,  \label{eqmista} \\
\sum _{\beta\in \Sigma_{\chi_G}} k_\beta^2 n_{\beta,j} &=p_j,  \label{eqdiag}
\end{align}
where we used that any positive root can be written as 
\begin{align}
\beta=\sum_{j=1}^r n_{\beta,j} \alpha_j,
\end{align}
with $n_{\beta,j}$ non-negative integers, and 
\begin{align}
[\tilde \lambda_\beta,\lambda_\beta]=i\sum_{j=1}^r n_{\beta,j} J_j.
\end{align}
In the particular case when $d_j=2$ for all $j$, $\Sigma_{\chi_G}$ consists
of all simple roots and the solution is easily obtained as 
\begin{align}
e_+=\sum_{j=1}^r \sqrt {p_j} \lambda_j, \qquad e_-=\sum_{j=1}^r \sqrt {p_j}
\tilde \lambda_j.
\end{align}
Finally, we can go back to our real case by taking 
\begin{align}
T_1=\frac 1{2} (e_++e_-), \quad\ T_2=\frac 1{2i} (e_+-e_-), \quad\ T_3=\frac
12 f.
\end{align}
Notice that this is the same construction we used to get a periodic
generator $\kappa$ for the Lasagna configurations. This also shows that
indeed we can construct a $\kappa$ matrix for each three dimensional
subalgebra.


\subsection{Energy density and Baryon charge}

Let us determine the energy density and the Baryon charge. The energy
density is defined by the $T_{tt}$ component of the energy-momentum tensor 
\begin{align}
T_{\mu\nu}&=-\frac{K}{2}\mbox{Tr}(T_iT_j)\left[\mathcal{L}_{\mu}^i\mathcal{L}%
_{\nu}^j-\frac{1}{2}g_{\mu\nu}{\mathcal{L}^{\rho}}^i{\mathcal{L}_{\rho}}^j+%
\frac{\lambda}{4}\left(g^{\rho\sigma}G_{\mu\rho}^iG_{\nu\sigma}^j-\frac{1}{4}
g_{\mu\nu}{G^{\rho\sigma}}^iG_{\rho\sigma}^j\right)\right].
\end{align}
A direct computation gives 
\begin{align}
T_{tt}=2I_{G,\rho}\frac{Kp}{4L_\phi^2 L_r L_\theta}\left[\rho_0+2\sin^2(q%
\theta)\rho_1\right],
\end{align}
with 
\begin{align}
\rho_0 = & \frac{L_\phi^2}{p}\biggl[ 4L_r^2q^2 \sin^2\left(\frac{\chi}{2}%
\right) +\biggl(L_\theta^2+q^2 \lambda \sin^2\left(\frac{\chi}{2}\right)%
\biggl)\chi^{\prime 2 } \biggl] , \\
\rho_1 = & p\sin^2\left(\frac{\chi}{2}\right) \biggl[ 4L_r^2\biggl( %
L_\theta^2+q^2\lambda \sin^2\left(\frac{\chi}{2}\right) \biggl) +L_\theta^2
\lambda \chi^{\prime 2 }\biggl] \ .
\end{align}
$I_{G,\rho}$ is the Dynkin index and can be computed as follows. First,
observe that a generic root has the form 
\begin{align}
\beta(f) = \sum_{j=1}^{r}n_{\beta,j}\alpha_j(f) =
\sum_{j=1}^{r}n_{\beta,j}d_j.
\end{align}
By using (\ref{323}), (\ref{eqdiag}) and the definition of $\Sigma_{\chi_G}$%
, we get 
\begin{align}
\mbox{Tr}(ff) &= -\sum_{j=1}^{r}\sum_{k=1}^{r}p_jp_k(\alpha_k|\alpha_j) =
-\sum_{j=1}^{r}\sum_{k=1}^{r}\frac {\|\alpha_k\|^2}2p_jp_kC^G_{kj}  \notag \\
&= -\sum_{j=1}^{r}p_jd_j=-\sum_{j=1}^{r}\sum _{\beta\in \Sigma_{\chi_G}}
k_\beta^2 n_{\beta,j}d_j = -\sum _{\beta\in \Sigma_{\chi_G}} k_\beta^2
\beta(f) = -2\sum _{\beta\in \Sigma_{\chi_G}} k_\beta^2 \\
\mbox{Tr}(e_+e_-) &= -\sum _{\beta\in \Sigma_{\chi_G}}k_{\beta}^2
\end{align}
Therefore, 
\begin{align}
\mbox{Tr}(T_iT_j) = -\frac{\delta_{ij}}{2}\sum _{\beta\in
\Sigma_{\chi_G}}k_{\beta}^2,
\end{align}
and so 
\begin{align}
I_{G,\rho} = \sum _{\beta\in \Sigma_{\chi_G}}\frac{k_{\beta}^2}{4}.
\end{align}
The Baryon charge can be written as 
\begin{align}
B=\frac{1}{24 \pi^2}\int\rho_B \sqrt {g} drd\theta d\phi,
\end{align}
in which $\rho_B$ is the Baryonic density charge 
\begin{align}
\rho_B = \epsilon^{ijk}\mbox{Tr}(\mathcal{L}_i \mathcal{L}_j \mathcal{L}_k).
\end{align}
Recalling the ranges (\ref{ranges-spag}) for the coordinates and that $q=2v+1
$ and $\chi(0)=0$, we get 
\begin{align}
B=\frac{2p}{\pi} I_{G,\rho}\chi(2\pi).
\end{align}
The boundary conditions on $\chi(r)$ depend on the periodicity of $\tau_1$,
which corresponds to the periodicity of $T_3$ ($T_3=\tau_1(\Theta = \pi)$).
We must have $\chi(2\pi)=n\pi T_{G,\rho}$, so that 
\begin{align}
B=2np I_{G,\rho} T_{G,\rho},
\end{align}
where $T_{G,\rho} = 1$ for representations with even dimension and $%
T_{G,\rho} = 2$ for representations with odd dimension.


\subsection{On the "classical" isospin of these configurations}

We have shown in previous sections that the inclusion of a suitable
time-dependence in the ans\"{a}tze, both for lasagna and spaghetti phases
(see Eqs. \eqref{Ansatz-FP} and \eqref{exponential}), is one of the key ingredients
that allows the field equations to be considerably reduced, leading to a
single integrable ODE equation for the profiles. This time-dependence offers
a nice short-cut to estimate the \textquotedblleft classical Isospin" of the
configurations analyzed in the present paper (a relevant question is whether
or not the classical Isospin is large when the Baryonic charge is large). In
particular, one may evaluate the \textquotedblleft cost" of removing such
time-dependence. Such a cost is related to the internal Isospin symmetry of
the theory. This is like trying to estimate the angular momentum of a
spinning top by evaluating the cost to make the spinning top to stop
spinning. In the present case, the time-dependence of the configurations can
be removed from the ans\"{a}tze by introducing a Isospin chemical potential;
then the isospin chemical potential needed to remove such time-dependence is
a measure of the classical Isospin of the present configurations. We will
see how this works for the simplest $SU(2)$ case, where the generators are $%
T_{j}=i\sigma _{j}$, being $\sigma _{j}$ the Pauli matrices (general group $G$
behave in a similar way).

As it is well known, the effects of the Isospin chemical potential can be
taken into account by using the following covariant derivative 
\begin{equation}
\nabla _{\mu }\rightarrow D_{\mu }=\nabla _{\mu }+\bar{\mu}[T_{3},\cdot
]\delta _{\mu 0}\ .  \label{Dmu}
\end{equation}%
Now, we will use exactly the same ansatz as before in the spaghetti $SU(2)$
case, \textit{but this time without the time dependence}: 
\begin{gather*}
U=e^{\chi (x)\,(\vec{n}\cdot \vec{T})}\ , \\
\vec{n}=(\sin {\Theta }\sin \Phi ,\sin \Theta \cos \Phi ,\cos \Theta )\ ,
\end{gather*}%
where 
\begin{gather*}
\chi =\chi \left( r\right) \ ,\quad \Theta =q\theta \ ,\quad \Phi =p\phi \ ,
\\
q=\frac{1}{2}(2v+1)\ ,\quad p,v\in \mathbb{N}\ ,\quad p\neq 0\ ,
\end{gather*}%
together with the introduction of the Isospin chemical potential in Eq. %
\eqref{Dmu} in the theory. One can check directly that the complete set of
Skyrme equations can still be reduced to the same ODE for the profile $\chi
\left( r\right) $ in the case of the spaghetti phase in Eq. \eqref{SkyrmeEqsSpaghetti} 
\textit{only provided the Isospin chemical potential for the spaghetti phase
is given by} 
\begin{equation}
\bar{\mu}_{\text{S}}=\frac{p}{L_{\phi }}\ .
\end{equation}%
In other word, the cost to eliminate the time-dependence is to introduce an
Isospin chemical potential which is large when the Baryonic charge of the
spaghetti is large. Something similar happens in the case of the lasagna
phase. Let us consider the ansatz in terms of the Euler angles \textit{but
without the time-dependence} for the $SU(2)$ case: 
\begin{equation*}
U_{L}=e^{\Phi T_{3}}e^{HT_{2}}e^{\Theta T_{3}}\ ,
\end{equation*}%
where 
\begin{equation*}
\Phi =p\phi \ ,\quad H=h(r)\ ,\quad \Theta =m\theta \ ,\qquad p,m\in \mathbb{%
N}\ .
\end{equation*}%
Let us introduce the Isospin chemical potential, demanding that the profile $%
h(r)$ should be the same as before. Then, as in the spaghetti case, the
Skyrme field equations with chemical potential can still be satisfied by the
very same profile $h(r)$ \textit{provided we fix the Isospin chemical
potential as} 
\begin{equation}
\bar{\mu}_{\text{L}}=\frac{pm}{(p^{2}L_{\phi }^{2}+m^{2}L_{\theta }^{2})^{%
\frac{1}{2}}}\ .
\end{equation}%
At this point it is important to remember that in the $SU(2)$ case the
lasagna and spaghetti type solutions have the following values for the
topological charges 
\begin{equation*}
B_{\text{S}}=np\ ,\qquad B_{\text{L}}=mp\ ,
\end{equation*}%
see \cite{56c} and \cite{crystal1} for more details. 
These arguments show that the
\textquotedblleft classical Isospin" of configurations with high Baryonic
charge is large. Finally, it is important to point out that the large
Isospin case corresponds to either neutron rich or proton rich matter and
due to Coulomb effects (not taken into account in this model), the neutron
rich solution is preferred. This fact is very convenient as far as the
physics of neutron stars is concerned.


\section{Examples: exceptional pasta}

As an example we can consider the ``basic exceptional Skyrmions'', that are
solutions in lowest dimensional representation when $G$ is one of the
exceptional Lie groups. There are five cases that we now recall according to
the dimension of the group. For each of them we know all inequivalent three
dimensional subalgebras, each one determined by the Dynkin characteristic $%
\chi_I (d_1,\ldots,d_r)$, where $I$ is the Dynkin index and $d_j$ are the
coefficients of the characteristic, ordered as the simple root listed in
Appendix \ref{app:roots}.

\ 

The smallest exceptional group is $G_2$, a 14 dimensional group of rank 2
whose smallest irrep is 7 dimensional. There are four different three
dimensional subalgebras. It contains four 3D subalgebras, having
characteristics 
\begin{align*}
\chi_1=(0,1), \quad\ \chi_3=(1,0), \quad\ \chi_4=(0,2), \quad\
\chi_{28}=(2,2).
\end{align*}
$\chi_1$ and $\chi_2$ are regular but not integral, while $\chi_4$ and $%
\chi_{28}$ are not regular but are integral. In particular, the minimal
regular subalgebra containing $\chi_4$ is $\chi_1\oplus \chi_3$, while $%
\chi_{28}$ is maximal so that the smallest regular subalgebra containing it
is $G_2$ itself.

\ 

The next group is $F_4$, a 52 dimensional group of rank 4. Its smallest
irrep is 26 dimensional. It contains 15 $su(2)$ type subalgebras, whose
characteristics are 
\begin{align*}
\chi_1&=(1,0,0,0); \quad\ \chi_2=(0,0,0,1); \quad\ \chi_3=(0,1,0,0); \quad\
\chi_4=(2,0,0,0); \quad\ \chi_6=(0,0,1,0); \\
\chi_8&=(0,0,0,2); \quad\ \chi_9=(0,1,0,1); \quad\ \chi_{10}=(2,0,0,1);
\quad\ \chi_{11}=(1,0,1,0); \quad\ \chi_{12}=(0,2,0,0); \\
\chi_{28}&=(2,2,0,0); \quad\ \chi_{35}=(1,0,1,2); \quad\
\chi_{36}=(0,2,0,2); \quad\ \chi_{60}=(2,2,0,2); \quad\ \chi_{156}=(2,2,2,2).
\end{align*}
The regular subalgebras are $\chi_1$ and $\chi_2$, which are not integral.
The integral subalgebras are $\chi_4,$ $\chi_8,$ $\chi_{12},$ $\chi_{28},$ $%
\chi_{36},$ $\chi_{60}$ and $\chi_{156}$. In particular, $\chi_{156}$ is
maximal.

\ 

The third group is $E_6$, a 78 dimensional group of rank 6. Its smallest
irrep is 27 dimensional. It contains 20 $su(2)$ type subalgebras, whose
characteristics are 
\begin{align*}
\chi_1&=(0,1,0,0,0,0); \quad\ \chi_2=(1,0,0,0,0,1); \quad\
\chi_3=(0,0,0,1,0,0); \quad\ \chi_4=(0,2,0,0,0,0); \\
\chi_5&=(1,1,0,0,0,1);\quad\ \chi_6=(0,0,1,0,1,0); \quad\
\chi_8=(2,0,0,0,0,2); \quad\ \chi_{9}=(1,0,0,1,0,1); \\
\chi_{10}&=(1,2,0,0,0,1); \quad\ \chi_{11}=(0,1,1,0,1,0);\quad\
\chi_{12}=(0,0,0,2,0,0); \quad\ \chi_{20}=(2,2,0,0,0,2); \\
\chi_{21}&=(1,1,1,0,1,1); \quad\ \chi_{28}=(0,2,0,2,0,0); \quad\
\chi_{30}=(1,2,1,0,1,1)\quad\ \chi_{35}=(2,1,1,0,1,2); \\
\chi_{36}&=(2,0,0,2,0,2); \quad\ \chi_{60}=(2,2,0,2,0,2); \quad\
\chi_{84}=(2,2,2,0,2,2)\quad\ \chi_{156}=(2,2,2,2,2,2).
\end{align*}
The only regular subalgebra is $\chi_1$, which is not integral. The integral
subalgebras are $\chi_4, \chi_8, \chi_{12}, \chi_{20}, \chi_{28}, \chi_{36},
\chi_{60}$ and $\chi_{156}$. The last one is also maximal.

\ 

The third group is $E_7$, a 133 dimensional group of rank 7. Its smallest
irrep is 58 dimensional. It contains 44 $su(2)$ type subalgebras, whose
characteristics are 
\begin{align*}
\chi_1&=(1,0,0,0,0,0,0); \quad\ \chi_2=(0,0,0,0,0,1,0); \quad\
\chi_{3^{\prime }}=(0,0,1,0,0,0,0); \quad\ \chi_{3^{\prime \prime
}}=(0,0,0,0,0,0,2); \\
\chi_{4^{\prime }}&=(2,0,0,0,0,0,0);\quad\ \chi_{4^{\prime \prime
}}=(0,1,0,0,0,0,1); \quad\ \chi_5=(1,0,0,0,0,1,0); \quad\
\chi_{6}=(0,0,0,1,0,0,0); \\
\chi_{7}&=(0,2,0,0,0,0,0); \quad\ \chi_{8}=(0,0,0,0,0,2,0);\quad\
\chi_{9}=(0,0,1,0,0,1,0); \quad\ \chi_{10}=(2,0,0,0,0,1,0); \\
\chi_{11^{\prime }}&=(1,0,0,1,0,0,0); \quad\ \chi_{11^{\prime \prime
}}=(2,0,0,0,0,0,2); \quad\ \chi_{12^{\prime }}=(0,0,2,0,0,0,0)\quad\
\chi_{12^{\prime \prime }}=(1,0,0,0,1,0,1); \\
\chi_{13}&=(0,1,1,0,0,0,1); \quad\ \chi_{14}=(0,0,0,1,0,1,0); \quad\
\chi_{15}=(0,0,0,0,2,0,0)\quad\ \chi_{20}=(2,0,0,0,0,2,0); \\
\chi_{21}&=(1,0,0,1,0,1,0); \quad\ \chi_{24}=(0,0,0,2,0,0,0); \quad\
\chi_{28}=(2,0,2,0,0,0,0); \quad\ \chi_{29}=(2,1,1,0,0,0,1); \\
\chi_{30}&=(2,0,0,1,0,1,0);\quad\ \chi_{31}=(2,0,0,0,2,0,0); \quad\
\chi_{35^{\prime }}=(1,0,0,1,0,2,0); \quad\ \chi_{35^{\prime \prime
}}=(2,0,0,0,0,2,2); \\
\chi_{36^{\prime }}&=(0,0,2,0,0,2,0); \quad\ \chi_{36^{\prime \prime
}}=(1,0,0,1,0,1,2);\quad\ \chi_{38}=(0,1,1,0,1,0,2); \quad\
\chi_{39}=(0,0,0,2,0,0,2); \\
\chi_{56}&=(0,0,0,2,0,2,0); \quad\ \chi_{60}=(0,0,2,0,0,2,0); \quad\
\chi_{61}=(2,1,1,0,1,1,0)\quad\ \chi_{62}=(2,1,1,0,1,0,2); \\
\chi_{63}&=(2,0,0,2,0,0,2); \quad\ \chi_{84}=(2,0,0,2,0,2,0); \quad\
\chi_{110}=(2,1,1,0,1,2,2)\quad\ \chi_{111}=(2,0,0,2,0,2,2); \\
\chi_{156}&=(2,0,2,2,0,2,0); \quad\ \chi_{159}=(2,2,2,0,2,0,2); \quad\
\chi_{231}=(2,2,2,0,2,2,2)\quad\ \chi_{399}=(2,2,2,2,2,2,2).
\end{align*}
The only regular subalgebra is $\chi_1$, which is not integral. The integral
subalgebras are $\chi_{3^{\prime \prime }}, \chi_{4^{\prime }}, \chi_{7},
\chi_{8}, \chi_{11^{\prime \prime }}, \chi_{12^{\prime }},$ $\chi_{15},
\chi_{20}, \chi_{24}, \chi_{28}, \chi_{31}, \chi_{35^{\prime \prime }},
\chi_{36^{\prime }}, \chi_{39}, \chi_{56}, \chi_{60}, \chi_{63}, \chi_{84},
\chi_{111}, \chi_{156}, \chi_{159}, \chi_{231}$ and $\chi_{399}$. The last
one is also maximal.

\ 

The third group is $E_8$, a 248 dimensional group of rank 8. Its smallest
irrep is 248 dimensional. It contains 70 $su(2)$ type subalgebras, whose
characteristics are {\small 
\begin{align*}
\chi_{1}&=(0,0,0,0,0,0,0,1); \quad\ \chi_{2}=(1,0,0,0,0,0,0,0); \quad\
\chi_{3}=(0,0,0,0,0,0,1,0); \quad\ \chi_{4^{\prime }}=(0,1,0,0,0,0,0,0); \\
\chi_{4^{\prime \prime }}&=(0,0,0,0,0,0,0,2);\quad\
\chi_{5}=(1,0,0,0,0,0,0,1); \quad\ \chi_{6}=(0,0,0,0,0,1,0,0); \quad\
\chi_{7}=(0,0,1,0,0,0,0,0); \\
\chi_{8}&=(2,0,0,0,0,0,0,0); \quad\ \chi_{9}=(1,0,0,0,0,0,1,0);\quad\
\chi_{10^{\prime }}=(2,0,0,0,0,0,0,1); \quad\ \chi_{10^{\prime \prime
}}=(0,0,0,0,1,0,0,0); \\
\chi_{11}&=(0,0,0,0,0,1,0,1); \quad\ \chi_{12^{\prime }}=(0,0,0,0,0,0,2,0);
\quad\ \chi_{12^{\prime \prime }}=(0,0,1,0,0,0,0,1)\quad\
\chi_{13}=(0,1,0,0,0,0,1,0); \\
\chi_{14}&=(1,0,0,0,0,1,0,0); \quad\ \chi_{15}=(0,0,0,1,0,0,0,0); \quad\
\chi_{16}=(0,2,0,0,0,0,0,0)\quad\ \chi_{20^{\prime }}=(1,0,0,0,1,0,0,0); \\
\chi_{20^{\prime \prime }}&=(2,0,0,0,0,0,0,2); \quad\
\chi_{21}=(1,0,0,0,0,1,0,1); \quad\ \chi_{22}=(0,1,0,0,0,0,0,1); \quad\
\chi_{24}=(0,0,0,0,0,2,0,0); \\
\chi_{25}&=(0,0,1,0,0,1,0,0);\quad\ \chi_{28}=(0,0,0,0,0,0,2,2); \quad\
\chi_{29}=(0,1,0,0,0,0,1,2); \quad\ \chi_{30^{\prime }}=(1,0,0,0,0,1,0,2); \\
\chi_{30^{\prime \prime }}&=(0,0,0,1,0,0,1,0); \quad\
\chi_{31}=(0,0,0,1,0,0,0,2);\quad\ \chi_{32}=(0,2,0,0,0,0,0,2); \quad\
\chi_{34}=(0,0,1,0,0,1,0,1); \\
\chi_{35}&=(2,0,0,0,0,1,0,1); \quad\ \chi_{36^{\prime }}=(1,0,0,1,0,0,0,1);
\quad\ \chi_{36^{\prime \prime }}=(2,0,0,0,0,0,2,0)\quad\
\chi_{37}=(1,0,0,0,1,0,1,0); \\
\chi_{38}&=(0,1,1,0,0,0,1,0); \quad\ \chi_{39}=(0,0,0,1,0,1,0,0); \quad\
\chi_{40}=(0,0,0,0,2,0,0,0)\quad\ \chi_{56}=(2,0,0,0,0,2,0,0); \\
\chi_{57}&=(1,0,0,1,0,1,0,0); \quad\ \chi_{60}=(2,0,0,0,0,0,2,2); \quad\
\chi_{61}=(1,0,0,0,1,0,1,2); \quad\ \chi_{62}=(0,1,1,0,0,0,1,2); \\
\chi_{63}&=(0,0,0,1,0,1,0,2);\quad\ \chi_{64}=(0,0,0,0,2,0,0,2); \quad\
\chi_{70}=(1,0,0,1,0,1,0,1); \quad\ \chi_{84^{\prime }}=(1,0,0,1,0,1,1,0); \\
\chi_{84^{\prime \prime }}&=(2,0,0,0,0,2,0,2); \quad\
\chi_{85}=(1,0,0,1,0,1,0,2);\quad\ \chi_{88}=(0,0,0,2,0,0,0,2); \quad\
\chi_{110}=(2,1,1,0,0,0,1,2); \\
\chi_{111}&=(2,0,0,1,0,1,0,2); \quad\ \chi_{112}=(2,0,0,0,2,0,0,2); \quad\
\chi_{120}=(0,0,0,2,0,0,2,0)\quad\ \chi_{156}=(2,0,0,0,0,2,2,2); \\
\chi_{157}&=(1,0,0,1,0,1,2,2); \quad\ \chi_{159}=(0,1,1,0,1,0,2,2); \quad\
\chi_{160}=(0,0,0,2,0,0,2,2)\quad\ \chi_{166}=(1,0,1,1,0,0,2,2); \\
\chi_{182}&=(2,1,1,0,1,1,0,1); \quad\ \chi_{184}=(2,0,0,2,0,0,2,0); \quad\
\chi_{231}=(2,1,1,0,1,0,2,2); \quad\ \chi_{232}=(2,0,0,2,0,0,2,2); \\
\chi_{280}&=(2,0,0,2,0,2,0,2);\quad\ \chi_{399}=(2,1,1,0,1,1,2,2); \quad\
\chi_{400}=(2,0,0,2,0,2,2,2); \quad\ \chi_{520}=(2,2,2,0,2,0,2,2); \\
\chi_{760}&=(2,2,2,0,2,2,2,2); \quad\ \chi_{1240}=(2,2,2,2,2,2,2,2).
\end{align*}
} The only regular subalgebra is $\chi_1$, which is not integral. The
integral subalgebras are $\chi_{4^{\prime \prime }}, \chi_{8},
\chi_{12^{\prime }}, \chi_{16}, \chi_{20^{\prime \prime }}, \chi_{24},$ $%
\chi_{28}, \chi_{32}, \chi_{36^{\prime \prime }}, \chi_{40}, \chi_{56},
\chi_{60}, \chi_{64}, \chi_{84^{\prime \prime }}, \chi_{112}, \chi_{120},
\chi_{156}, \chi_{160}, \chi_{184}, \chi_{232}, \chi_{280}, \chi_{400},
\chi_{520}, \chi_{760}$ and $\chi_{1240}$. The last one is also maximal.

\ 

As an example, we will finally construct the explicit solutions for $G_2$,
which we can call ``$G_2$ exceptional pasta''. 

\subsection{$G_2$ exceptional Spaghetti}

Here we consider explicit solutions case by case. Our deduction will be
quite general and independent on the specific realization in terms of
matrices, but just on the chosen representation. Nevertheless, for sake of
completeness, in Appendix \ref{app:G2} we will provide an explicit matrix
realization of the subalgebras in the lowest fundamental representation. 

\subsubsection{$\protect\chi_1$-Spaghetti}

Since $\chi_1=(0,1)$, we get $p\equiv (p_1,p_2)=(6,4)$. The only root
satisfying $\alpha(f)=2$ is $\alpha_6=3\alpha_1+2\alpha_2$. Therefore, $%
e_+=k \lambda_6$ and equation (\ref{eqmista}) is trivial, while (\ref{eqdiag}%
) gives 
\begin{align}
k=\sqrt 2.
\end{align}
Therefore, the Spaghetti solution is determined by the matrices 
\begin{align}
T^{(1)}_1&=\frac {\sqrt 2}{2} (\lambda_6+\tilde \lambda_6), \\
T^{(1)}_2&=\frac {\sqrt 2}{2i} (\lambda_6-\tilde \lambda_6), \\
T^{(1)}_3&=3J_1+2J_2.
\end{align}
Notice that up to now this is independent on the choice of the irrep. The
choice of the representation allows to further specify the type of solution.
The fundamental representations of $G_2$ are the $\pmb 7$, with maximal
weight $\alpha_4$, whose seven weight are on the small hexagon given by $%
\pm\alpha_a$, $a=1,3,4$, plus one vanishing weight, and the $\pmb {14}$
which is the adjoint representation, with maximal weight $\alpha_6$ and with
all roots as weight. The action of $\pm\alpha_6$ on the small hexagon shows
that if we choose to work with the $\pmb 7$, then $\mathbb{R}^7$ decomposes
as $\pmb 2\oplus \pmb 2\oplus \pmb 1\oplus \pmb 1\oplus \pmb 1$ under $%
\chi_1 $, so that is an $SU(2)$ type solution. \newline
For $\pmb {14}$, we see that the action of $\pm\alpha_6$ decomposes $\mathbb{%
R}^{14}$ into $\pmb 3\oplus \pmb 2\oplus \pmb 2 \oplus \pmb 2\oplus \pmb 2
\oplus \pmb 1\oplus \pmb 1\oplus \pmb 1$, which is again an $SU(2)$ type
solution.

\subsubsection{$\protect\chi_3$-Spaghetti}

Since $\chi_3=(1,0)$, we get $p\equiv (p_1,p_2)=(12,6)$. The only root
satisfying $\alpha(f)=2$ is $\alpha_4=2\alpha_1+\alpha_2$. Therefore, $e_+=k
\lambda_4$ and equation (\ref{eqmista}) is trivial, while (\ref{eqdiag})
gives 
\begin{align}
k=\sqrt 6.
\end{align}
Therefore, the Spaghetti solution is determined by the matrices 
\begin{align}
T^{(3)}_1&=\frac {\sqrt 6}{2} (\lambda_4+\tilde \lambda_4), \\
T^{(3)}_2&=\frac {\sqrt 6}{2i} (\lambda_4-\tilde \lambda_4), \\
T^{(3)}_3&=6J_1+3J_2.
\end{align}
The action of $\pm \alpha_6$ on the small hexagon shows that if we choose to
work with the $\pmb 7$, then $\mathbb{R}^7$ decomposes as $\pmb 3\oplus \pmb %
2\oplus \pmb 2$ under $\chi_3$, so that is an $SU(2)$ type solution. \newline
For $\pmb {14}$, we see that the action of $\pm \alpha_4$ decomposes $%
\mathbb{R}^{14}$ into $\pmb 4\oplus \pmb 4\oplus \pmb 3 \oplus \pmb 1\oplus %
\pmb 1\oplus \pmb 1$, which is again an $SU(2)$ type solution, since it
contains even dimensional subrepresentations.

\subsubsection{$\protect\chi_4$-Spaghetti}

Since $\chi_4=(0,2)$, we get $p\equiv (p_1,p_2)=(12,8)$. This time there are
four roots satisfying the condition $\alpha(f)=2$, which are $\alpha_2,
\alpha_3, \alpha_4$ and $\alpha_5$. Thus, we can put $e_+=\sum_{j=2}^5
k_j\lambda_j$ and $e_-=\sum_{j=2}^5 k_j\tilde \lambda_j$. Using the results
of Appendix \ref{app:G2}, we see that equations (\ref{eqmista}) and (\ref%
{eqdiag}) become 
\begin{align}
\frac 1{\sqrt 2} k_2k_3 +\sqrt {\frac 23} k_3 k_4+\frac 1{\sqrt 2} k_4
k_5&=0, \\
k_3^2+2k_4^2=3k_5^2&=12, \\
k_2^2+k_3^2+k_4^2+k_5^2&=8.
\end{align}
There are several solutions of this system, but we know that we just need to
find one. A very simple choice is 
\begin{align}
k_3=k_4=0, \qquad k_2=k_5=2.
\end{align}
Therefore, the Spaghetti solution is determined by the matrices 
\begin{align}
T^{(4)}_1&= \lambda_2+\tilde \lambda_2+\lambda_5+\tilde \lambda_5, \\
T^{(4)}_2&=-i(\lambda_2-\tilde \lambda_2+\lambda_5-\tilde \lambda_5), \\
T^{(4)}_3&=6J_1+4J_2.
\end{align}
To understand the type of solution, we notice that the action of of $T_1$
and $T_2$ leave invariant the subspaces $\langle \lambda_3, \tilde
\lambda_1, \tilde \lambda_4\rangle$ and $\langle \tilde\lambda_3, \lambda_1,
\lambda_4\rangle$, so that in the representation $\pmb 7$, $\mathbb{R}^7$
decomposes as $\pmb 3\oplus \pmb 3\oplus \pmb 1$. We see that it is a $SO(3)$%
-type solution. \newline
Starting from the adjoint, we see that the action $\tilde \lambda_2+\tilde
\lambda_5$ applied repeatedly to $\lambda_6$ generates a combination of $%
\lambda_2$ and $\lambda_5$, then an element of $H$, then a combination of $%
\tilde \lambda_2$ and $\tilde \lambda_5$, and finally $\tilde \lambda_6$.
This shows that working with $\pmb {14}$, $\mathbb{R}^{14}$ decomposes as $%
\pmb 5\oplus \pmb 3\oplus \pmb 3\oplus \pmb 3\oplus \pmb 1$. Again, it is an 
$SO(3)$ type solution.

\subsubsection{$\protect\chi_{28}$-Spaghetti}

This is the principal case, with $\chi_{28}=(2,2)$. Therefore $p\equiv
(p_1,p_2)=(36,20)$. We already know the solution in this case. The Spaghetti
solution is 
\begin{align}
T^{(28)}_1&= 3(\lambda_1+\tilde \lambda_1)+\sqrt 5 (\lambda_2+\tilde
\lambda_2), \\
T^{(28)}_2&=-i3(\lambda_1-\tilde \lambda_1)-i\sqrt 5 (\lambda_2-\tilde
\lambda_2), \\
T^{(28)}_3&=18J_1+10J_2.
\end{align}
Because of Proposition \ref{prop-dynkin}, we already know that working in
the adjoint the solution is of $SO(3)$-type. In the representation $\pmb 7$,
it is sufficient to verify that for $T_-=3\tilde \lambda_1+\sqrt 5 \tilde
\lambda_2$, and $v$ the maximal vector of $\pmb 7$, then the vectors $\rho_{%
\pmb 7}^k(T_-) (v)$, $k=0,\ldots, 6$ are all linearly independent. Here $%
\rho_{\pmb 7}:G_2\rightarrow End(\mathbb{R}^7)$ is the representation map of
the algebra. This is proved in Appendix \ref{app:G2} and proves that $%
\mathbb{R}^7$ is irreducible under $\chi_{28}$. Since it is odd dimensional,
it is of $SO(3)$-type.


\subsection{$G_2$ exceptional Lasagna}

For the exceptional Lasagna we can use Proposition \ref{prop-dynkin}. Since
we already know that $n=b$ must be equal to $1$, we get that $%
(p_1,p_2)=(18,10)$, and, if we fix $\psi_j=0$ for simplicity, then 
\begin{align}
\kappa=T^{(28)}_1=3(\lambda_1+\tilde \lambda_1)+\sqrt 5 (\lambda_2+\tilde
\lambda_2).
\end{align}
Moreover, from Proposition \ref{propuno} we get 
\begin{align}
h(z)=\frac z2 T^{(28)}_3.
\end{align}
This defines the simplest exceptional $G_2$ Lasagna.

\ 

\ 


\section{Extended ansatz}

In order to allow for further generalizations, it is convenient to employ
the Euler parameterization in a more general ansatz, after fixing the
matrices $\kappa$ and $f$. Let us consider the Skyrmionic field\footnote{%
In this section we will use the coordinates $\{t,r,\phi,\gamma\}$, however
the results are applicable for both the lasagna and the spaghetti phases.} 
\begin{align}
U(t,r,\phi,\gamma)=e^{\Phi(t,r,\phi,\gamma)\kappa}
e^{\chi(t,r,\phi,\gamma)f} e^{\Theta(t,r,\phi,\gamma)\kappa},
\label{GeneralEuler}
\end{align}
where $\kappa$ is specified in (\ref{kappa}), and $f$ has the same
properties as in (\ref{f}). One of the aim of this generalization is to
provide a description of different \textit{pasta states} without specifying
them \textit{a priori}. This could lead to a comprehensive description of
Skyrmions in a finite volume and to an analytical definition of other
possible states (such as \textit{gnocchi states}) and the transitions
between them. In this work, we did not analyze all these possibilities and
all the limits of this models, but we outline the main properties which
characterize them, namely the wave equations, the topological charge and the
energy density. If we define 
\begin{align}
\alpha=\frac{1}{2}\left(\Theta-\Phi\right)\qquad\mbox{and}\qquad\xi=\frac{1}{%
2}\left(\Phi+\Theta\right),
\end{align}
then 
\begin{align}
U(t,r,\phi,\gamma)=e^{-\alpha(t,r,\phi,\gamma)
\kappa}e^{\xi(t,r,\phi,\gamma) \kappa}e^{\chi(t,r,\phi,\gamma)
f}e^{\xi(t,r,\phi,\gamma) \kappa}e^{\alpha(t,r,\phi,\gamma) \kappa}.
\end{align}
This gives 
\begin{align}
\mathcal{L}_{\mu}=e^{-\alpha \kappa}e^{-\xi \kappa}\left[\partial_{\mu}%
\alpha(\kappa-\hat{\kappa})+\partial_{\mu}\xi(\kappa+\hat{\kappa}%
)+\partial_{\mu}\chi f\right]e^{\xi \kappa}e^{\alpha \kappa},
\end{align}
where we introduced the matrix function 
\begin{align}
\hat{\kappa}=e^{-\chi f}\kappa e^{\chi f}.  \label{chicappuccio}
\end{align}
Since 
\begin{align}
\mbox{tr}(\lambda_j\lambda_k)=0,\qquad\mbox{tr}(\lambda_j\tilde{\lambda}%
_k)=-\delta_{jk},
\end{align}
we have 
\begin{align}
\mbox{tr}(\kappa^2)=\mbox{tr}(\hat{\kappa}^2)=-2\|c\|^2,
\end{align}
and $f$ can be normalized so that 
\begin{align}  \label{Normf}
\mbox{tr}(f^2)=\mbox{tr}(\kappa^2).
\end{align}
This leads to the condition (\ref{CondCF}) and, in particular, $|c_j|^2=%
\frac{b}{2}p_j$. Using these conventions we can now write the Skyrme
equation explicitly.


\subsection{Non-linear wave equations}

We call wave equations to the field equations for the functions $\alpha$, $%
\xi$ and $\chi$. These result to be {\small 
\begin{align}
&\partial_{\mu}\partial^{\mu}\chi \left\{1+b^2\lambda\left[%
\partial_{\mu}\alpha\partial^{\mu}\alpha\sin^2\left(\frac{b\chi}{2}%
\right)+\partial_{\mu}\xi\partial^{\mu}\xi\cos^2\left(\frac{b\chi}{2}\right)%
\right]\right\} -b\sin(b\chi)\left(1-\frac{b^2\lambda}{4}\partial_{\mu}\chi%
\partial^{\mu}\chi\right)\left(\partial_{\nu}\alpha\partial^{\nu}\alpha-%
\partial_{\nu}\xi\partial^{\nu}\xi\right) \cr & -b^3\lambda\sin(b\chi)\cos(b%
\chi)\left[\partial_{\mu}\alpha\partial^{\mu}\alpha\partial_{\nu}\xi%
\partial^{\nu}\xi-\left(\partial_{\mu}\alpha\partial^{\mu}\xi\right)^2\right]
-b^2\lambda\left\{\sin^2\left(\frac{b\chi}{2}\right)\partial_{\mu}\partial^{%
\mu}\alpha\partial_{\nu}\alpha\partial^{\nu}\chi+\cos^2\left(\frac{b\chi}{2}%
\right)\partial_{\mu}\partial^{\mu}\xi\partial_{\nu}\xi\partial^{\nu}\chi%
\right.\cr &+\sin^2\left(\frac{b\chi}{2}\right)\left[\partial_{\mu}\alpha%
\partial^{\mu}\left(\partial_{\nu}\alpha\partial^{\nu}\chi\right)-\partial_{%
\mu}\chi\partial^{\mu}\left(\partial_{\nu}\alpha\partial^{\nu}\alpha\right)%
\right] \left.+\cos^2\left(\frac{b\chi}{2}\right)\left[\partial_{\mu}\xi%
\partial^{\mu}\left(\partial_{\nu}\xi\partial^{\nu}\chi\right)-\partial_{%
\mu}\chi\partial^{\mu}\left(\partial_{\nu}\xi\partial^{\nu}\xi\right)\right]%
\right\}\cr & -\frac{b^3\lambda}{4}\sin(b\chi)\left[\left(\partial_{\mu}%
\alpha\partial^{\mu}\chi\right)^2-\left(\partial_{\mu}\xi\partial^{\mu}\chi%
\right)^2\right]=0,  \label{FirstSkyrme}
\end{align}
}

{\small 
\begin{align}
&4\cos\left(\frac{b\chi}{2}\right)\left\{ \cos\left(\frac{b\chi}{2}\right)%
\Bigl\{ \partial_{\mu}\partial^{\mu}\alpha\left[1+\frac{b^2\lambda}{4}%
\partial_{\nu}\chi\partial^{\nu}\chi\right] -\frac{b^2\lambda}{4}\left[%
\partial_{\mu}\partial^{\mu}\chi\partial_{\nu}\chi\partial^{\nu}\alpha+%
\partial_{\mu}\chi\partial^{\mu}\left(\partial_{\nu}\chi\partial^{\nu}\alpha%
\right)-\partial_{\mu}\alpha\partial^{\mu}\left(\partial_{\nu}\chi\partial^{%
\nu}\chi\right)\right] \Bigr\} \right.\cr & -\sin\left(\frac{b\chi}{2}\right)%
\Bigl\{ \frac{b^2\lambda}{2}\sin(b\chi)\partial_{\mu}\partial^{\mu}\alpha%
\partial_{\nu}\alpha\partial^{\nu}\xi -\frac{b^2\lambda}{2}\sin(b\chi)\left[%
\partial_{\mu}\partial^{\mu}\xi\partial_{\nu}\alpha\partial^{\nu}\alpha+%
\partial_{\mu}
\xi\partial^{\mu}\left(\partial_{\nu}\alpha\partial^{\nu}\alpha\right)-%
\partial_{\mu}\alpha\partial^{\mu}\left(\partial_{\nu}\alpha\partial^{\nu}%
\xi\right)\right]\cr &+b^3\lambda\cos(b\chi)\left[\partial_{\mu}\chi%
\partial^{\mu}\alpha\partial_{\nu}\alpha\partial^{\nu}\xi-\partial_{\mu}\chi%
\partial^{\mu}\xi\partial_{\nu}\alpha\partial^{\nu}\alpha\right]%
+b\partial_{\mu}\chi\partial^{\mu}\xi \Bigr\} \Bigr\} +4\sin\left(\frac{b\chi%
}{2}\right)\left\{ \sin\left(\frac{b\chi}{2}\right)\Bigl\{ %
\partial_{\mu}\partial^{\mu}\xi\left[1+\frac{b^2\lambda}{4}%
\partial_{\nu}\chi\partial^{\nu}\chi\right]\right.\cr & -\frac{b^2\lambda}{4}%
\left[\partial_{\mu}\partial^{\mu}\chi\partial_{\nu}\chi\partial^{\nu}\xi+%
\partial_{\mu}\chi\partial^{\mu}\left(\partial_{\nu}\chi\partial^{\nu}\xi%
\right)-\partial_{\mu}\xi\partial^{\mu}\left(\partial_{\nu}\chi\partial^{%
\nu}\chi\right)\right] \Bigr\} -\cos\left(\frac{b\chi}{2}\right)\Bigl\{ 
\frac{b^2\lambda}{2}\sin(b\chi)\partial_{\mu}\partial^{\mu}\xi\partial_{\nu}%
\alpha\partial^{\nu}\xi\cr &-\frac{b^2\lambda}{2}\sin(b\chi)\left[%
\partial_{\mu}\partial^{\mu}\alpha\partial_{\nu}\xi\partial^{\nu}\xi+%
\partial_{\mu}\alpha\partial^{\mu}\left(\partial_{\nu}\xi\partial^{\nu}\xi%
\right)-\partial_{\mu}\xi\partial^{\mu}\left(\partial_{\nu}\alpha\partial^{%
\nu}\xi\right)\right]\cr & +b^3\lambda\cos(b\chi)\left[\partial_{\mu}\chi%
\partial^{\mu}\xi\partial_{\nu}\alpha\partial^{\nu}\xi-\partial_{\mu}\chi%
\partial^{\mu}\alpha\partial_{\nu}\xi\partial^{\nu}\xi\right]%
-b\partial_{\mu}\chi\partial^{\mu}\alpha \Bigr\} \Bigr\}=0,
\label{SecondSkyrme}
\end{align}
\begin{align}
&4\sin\left(\frac{b\chi}{2}\right)\left\{ \cos\left(\frac{b\chi}{2}\right)%
\Bigl\{ \partial_{\mu}\partial^{\mu}\alpha\left[1+\frac{b^2\lambda}{4}%
\partial_{\nu}\chi\partial^{\nu}\chi\right] -\frac{b^2\lambda}{4}\left[%
\partial_{\mu}\partial^{\mu}\chi\partial_{\nu}\chi\partial^{\nu}\alpha+%
\partial_{\mu}\chi\partial^{\mu}\left(\partial_{\nu}\chi\partial^{\nu}\alpha%
\right)-\partial_{\mu}\alpha\partial^{\mu}\left(\partial_{\nu}\chi\partial^{%
\nu}\chi\right)\right] \Bigr\}\right.\cr & +\sin\left(\frac{b\chi}{2}\right)%
\Bigl\{ \frac{b^2\lambda}{2}\sin(b\chi)\partial_{\mu}\partial^{\mu}\alpha%
\partial_{\nu}\alpha\partial^{\nu}\xi -\frac{b^2\lambda}{2}\sin(b\chi)\left[%
\partial_{\mu}\partial^{\mu}\xi\partial_{\nu}\alpha\partial^{\nu}\alpha+%
\partial_{\mu}\xi\partial^{\mu}\left(\partial_{\nu}\alpha\partial^{\nu}%
\alpha\right)
-\partial_{\mu}\alpha\partial^{\mu}\left(\partial_{\nu}\alpha\partial^{\nu}%
\xi\right)\right]\cr &+b^3\lambda\cos(b\chi)\left[\partial_{\mu}\chi%
\partial^{\mu}\alpha\partial_{\nu}\alpha\partial^{\nu}\xi-\partial_{\mu}\chi%
\partial^{\mu}\xi\partial_{\nu}\alpha\partial^{\nu}\alpha\right]%
+b\partial_{\mu}\chi\partial^{\mu}\xi \Bigr\} \Bigr\}\cr & -4\cos\left(\frac{%
b\chi}{2}\right)\left\{ \sin\left(\frac{b\chi}{2}\right)\Bigl\{ %
\partial_{\mu}\partial^{\mu}\xi\left[1+\frac{b^2\lambda}{4}%
\partial_{\nu}\chi\partial^{\nu}\chi\right] -\frac{b^2\lambda}{4}\left[%
\partial_{\mu}\partial^{\mu}\chi\partial_{\nu}\chi\partial^{\nu}\xi+%
\partial_{\mu}\chi\partial^{\mu}\left(\partial_{\nu}\chi\partial^{\nu}\xi%
\right)-\partial_{\mu}\xi\partial^{\mu}\left(\partial_{\nu}\chi\partial^{%
\nu}\chi\right)\right] \Bigr\}\right. \cr &+\cos\left(\frac{b\chi}{2}\right)%
\Bigl\{ \frac{b^2\lambda}{2}\sin(b\chi)\partial_{\mu}\partial^{\mu}\xi%
\partial_{\nu}\alpha\partial^{\nu}\xi -\frac{b^2\lambda}{2}\sin(b\chi)\left[%
\partial_{\mu}\partial^{\mu}\alpha\partial_{\nu}\xi\partial^{\nu}\xi+%
\partial_{\mu}\alpha\partial^{\mu}\left(\partial_{\nu}\xi\partial^{\nu}\xi%
\right)-\partial_{\mu}\xi\partial^{\mu}\left(\partial_{\nu}\alpha\partial^{%
\nu}\xi\right)\right]\cr & +b^3\lambda\cos(b\chi)\left[\partial_{\mu}\chi%
\partial^{\mu}\xi\partial_{\nu}\alpha\partial^{\nu}\xi-\partial_{\mu}\chi%
\partial^{\mu}\alpha\partial_{\nu}\xi\partial^{\nu}\xi\right]%
-b\partial_{\mu}\chi\partial^{\mu}\alpha \Bigr\} \Bigr\}=0.
\label{ThirdSkyrme}
\end{align}
} 

\subsection{Energy density}

The energy-momentum tensor takes the form {\small 
\begin{align}
T_{\mu\nu}=& \frac{K}{2}\|c\|^2\left\{ 8\left[\partial_{\mu}\alpha\partial_{%
\nu}\alpha\sin^2\left(\frac{b\chi}{2}\right)+\partial_{\mu}\xi\partial_{\nu}%
\xi\cos^2\left(\frac{b\chi}{2}\right)\right]+2\partial_{\mu}\chi\partial_{%
\nu}\chi\right.\cr & \left.-g_{\mu\nu}4\left[\partial_{\rho}\alpha\partial^{%
\rho}\alpha\sin^2\left(\frac{b\chi}{2}\right)+\partial_{\rho}\xi\partial^{%
\rho}\xi\cos^2\left(\frac{b\chi}{2}\right)\right]-g_{\mu\nu}\partial_{\rho}%
\chi\partial^{\rho}\chi\right\}\cr & +\frac{K}{2}\|c\|^2(2b^2\lambda)\Big\{%
\left[\partial_{\mu}\xi\partial_{\nu}\xi\partial_{\rho}\alpha\partial^{\rho}%
\alpha+\partial_{\mu}\alpha\partial_{\nu}\alpha\partial_{\rho}\xi\partial^{%
\rho}\xi-\left(\partial_{\mu}\alpha\partial_{\nu}\xi+\partial_{\mu}\xi%
\partial_{\nu} \alpha\right)\partial_{\rho}\alpha\partial^{\rho}\xi\right]%
\sin^2(b\chi)\cr  & -\left[\partial_{\mu}\alpha\partial_{\nu}\alpha\sin^2%
\left(\frac{b\chi}{2}\right)+\partial_{\mu}\xi\partial_{\nu}\xi\cos^2\left(%
\frac{b\chi}{2}\right)\right]\partial_{\rho}\chi\partial^{\rho}\chi -\left[%
\partial_{\rho}\alpha\partial^{\rho}\alpha\sin^2\left(\frac{b\chi}{2}%
\right)+\partial_{\rho}\xi\partial^{\rho}\xi\cos^2\left(\frac{b\chi}{2}%
\right)\right]\partial_{\mu}\chi\partial_{\nu}\chi\cr & \left.-\sin^2\left(%
\frac{b\chi}{2}\right)\left(\partial_{\mu}\alpha\partial_{\nu}\chi+%
\partial_{\mu}\chi\partial_{\nu}\alpha\right)\partial_{\rho}\alpha\partial^{%
\rho}\chi-\cos^2\left(\frac{b\chi}{2}\right)\left(\partial_{\mu}\xi%
\partial_{\nu}\chi+\partial_{\mu}
\chi\partial_{\nu}\xi\right)\partial_{\rho}\xi\partial^{\rho}\chi\right\}\cr 
& -\frac{K}{2}\|c\|^2b^2\lambda g_{\mu\nu}\left\{ \left[\partial_{\rho}\xi%
\partial^{\rho}\xi\partial_{\sigma}\alpha\partial^{\sigma}\alpha-\left(%
\partial_{\rho}\alpha\partial^{\rho}\xi\right)^2\right]\sin^2(b\chi) \right. %
\cr & +\left[\partial_{\rho}\alpha\partial^{\rho}\alpha\sin^2\left(\frac{%
b\chi}{2}\right)+\partial_{\rho}\xi\partial^{\rho}\xi\cos^2\left(\frac{b\chi%
}{2}\right)\right]\partial_{\sigma}\chi\partial^{\sigma}\chi\cr & \left.-
\left[\left(\partial_{\rho}\alpha\partial^{\rho}\chi\right)^2\sin^2\left(%
\frac{b\chi}{2}\right)+\left(\partial_{\rho}\xi\partial^{\rho}\chi\right)^2%
\cos^2\left(\frac{b\chi}{2}\right)\right]\partial_{\sigma}\chi\partial^{%
\sigma}\chi\right\}.
\end{align}
} From this, we can obtain the energy density as $\rho_E=T_{tt}$. 

\subsection{Baryon charge}

The Baryon charge is 
\begin{align}
B=\frac{1}{24\pi^2}\int\rho_B drd\gamma d\phi,
\end{align}
with 
\begin{align}
\rho_B=-12\|c\|^2\varepsilon^{ijk}\partial_i\alpha\partial_j\xi\partial_k%
\cos(b\chi).
\end{align}
Up to now we have just written local expressions, but in order to compute
the Baryonic charge it is necessary to define the ranges of $\alpha$, $\xi$
and $\chi$. Proposition \ref{prop-dynkin} tells us that the period of $%
e^{g\kappa}$ is $T_{\kappa}=\eta n\frac{2\pi}{b}$, where $\eta=1,2$
depending on the representation, while $n\in\mathbb{Z}$. Following \cite%
{AlCaCaCe}, the ranges must be 
\begin{align}
0\leq\alpha\leq \eta\sigma n\frac{2\pi}{b},\qquad 0\leq\chi\leq\frac{\pi}{b}
\qquad\mbox{and}\qquad 0\leq\xi\leq \eta m\frac{2\pi}{b},
\end{align}
where $\sigma=1$ for odd-dimensional representations and $\frac{1}{2}$ for
even-dimensional representations and $m,n$ are both integer. The integration
of the density charge leads to 
\begin{align}
B=4mn\sigma\eta^2\frac{\|c\|^2}{b^2}.
\end{align}
We can compute the ratio $\frac{\|c\|^2}{b^2}$ in the following way. From (%
\ref{Normf}), we get 
\begin{align}
-2\|c\|^2=\mbox{Tr}(f^2)=\sum_{j=1}^{r}p_j\mbox{Tr}(h_jf),
\end{align}
where the definition $f=\sum_{j=1}^{r}p_jh_j$ has been used. Now, we can
replace the coefficients $p_j$ with (\ref{Coeff_f}) and $\mbox{Tr}%
(h_jf)=\alpha_{j}(f)=ib$ to get 
\begin{align}
\|c\|^2=\sum_{j=1}^{r}p_j\mbox{Tr}(h_jf) = \sum_{j=1}^r\frac {b^2}{%
\|\alpha_j\|^2} \sum_{k=1}^r (C^G)^{-1}_{jk}.
\end{align}
The Baryon charge takes the form 
\begin{align}
B=4mn\sigma\eta^2\sum_{j,k=1}^r\frac {1}{\|\alpha_j\|^2} (C^G)^{-1}_{jk}.
\end{align}

\subsection{Example: the Lasagna case}

Let us now compare the results obtained in this section with the previous
ones. Our quantities can be written in terms of the Lasagna ansatz as
follows 
\begin{align}
\alpha=-\frac{\sigma t}{2L_{\phi}}+\sigma\frac{\phi}{2}+m\frac{\gamma}{2}%
\qquad\mbox{and}\qquad\xi=\frac{\sigma t}{2L_{\phi}}-\sigma\frac{\phi}{2}+m%
\frac{\gamma}{2}.  \label{lasagnansatz}
\end{align}
Moreover, the \textit{profile function} depends only on the parameter $r$ ($%
\chi=\chi(r)$). This leads to the following relations 
\begin{gather*}
\partial_{\mu}\chi\partial^{\mu}\alpha=\partial_{\mu}\chi\partial^{\mu}%
\xi=0,\qquad\partial_{\mu}\partial^{\mu}\alpha=\partial_{\mu}\partial^{\mu}%
\xi=0 \\
\partial_{\mu}\alpha\partial^{\mu}\alpha=\partial_{\mu}\xi\partial^{\mu}\xi=%
\partial_{\mu}\alpha\partial^{\mu}\xi=\frac{1}{4L_{\gamma}^2}.  \notag
\end{gather*}
With these choices the equations (\ref{SecondSkyrme}) and (\ref{ThirdSkyrme}%
) are automatically satisfied. The equation (\ref{FirstSkyrme}) becomes 
\begin{align}
\chi^{\prime \prime }(r) \left(1+\frac{b^2\lambda}{4L_{\gamma}^2}\right)=0,
\end{align}
which leads to the solution 
\begin{align}
\chi(r)=\frac{r}{2b},
\end{align}
where the boundary conditions $\chi(0)=0$ and $\chi(2\pi)=\frac{\pi}{b}$
have been used. Now, it is easy to compute the energy density, which results 
\begin{align}
\rho_E=\frac{K}{2}\|c\|^2\Bigl\{ \frac{2}{L_{\phi}^2}+\frac{1}{L_{\gamma}^2}+%
\frac{1}{4b^2L_{r}^2}\left(1+\frac{b^2\lambda}{4L_{\gamma}^2}\right)+\frac{%
b^2\lambda}{8L_{\phi}^2L_{\gamma}^2}\left[4\sin^2\left(\frac{r}{2}\right)-1%
\right]\Bigr\}.
\end{align}
The integration over the volume of the box gives the total energy of the
Lasagna 
\begin{align}
E=4L_{\phi}L_rL_{\gamma}\pi^3K\frac{\|c\|^2}{b^2}\Bigl\{ \frac{2}{L_{\phi}^2}%
+\frac{1}{L_{\gamma}^2}+\frac{1}{4b^2L_{r}^2}\left(1+\frac{b^2\lambda}{%
4L_{\gamma}^2}\right)+\frac{b^2\lambda}{8L_{\phi}^2L_{\gamma}^2}\Bigr\}.
\end{align}

\ 

\ 


\section{Coupling with $U(1)$ gauge field}

By employing the generalization presented in the previous section, it is now
easy to couple the Skyrmion field to an electromagnetic field $A_{\mu}$. To
this aim we introduce the action 
\begin{align}  \label{ActionMaxwell}
\mathcal{A}=\int d^4x\sqrt{-g}\ \mathrm{Tr} \left[\frac{K}{2}\left(\hat{%
\mathcal{L}}_{\mu}\hat{\mathcal{L}}^{\mu}+\frac{\lambda}{8}\hat{G}_{\mu\nu}%
\hat{G}^{\mu\nu}\right)-\frac{1}{4}F_{\mu\nu}F^{\mu\nu}\right],
\end{align}
where 
\begin{align}
F_{\mu\nu}=\partial_{\mu}A_{\nu}-\partial_{\nu}A_{\mu}
\end{align}
and the \textit{hat} stands for the replacement of the partial derivative
with a covariant derivative 
\begin{align}
D_{\mu}=\partial_{\mu}-A_{\mu}\left[T,\cdot\right],
\end{align}
which means that 
\begin{align}
\hat{\mathcal{L}}_{\mu}=U^{-1}D_{\mu}U=U^{-1}\left(\partial_{\mu}U-A_{\mu}%
\left[T,U\right]\right)\qquad\mbox{and}\qquad\hat{G}_{\mu\nu}=\left[\hat{%
\mathcal{L}}_{\mu},\hat{\mathcal{L}}_{\nu}\right].  \label{connessione}
\end{align}
Here $T$ is any element of the Lie algebra of the group $G$, representing
the direction of the $U(1)$ gauge field. Later, we will identify $T$ with
the generator $T_3$. The action (\ref{ActionMaxwell}) is now invariant under
gauge transformation 
\begin{align}
U\rightarrow e^{-\beta T}Ue^{\beta T},\qquad A_{\mu}\rightarrow
A_{\mu}+\partial_{\mu}\beta.
\end{align}
The gauge invariance appears also in the fact that the theory depends on $%
A_{\mu}$ through the quantity $\partial_{\mu}\alpha-A_{\mu}$, which is
invariant for gauge transformations.


\subsection{Covariant Baryonic charge}

As in \cite{CaWi}, in order to determine a topological invariant, one is
tempted to start directly generalizing (\ref{B}) to the expression 
\begin{align*}
\hat B=\frac{1}{24\pi^2}\int_{\mathcal{V}} \mathrm{Tr}[\hat{\mathcal{L}}%
\wedge\hat{\mathcal{L}}\wedge\hat{\mathcal{L}}],
\end{align*}
which, however, is not a topological invariant if the field-strength $F$ is
non-vanishing. Nevertheless, a topological invariant can be constructed
after a simple subtraction, even for a non-Abelian gauge field. Indeed, we
have:

\begin{prop}
\label{prop4} Let $\mathcal{S}$ be a three dimensional closed compact
manifold, 
\begin{align}
U:\mathcal{S }\longrightarrow G
\end{align}
a differentiable map from $S$ to the Lie group $G$, $\hat{\mathcal{L}}%
_{\mu}= U^{-1}D_{\mu}U$, $\hat{\mathcal{R}}_{\mu}= D_{\mu}U U^{-1}$, with a
non necessarily Abelian connection $\omega$, and $\Omega$ the curvature of $%
\omega$, 
\begin{align}
D_\mu U&= \partial_\mu U +[\omega, U], \\
\Omega&=d\omega +\frac 12 [\omega,\omega].
\end{align}
Hence 
\begin{align}
\hat B&=\frac{1}{24\pi^2}\int_{\mathcal{S}} \mathrm{Tr}\left[ \hat{\mathcal{L%
}}\wedge\hat{\mathcal{L}}\wedge\hat{\mathcal{L}} -3\hat{\mathcal{L}}\wedge
\Omega -3 \hat{\mathcal{R}} \wedge \Omega \right],
\end{align}
is a topological invariant. Moreover, if $H_2(\mathcal{S})=0$ and $A$ is
Abelian, then $\hat B=B$.
\end{prop}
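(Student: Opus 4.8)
The plan is to recognize the integrand as a Chern--Simons transgression form, whose integral over a closed manifold is manifestly invariant under deformations and hence topological. The first move is a reformulation. I would introduce the gauge transform of $\omega$ by $U$, namely $\omega^{U}:=U^{-1}\omega U+U^{-1}dU$, whose curvature is $\Omega^{U}=U^{-1}\Omega U$. A direct check gives the two key algebraic facts $\hat{\mathcal{L}}=\omega^{U}-\omega$ and $\hat{\mathcal{R}}=U\hat{\mathcal{L}}U^{-1}$, so that by cyclicity of the trace $\mathrm{Tr}(\hat{\mathcal{R}}\wedge\Omega)=\mathrm{Tr}(\hat{\mathcal{L}}\wedge\Omega^{U})$. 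Hence the integrand can be written in the symmetric form $\Gamma=\mathrm{Tr}(\hat{\mathcal{L}}\wedge\hat{\mathcal{L}}\wedge\hat{\mathcal{L}})-3\,\mathrm{Tr}\!\big(\hat{\mathcal{L}}\wedge(\Omega+\Omega^{U})\big)$, in which the two connections $\omega$ and $\omega^{U}$ appear on an equal footing.

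The crux is then Step~2. Using $d\omega=\Omega-\omega\wedge\omega$ for both $\omega$ and $\omega^{U}$ together with $\hat{\mathcal{L}}=\omega^{U}-\omega$, a short computation should yield the structure equation $d\hat{\mathcal{L}}+\omega\wedge\hat{\mathcal{L}}+\hat{\mathcal{L}}\wedge\omega+\hat{\mathcal{L}}\wedge\hat{\mathcal{L}}=\Omega^{U}-\Omega$. Introducing the interpolating connection $A_{t}=\omega+t\hat{\mathcal{L}}$ (so $A_{0}=\omega$, $A_{1}=\omega^{U}$) with curvature $F_{t}$, the structure equation gives $F_{t}=\Omega+t(\Omega^{U}-\Omega)+(t^{2}-t)\,\hat{\mathcal{L}}\wedge\hat{\mathcal{L}}$, and carrying out the elementary $t$-integration in the transgression $Q_{3}:=2\int_{0}^{1}\mathrm{Tr}(\hat{\mathcal{L}}\wedge F_{t})\,dt$ reproduces exactly $\Gamma=-3\,Q_{3}$. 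This identifies the integrand, up to the fixed normalization, with the Chern--Simons form interpolating $\omega$ and $\omega^{U}$.

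Invariance is now immediate. Since $\mathrm{Tr}(\Omega\wedge\Omega)$ is gauge invariant one has $\mathrm{Tr}(\Omega^{U}\wedge\Omega^{U})=\mathrm{Tr}(\Omega\wedge\Omega)$, whence $dQ_{3}=\mathrm{Tr}(F_{1}\wedge F_{1})-\mathrm{Tr}(F_{0}\wedge F_{0})=0$; thus $\Gamma$ is a closed $3$-form built smoothly from $(U,\omega)$, and $\hat B$ is the period of a closed form over the closed manifold $\mathcal{S}$, invariant under any smooth homotopy of $U$ and any deformation of $\omega$. (Equivalently, I could compute $\delta\hat B$ directly for independent variations $\delta U$ and $\delta\omega$, integrating the curvature terms by parts on the closed $\mathcal{S}$ and using the Bianchi identity $d\Omega+[\omega,\Omega]=0$; this is the same computation organised differently.) For the final clause I would invoke the standard gauge-variation identity $Q_{3}=\mathrm{CS}(\omega^{U})-\mathrm{CS}(\omega)=-\tfrac13\mathrm{Tr}(\mathcal{L}\wedge\mathcal{L}\wedge\mathcal{L})+d\beta$, with $\mathcal{L}=U^{-1}dU$ and $\beta$ a $2$-form built from $U$ and $\omega$; combined with $\Gamma=-3Q_{3}$ this reads $\Gamma=\mathrm{Tr}(\mathcal{L}\wedge\mathcal{L}\wedge\mathcal{L})-3\,d\beta$. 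In general $\beta$ and $\mathrm{CS}(\omega)$ are globally defined only when $\omega$ is, which is why $\hat B$ and $B$ differ. When $A$ is Abelian, $\Omega=dA$ is closed, and if moreover $H_{2}(\mathcal{S})=0$ then $H_{2}(\mathcal{S};\mathbb{R})=0$, so every closed $2$-form is exact and $A$ may be taken to be a globally defined $1$-form; then $\beta$ is global, Stokes kills $\int_{\mathcal{S}}d\beta$, and $\hat B=\frac{1}{24\pi^{2}}\int_{\mathcal{S}}\mathrm{Tr}(\mathcal{L}\wedge\mathcal{L}\wedge\mathcal{L})=B$.

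The step I expect to be the main obstacle is Step~2: deriving the structure equation and performing the $t$-integration with the correct signs and combinatorial factors (graded commutators, cyclicity of the trace on odd-degree forms, and the convention $\Omega=d\omega+\frac12[\omega,\omega]$), so that precisely the coefficient $-3$ and the exact combination $\Omega+\Omega^{U}$ emerge and the identification $\Gamma=-3\,Q_{3}$ holds on the nose rather than merely up to an undetermined exact term. For the closing statement the only real subtlety is the cohomological argument that "$A$ Abelian and $H_{2}(\mathcal{S})=0$" forces $A$ to be globally defined, so that the locally-exact remainder $d\beta$ integrates to zero and $\hat B=B$.
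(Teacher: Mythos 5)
Your proposal is correct in substance and follows a genuinely different route from the paper. The paper proves invariance by direct computation: it takes the first variations $\delta_U\hat B$ and $\delta_\omega\hat B$ separately, using cyclicity, $D(DU)=[\Omega,U]$ and the Maurer--Cartan equation to show that each variation of the integrand is the differential of a globally defined $2$-form, and then, for the last clause, carries out a long component decomposition in a basis $\{T_a\}$ with $\tilde T_a=U^{-1}T_aU$, $\tau_a=T_a-\tilde T_a$, $\check\tau_a=T_a+\tilde T_a$, to exhibit the gauged density as $\mathrm{Tr}(\mathcal{L}\wedge\mathcal{L}\wedge\mathcal{L})$ minus an explicit exact term. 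You instead observe that with the commutator-type coupling $D_\mu U=\partial_\mu U+[\omega,U]$ one has exactly $\hat{\mathcal{L}}=\omega^U-\omega$ and $\hat{\mathcal{R}}=U\hat{\mathcal{L}}U^{-1}$, so the integrand is $-3$ times the Chern--Simons transgression $Q_3$ between $\omega$ and its gauge transform $\omega^U$. I have checked your Step 2 against the paper's conventions: the structure equation $d\hat{\mathcal{L}}+\omega\wedge\hat{\mathcal{L}}+\hat{\mathcal{L}}\wedge\omega+\hat{\mathcal{L}}\wedge\hat{\mathcal{L}}=\Omega^U-\Omega$, the formula $F_t=\Omega+t(\Omega^U-\Omega)+(t^2-t)\,\hat{\mathcal{L}}\wedge\hat{\mathcal{L}}$, and the $t$-integration giving $Q_3=\mathrm{Tr}\bigl(\hat{\mathcal{L}}\wedge(\Omega+\Omega^U)\bigr)-\tfrac13\mathrm{Tr}(\hat{\mathcal{L}}\wedge\hat{\mathcal{L}}\wedge\hat{\mathcal{L}})$ all hold, so $\Gamma=-3Q_3$ is exact, not merely up to an undetermined closed term. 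This identification is conceptually cleaner than the paper's: it explains \emph{why} the subtraction $-3(\hat{\mathcal{L}}+\hat{\mathcal{R}})\wedge\Omega$ restores topological invariance, and it yields the second clause from the standard gauge-variation formula for Chern--Simons forms in place of the paper's page of index manipulations culminating in its equation for the final density. Your cohomological closing argument ($H_2(\mathcal{S})=0\Rightarrow H^2_{dR}(\mathcal{S})=0\Rightarrow$ the Abelian $\omega$ can be taken globally defined $\Rightarrow$ the exact remainder integrates to zero by Stokes) is the same as the paper's.

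One inference in your write-up needs repair, although the patch is standard. From $dQ_3=\mathrm{Tr}(F_1\wedge F_1)-\mathrm{Tr}(F_0\wedge F_0)=0$ you conclude that $\Gamma$ is closed and that therefore its period over $\mathcal{S}$ is invariant under deformations of $U$ and $\omega$. On a $3$-manifold \emph{every} $3$-form is closed, so the closedness statement is vacuous there; and in general the period of a closed form is not invariant under deformations of the form itself (consider $f(s)\,d\theta$ on $S^1$, closed for each $s$ with $s$-dependent period). What actually produces invariance is that the identity $dQ_3=\mathrm{Tr}(F_1\wedge F_1)-\mathrm{Tr}(F_0\wedge F_0)$ holds universally, in any dimension: given a smooth family $(U_s,\omega_s)$, extend the data to the cylinder $\mathcal{S}\times[0,1]$; there the naturally defined transgression $3$-form is closed as a form on a $4$-manifold, because $F_1=U^{-1}F_0U$ pointwise kills the right-hand side, and Stokes on the cylinder then equates the periods at $s=0$ and $s=1$. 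Alternatively, your parenthetical direct-variation computation (which is precisely the paper's method) closes the gap. Either completion should be stated explicitly; as written, the main text's justification of invariance does not stand on its own, even though everything needed to fix it is already present in your proposal.
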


\begin{proof}
In order to prove the proposition, we have to prove that the first variation
of $\hat B$ w.r.t. $U$ and $\omega$ (independently) vanishes at any
functional point, that is independently if $U$ and $\omega$ are constrained
by some equations of motion. Notice that in taking variations, $\delta \omega
$ is a well defined 1-form on $\mathcal{S}$ despite $\omega$ could not be.
To keep notation compact we will use bold round brackets to indicate a trace 
$\pmb( M \pmb)\equiv \mathrm{Tr (M)}$. Moreover, we first recall the
following properties. If $a_j$, $j=1,\ldots,k$ are Lie algebra valued 1-forms then 
\begin{align}
\pmb( a_1 \wedge \cdots \wedge a_{k-1} \wedge a_k \pmb)=(-1)^{k-1} \pmb( a_k
\wedge a_1\wedge \cdots \wedge a_{k-1} \pmb). \qquad\ \mbox{(cyclicity (c))}
\end{align}
If $[,]$ indicates the Lie product (commutator) of matrix valued forms and $%
a,b,c$ are three differential forms of degree $k_a$ $k_b$ and $k_c$
respectively, then 
\begin{align}
[a,b\wedge c]= [a,b]\wedge c +(-1)^{k_ak_b} b\wedge [a,c]. \qquad\ %
\mbox{(graded algebraic derivative (gad))}
\end{align}
Since 
\begin{align}
\pmb( [a,b] \pmb)=0,
\end{align}
in particular, we have that, if $a$ is a 1-form, then 
\begin{align}
\pmb( [a,b]\wedge c \pmb)=(-1)^{k_b} \pmb( b\wedge [a,c]\pmb), \qquad\ %
\mbox{(algebraic integration by parts (aip))}
\end{align}
and 
\begin{align}
\pmb( Db \pmb)=\pmb( db \pmb). \qquad\ \mbox{(algebraic trivialization (at))}%
.
\end{align}
Using these properties, taking a variation $\delta U$ of $U$ we can write 
\begin{align}
\frac 13 \delta_U \pmb( \hat{\mathcal{L}}\wedge \hat{\mathcal{L}}\wedge \hat{\mathcal{L}}\pmb)=& \pmb (
\delta_U \hat{\mathcal{L}}\wedge \hat{\mathcal{L}}\wedge \hat{\mathcal{L}}\pmb)=\pmb(-U^{-1} \delta U \hat{\mathcal{L}}\wedge \hat{\mathcal{L}}\wedge \hat{\mathcal{L}}+ U^{-1} D\delta U \wedge \hat{\mathcal{L}}\wedge \hat{\mathcal{L}}%
\pmb)\cr = & \pmb(-U^{-1} \delta U \hat{\mathcal{L}}\wedge \hat{\mathcal{L}}\wedge \hat{\mathcal{L}}+D(
U^{-1} \delta U) \wedge \hat{\mathcal{L}}\wedge \hat{\mathcal{L}}-D( U^{-1}) \delta U \wedge \hat{
\mathcal{L}} \wedge \hat{\mathcal{L}}\pmb)\cr =& \pmb( D( U^{-1} \delta U) \wedge \hat{\mathcal{L}}\wedge
\hat{\mathcal{L}}\pmb)
\end{align}
where we used $D( U^{-1})=-U^{-1} DU U^{-1}$ and cyclicity in the last term
of the second line. Hence, 
\begin{align}
\frac 13 \delta_U \pmb( \hat{\mathcal{L}}\wedge \hat{\mathcal{L}}\wedge \hat{\mathcal{L}}\pmb)=& \pmb(
D[U^{-1} \delta g \hat{\mathcal{L}}\wedge \hat{\mathcal{L}}] \pmb)-\pmb( U^{-1} \delta U [U^{-1}
D(DU) \wedge \hat{\mathcal{L}}-\hat{\mathcal{L}}\wedge U^{-1} D(DU)] \pmb)\cr = & d \pmb( U^{-1}
\delta g \hat{\mathcal{L}}\wedge \hat{\mathcal{L}}\pmb) -\pmb( U^{-1} \delta U [U^{-1}
[\Omega,U] \wedge \hat{\mathcal{L}}-\hat{\mathcal{L}}\wedge U^{-1} [\Omega,U] \pmb),
\label{primaparte}
\end{align}
where we used (at) in the first term and $D(DU)=[\Omega,U]$ in the other
ones. Now, let us consider 
\begin{align}
\delta_U \pmb(\Omega\wedge \hat{\mathcal{L}}\pmb)=& -\pmb( \Omega \wedge U^{-1} \delta
U \hat{\mathcal{L}}\pmb) +\pmb( \Omega \wedge U^{-1} D \delta U \pmb)\cr = & -\pmb(
\delta U U^{-1} (D\Omega) U^{-1} \pmb) +\pmb(D [\Omega U^{-1} \delta U] \pmb%
) +\pmb( \Omega \wedge \hat{\mathcal{L}}U^{-1} \delta U \pmb)\cr =& -\pmb( \delta U
\hat{\mathcal{L}}\wedge \Omega U^{-1} \pmb) +d\pmb(\Omega U^{-1} \delta U \pmb) +\pmb(
\Omega \wedge \hat{\mathcal{L}}U^{-1} \delta U \pmb),  \label{secondaparte}
\end{align}
where again we used (c) and (at). In the same way 
\begin{align}
\delta_U \pmb(\Omega\wedge \hat{\mathcal{R}}\pmb)=& -\pmb( \delta U g^{-1} \Omega
\wedge \hat{\mathcal{R}} \pmb) +d\pmb(\delta U U^{-1} \Omega \pmb) +\pmb(\delta U
U^{-1} \hat{\mathcal{R}} \wedge \Omega \pmb).  \label{terzaparte}
\end{align}
Subtracting (\ref{secondaparte}) and (\ref{terzaparte}) to (\ref{primaparte}%
), and multiplying times 3, we get 
\begin{align}
\delta_U \pmb( \hat{\mathcal{L}}\wedge \hat{\mathcal{L}}\wedge \hat{\mathcal{L}}\pmb -3 \hat{\mathcal{L}}\wedge
\Omega -3 \hat{\mathcal{R}} \wedge \Omega)=3d \pmb( U^{-1} \delta U \hat{\mathcal{L}}\wedge \hat{\mathcal{L}}-\Omega U^{-1} \delta_U -\Omega \delta U U^{-1} \pmb) .
\end{align}
Since the r.h.s. is the differential of a globally well defined 2-form
and $\mathcal{S}$ is a smooth closed compact manifold, it follows from
Stokes theorem that the first variation of $\hat B$ under variation of $U$ vanishes. 
\newline
As a second step, let us consider a variation $\delta \omega$ of $\omega$.
The strategy is the same as above. For 
\begin{align}
\mu \equiv \frac 13 \pmb( \hat{\mathcal{L}}\wedge \hat{\mathcal{L}}\wedge \hat{\mathcal{L}}\pmb -3 \hat{\mathcal{L}}
\wedge \Omega -3 \hat{\mathcal{R}} \wedge \Omega)
\end{align}
we get 
\begin{align}
\delta_\omega \mu=&\pmb( U^{-1} [\delta \omega,U] \wedge \hat{\mathcal{L}}\wedge \hat{\mathcal{L}}%
\pmb)-\pmb( U^{-1} [\delta \omega, U] \wedge \Omega \pmb) -\pmb( \hat{\mathcal{L}}
\wedge (d\delta\omega+[\omega,\delta\omega]) \pmb)-\pmb(
[\delta\omega,U]U^{-1}\wedge \Omega \pmb) \cr & -\pmb( \hat{\mathcal{R}} \wedge
(d\delta\omega+[\omega,\delta\omega]) \pmb)\cr =&\pmb( U^{-1} [\delta
\omega,U] \wedge \hat{\mathcal{L}}\wedge \hat{\mathcal{L}}\pmb)-\pmb( U^{-1} [\delta \omega, U]
\wedge \Omega \pmb) +d\pmb(\hat{\mathcal{L}}\wedge \delta \omega \pmb)-\pmb( D\hat{\mathcal{L}}
\wedge \delta \omega \pmb) -\pmb( [\delta\omega,U]U^{-1}\wedge \Omega \pmb) %
\cr & +d\pmb(\hat{\mathcal{R}} \wedge \delta \omega \pmb) -\pmb( D\hat{\mathcal{R}} \wedge
\delta\omega \pmb).
\end{align}
Now, 
\begin{align}
D\hat{\mathcal{L}}=d(U^{-1} DU)+[\omega,U^{-1} DU]=-\hat{\mathcal{L}}\wedge \hat{\mathcal{L}}+U^{-1}
(d[\omega,U] +[\omega,[\omega,U]]-\omega \wedge dU)= -\hat{\mathcal{L}}\wedge \hat
{\mathcal{L}}+U^{-1} [\Omega,U],
\end{align}
and similarly 
\begin{align}
D\hat{\mathcal{R}}=\hat{\mathcal{R}}\wedge \hat{\mathcal{R}}+[\Omega,U]U^{-1}.
\end{align}
Finally, noticing that 
\begin{align}
\pmb( U^{-1} [\delta \omega,U] \wedge \hat{\mathcal{L}}\wedge \hat{\mathcal{L}}\pmb)=\pmb( \delta
\omega \wedge \hat{\mathcal{R}} \wedge \hat{\mathcal{R}}\pmb)-\pmb(\delta \omega \wedge \hat{\mathcal{L}}
\wedge \hat{\mathcal{L}}\pmb)
\end{align}
and putting all together, we get 
\begin{align}
\delta_\omega \mu=& d \pmb( (\hat{\mathcal{L}}+\hat{\mathcal{R}} \wedge \delta \omega)\pmb),
\end{align}
which, as above, it proves invariance also under variations of the
connection. Thus, $\hat B$ is topological invariant.\newline
Now, we have to prove the second part. To this end it is convenient to
introduce some further notation. After fixing a basis $\{T_a\}_a$ of $Lie(G)$%
, with structure constants $f^a_{\ bc}$ defined by\footnote{%
We use the Einstein's convention on sums.} 
\begin{align}
[T_b,T_c]=f^a_{\ bc} T_a,
\end{align}
it is convenient to define 
\begin{align}
\tilde T_a:=&U^{-1} T_a U, \\
\tau_a:=& T_a-\tilde T_a, \\
\check\tau_a:=& T_a+\tilde T_a,
\end{align}
so that, writing $\omega=\omega^a \tau_a$, we have 
\begin{align}
\hat{\mathcal{L}}=\mathcal{L}-\omega^a \tau_a
\end{align}
and also 
\begin{align}
\pmb( \hat{\mathcal{L}}\wedge \Omega \pmb) +\pmb( \hat{\mathcal{R}} \wedge \Omega \pmb)=\Omega^a
\wedge \pmb( \hat{\mathcal{L}}\check \tau_a\pmb).
\end{align}
Therefore, 
\begin{align}
\pmb( \hat{\mathcal{L}}\wedge \hat{\mathcal{L}}\wedge \hat{\mathcal{L}}\pmb)-3\pmb( (\hat{\mathcal{L}}+\hat{\mathcal{R}})\wedge
\Omega\pmb) =&\pmb(\mathcal{L}\wedge \mathcal{L}\wedge \mathcal{L }\pmb%
)-3\omega^a \wedge \pmb( \tau_a \mathcal{L }\wedge \mathcal{L }\pmb)
+3\omega^a \wedge \omega^b \wedge \pmb( \tau_a \tau_b \mathcal{L }\pmb)\cr & 
-\omega^a \wedge \omega^b \wedge \omega^c \pmb(\tau_a \tau_b \tau_c \pmb%
)-3\Omega^a \wedge \pmb(\mathcal{L }\check \tau_a\pmb)+3 \Omega^a \wedge
\omega^b \pmb( \tau_b \check \tau_a \pmb).  \label{formulaccia}
\end{align}
By the Maurer-Cartan equation $d\mathcal{L}=-\frac 12 [\mathcal{L},\mathcal{L%
}]$, and we can write, 
\begin{align}
-\omega^a \wedge \pmb( \tau_a \mathcal{L }\wedge \mathcal{L }\pmb)&-\Omega^a
\wedge \pmb(\mathcal{L }\check \tau_a\pmb)=\omega^a \wedge \pmb( \tau_a d%
\mathcal{L }\pmb)-d\omega^a \wedge \pmb(\mathcal{L }\check \tau_a\pmb)
-\frac 12 f^a_{\ bc} \omega^b \wedge \omega^c \wedge \pmb( \mathcal{L }%
\check \tau_a \pmb)\cr & =\omega^a \wedge \pmb( \tau_a d\mathcal{L }\pmb%
)-d[\omega^a \wedge \pmb(\mathcal{L }\check \tau_a\pmb)]-\omega^a \wedge d%
\pmb(\mathcal{L }\check \tau_a\pmb) -\frac 12 f^a_{\ bc} \omega^b \wedge
\omega^c \wedge \pmb( \mathcal{L }\check \tau_a \pmb),
\end{align}
which, using 
\begin{align}
\omega^a \wedge \pmb( \tau_a d\mathcal{L }\pmb) - \omega^a \wedge d\pmb(%
\mathcal{L }\check \tau_a\pmb)=&-2\omega^a \wedge \pmb( \tilde T_a d\mathcal{%
L }\pmb) -\omega^a \wedge \pmb(\mathcal{L}\wedge d\tilde T_a\pmb)\cr = & 
-2\omega^a \wedge \pmb( \tilde T_a d\mathcal{L }\pmb) -\omega^a \wedge \pmb(%
\mathcal{L}\wedge (-[\mathcal{L}, \tilde T_a]\pmb)=0,
\end{align}
because of the Maurer-Cartan equations, becomes 
\begin{align}
-\omega^a \wedge \pmb( \tau_a \mathcal{L }\wedge \mathcal{L }\pmb)-\Omega^a
\wedge \pmb(\mathcal{L }\check \tau_a\pmb)= -d[\omega^a \wedge \pmb(\mathcal{%
L }\check \tau_a\pmb)] -\frac 12 f^a_{\ bc} \omega^b \wedge \omega^c \wedge %
\pmb( \mathcal{L }\check \tau_a \pmb).  \label{parte1}
\end{align}
Next, we rewrite 
\begin{align}
\omega^a \wedge \omega^b \wedge \pmb( \tau_a \tau_b \mathcal{L }\pmb%
)=&\omega^a \wedge \omega^b \wedge \pmb( T_a T_b \mathcal{L }\pmb)+\omega^a
\wedge \omega^b \wedge \pmb( \tilde T_a \tilde T_b \mathcal{L }\pmb)
-\omega^a \wedge \omega^b \wedge \pmb( T_a \tilde T_b \mathcal{L }\pmb%
)-\omega^a \wedge \omega^b \wedge \pmb( \tilde T_a T_b \mathcal{L }\pmb)\cr =
& \frac 12 \omega^a \wedge \omega^b \wedge \pmb( [T_a, T_b] \mathcal{L }\pmb%
)+\frac 12 \omega^a \wedge \omega^b \wedge \pmb( [\tilde T_a, \tilde T_b] 
\mathcal{L }\pmb)-\omega^a \wedge \omega^b \wedge \pmb( [T_a, \tilde T_b] 
\mathcal{L }\pmb)\cr =&\frac 12 \omega^a \wedge \omega^b \wedge \pmb( f^c_{\
ab} T_c \mathcal{L }\pmb)+\frac 12 \omega^a \wedge \omega^b \wedge \pmb(
f^c_{\ ab} \tilde T_c \mathcal{L }\pmb)-\omega^a \wedge \omega^b \wedge \pmb%
( T_a [\tilde T_b, \mathcal{L}] \pmb)\cr = & \frac 12 f^c_{\ ab} \omega^a
\wedge \omega^b \wedge \pmb( \check \tau_c \mathcal{L }\pmb)-\omega^a \wedge
\omega^b \wedge \pmb( T_a d\tilde T_b \pmb)\cr =&\frac 12 f^c_{\ ab}
\omega^a \wedge \omega^b \wedge \pmb( \check \tau_c \mathcal{L }\pmb%
)-d[\omega^a \wedge \omega^b \wedge \pmb( T_a \tilde T_b \pmb)] + d[\omega^a
\wedge \omega^b] \wedge \pmb( T_a \tilde T_b \pmb)\cr = & \frac 12 f^c_{\
ab} \omega^a \wedge \omega^b \wedge \pmb( \check \tau_c \mathcal{L }\pmb%
)-d[\omega^a \wedge \omega^b \wedge \pmb( T_a \tilde T_b \pmb)] + d\omega^a
\wedge \omega^b \wedge \pmb( T_a \tilde T_b-\tilde T_a T_b \pmb).
\label{parte2}
\end{align}
Since $\pmb( \tilde T_a \tilde T_b \pmb)=\pmb( T_a T_b \pmb)$, we also have 
\begin{align}
\Omega^a \wedge \omega^b \pmb(\tau_b \check \tau_a\pmb)=-\Omega^a \wedge
\omega^b \pmb(T_a \tilde T_b-\tilde T_a T_b\pmb).  \label{parte3}
\end{align}
Finally, using also $\pmb( \tilde T_a \tilde T_b \tilde T_c \pmb)=\pmb( T_a
T_b T_c\pmb)$, 
\begin{align}
\omega^a \wedge \omega^b \wedge \omega^c \pmb(\tau_a \tau_b \tau_c \pmb)=&
3\omega^a \wedge \omega^b \wedge \omega^c \pmb(\tilde T_a \tilde T_b T_c
-T_a T_b \tilde T_c \pmb)\cr = & \frac 32\omega^a \wedge \omega^b \wedge
\omega^c \pmb([\tilde T_a, \tilde T_b] T_c -[T_a, T_b] \tilde T_c \pmb) \cr %
=& \frac 32 \omega^a \wedge \omega^b \wedge \omega^c f^d_{\ ab} \pmb(\tilde
T_d T_c -T_d \tilde T_c \pmb).  \label{parte4}
\end{align}
After replacing (\ref{parte1}), (\ref{parte2}), (\ref{parte3}) and (\ref%
{parte4}) in (\ref{formulaccia}) we get 
\begin{align}
\pmb( \hat{\mathcal{L}}\wedge \hat{\mathcal{L}}\wedge \hat{\mathcal{L}}\pmb)-3\pmb( (\hat{\mathcal{L}}+\hat{\mathcal{R}})\wedge
\Omega\pmb) =\pmb(\mathcal{L}\wedge \mathcal{L}\wedge \mathcal{L }\pmb)
-3d[\omega^a \wedge \pmb(\mathcal{L }\check \tau_a\pmb)+\omega^a \wedge
\omega^b \wedge \pmb( T_a \tilde T_b \pmb)].  \label{finaldensity}
\end{align}
In particular, if the connection is Abelian, 
\begin{align}
\pmb( \hat{\mathcal{L}}\wedge \hat{\mathcal{L}}\wedge \hat{\mathcal{L}}\pmb)-3\pmb( (\hat{\mathcal{L}}+\hat{\mathcal{R}})\wedge
\Omega\pmb) =\pmb(\mathcal{L}\wedge \mathcal{L}\wedge \mathcal{L }\pmb)
-3d[\omega^a \wedge \pmb(\mathcal{L }\check \tau_a\pmb)].
\end{align}
In this case $\Omega=d\omega$ and, if $H_2(\mathbb{S})=0$ so that $H^2(%
\mathbb{S})=0$, then $\Omega$ is exact and $\omega$ is well defined
everywhere on $\mathcal{S}$. Therefore, $3d[\omega^a \wedge \pmb(\mathcal{L }%
\check \tau_a\pmb)]$ is an exact form and Stokes theorem ensures that under
these hypotheses $\hat B=B$.
\end{proof}

\ 

Notice that with our conventions in (\ref{connessione}), we have to make the
identifications 
\begin{align}
\omega=-A, \qquad \Omega=-F, \qquad F=dA-\frac 12[A,A]\ .
\end{align}
From (\ref{finaldensity}) we then see that 
\begin{align}
\hat B=\frac 1{24\pi^2} \int_{\mathcal{S}} \hat \rho_B\ dr d\gamma d\phi ,
\end{align}
with 
\begin{align}
\hat \rho_B=\rho_B+3 \varepsilon^{ijk} \partial_i [A_j^a \mathrm{Tr}(%
\mathcal{L}_k (T_a+U^{-1} T_a U))-A^a_j A^b_k \mathrm{Tr} (T_a U^{-1} T_b
U)],  \label{BaryonicDensityGen}
\end{align}
where, according to the conventions in \cite{AlCaCaCe}, the orientation of
the coordinates is such that $\varepsilon^{r\gamma\phi}=1$. In our case $%
\mathcal{S}$ is a closed three dimensional manifold in a semisimple compact
Lie group $G$. Since in this case $H_2(G,\mathbb{Q})=0$, we get that the
correction to the density does not contribute to the integral and we expect $%
\hat B=B$ always. \newline
It is worth to mention that in the construction of the solutions of the
Skyrme equations, however, $\mathcal{S}$ is replaced by $\mathcal{V}$ that
is compact but it is not a closed smooth manifold but a hyperrectangle with
boundary. Therefore, the above integral does not define a topological
invariant unless we impose suitable boundary conditions. To understand which
are the most suitable ones, let us first analyze the case $A=0$. In this
case the map $U$ maps the hyperrectangle in a closed smooth submanifold of $G
$, so $\hat{\mathcal{L}}$ is the pull-back of a 1-form well defined on a
closed compact manifold (indeed, the left-invariant Maurer-Cartan form) and
this is the reason we get a topological invariant. This suggest the boundary
conditions we are looking for. They have to be imposed so that also $A$ is
the pull-back of a well defined 1-form over $G$ (or the image of the
hyperrectangle in $G$).\newline
Under these conditions, the quantities $A_{\mu}$ are not independent, due to
the fact that $\Phi$, $\Theta$ and $\chi$ defines a map $M:\mathbb{R}%
^{3+1}\mapsto\mathbb{R}^3$. Locally, the embedding takes the form $%
A=A_{\Phi}d\Phi+A_{\Theta}d\Theta+A_{\chi}d\chi$, equivalent to 
\begin{align}
A_{\mu}=A_{\Phi}\partial_{\mu}\Phi+A_{\Theta}\partial_{\mu}\Theta+A_{\chi}%
\partial_{\mu}\chi.  \label{DefCons}
\end{align}


\subsection{Example: Lasagna states coupled to an electromagnetic field}

\label{Sec:Boundaries} To be explicit, we now work out the example of
Lasagna states. For this case we choose $T=\kappa$. The covariant derivative
determines the coupling of the gauge field to the Skyrmions, which appears
in the definition of $\hat{\mathcal{L}}_{\mu}$ 
\begin{align}
\hat{\mathcal{L}}_{\mu}=e^{-\alpha \kappa}e^{-\xi \kappa}\left[%
\left(\partial_{\mu}\alpha-A_{\mu}\right)(\kappa-\hat{\kappa}%
)+\partial_{\mu}\xi(\kappa+\hat{\kappa})+\partial_{\mu}\chi f\right]e^{\xi
\kappa}e^{\alpha \kappa}.
\end{align}
Notice that the introduction of the gauge field in the direction $\kappa$
causes a shift in $\mathcal{L}_{\mu}$ given by $\partial_{\mu}\alpha%
\rightarrow\partial_{\mu}\alpha-A_{\mu}$. It results that all the quantities
we computed in the previous section are shifted by this quantity when the
Skyrmions are coupled to a Maxwell field and it is really easy to convert
the uncoupled theory with the coupled one. The covariant Baryon density
charge now becomes 
\begin{align}
\hat{\rho}_B&=\rho_B +3 \varepsilon^{ijk} \partial_i [A_j \mathrm{Tr}(%
\mathcal{L}_k (\kappa+U^{-1} \kappa U))] \cr & =\rho_B +3 \varepsilon^{ijk}
\partial_i \mathrm{Tr}\left[ \left(\partial_{k}\alpha (\kappa-\hat{\kappa}%
)+\partial_{k}\xi(\kappa+\hat{\kappa})+\partial_{k}\chi f\right)
(\kappa+\hat \kappa)\right].
\end{align}
Using that 
\begin{align}
\mathrm{Tr} [(\kappa-\hat \kappa)(\kappa+\hat \kappa)]=\mathrm{Tr}
[f(\kappa+\hat \kappa)]=0 ,
\end{align}
and 
\begin{align}
\mathrm{Tr} [(\kappa-\hat \kappa)(\kappa+\hat \kappa)]=2(1+\cos (b\chi)) 
\mathrm{Tr} \kappa^2=-4\|c\|^2 (1+\cos (b\chi)) ,
\end{align}
\begin{align}
\hat{\rho}_B&=\rho_B-12\|c\|^2\varepsilon^{ijk}\partial_i\left[%
A_j\partial_k\xi (1+\cos(b\chi))\right] ,
\end{align}
where $\rho_B$ is the uncoupled density. The correction to $\rho_B$ is a
total derivative, so it depends only on the boundary conditions, as
discussed above. Differently from \cite{CaWi}, our system lives in a box,
so, the electromagnetic field is not constrained to zero at the boundaries.
Therefore, the Baryonic charge is not necessarily a topological invariant
and not even expected to be an integer. As we said above, we can fix this
problem by requiring for $A$ to be the pull-back of a well defined potential
over the homology cycle of $G$ selected by the map $U$. This is easily
accomplished by looking at the form of the ansatz for the Lasagna states. As 
$t$ is irrelevant, we fix $t=0$ to simplify the expressions: 
\begin{align}
U(r,\gamma,\phi)=e^{-\phi\sigma \kappa} e^{\chi(r)f} e^{m\gamma
\kappa}=e^{-\phi\sigma \kappa} e^{m\gamma \hat \kappa(r)} e^{\chi(r)f}.
\end{align}
Since $b\chi(2\pi)=\pi$, we see that $\hat \kappa(0)=-\hat
\kappa(2\pi)=\kappa$, so that, if, for a generic fixed $r$, $%
U(r,\gamma,\phi) $ defines a two dimensional surface in $G$, for $r=0,2\pi$
it collapses down to one dimensional circles: 
\begin{align}
U(0,\gamma,\phi)&=e^{(m\gamma-\sigma\phi)\kappa}=e^{\xi\kappa}, \\
U(2\pi,\gamma,\phi)&=e^{-(m\gamma+\sigma\phi)\kappa}
e^{\chi(2\pi)f}=e^{-\alpha\kappa} e^{\chi(2\pi)f}.
\end{align}
This degeneration means that well defined 1-forms on the whole manifold must
have components only along the direction on the degeneration submanifolds,
which in our case means 
\begin{align}
A_\alpha (r=0)&=0, \\
A_\xi (r=2\pi)&=0,
\end{align}
which in the original coordinates becomes 
\begin{align}
\frac 1{2m} A_\gamma (r=0) +\frac 1{2\sigma} A_\phi(r=0)=&0,  \label{bound-0}
\\
\frac 1{2m} A_\gamma (r=2\pi) -\frac 1{2\sigma} A_\phi(r=2\pi)=&0.
\label{bound-2pi}
\end{align}
Also, one between $\phi$ and $\gamma$ has to be identified periodically,
while the other one is periodic or ``antiperiodic''\footnote{%
The antiperiodicity is not exact and in general some matrix components are
periodic and other are antiperiodic. However, what happens is that points
are identified in the image, in such a way to respect orientation, so the
corresponding differential forms are periodically identified.} according to
the cases if the cycle is of $SO(3)$ or $SU(2)$, respectively. Therefore, in
any case, the 1-forms in the image of the embedding have to be periodically
identified so that the integrals at the ``boundaries'' $\phi=0$ and $%
\phi=2\pi$ cancel out and the same happens for the boundaries at $\gamma=0$
and $\gamma=2\pi$. So, the only boundaries that may contribute are the ones
at $r=0$ and $r=2\pi$, which we collectively call $\partial^r B$. Therefore,
the Baryonic charge results 
\begin{align}
\hat{B}=&B-\frac{\|c\|^2}{2\pi^2}\int_{\partial B}(1+\cos(b\chi))A\wedge
d\xi=B+\frac{\|c\|^2}{2\pi^2}\int_{\partial^r B}(1+\cos(b\chi)) (\sigma
A_\gamma +mA_\phi) d\gamma \wedge d\phi \cr = & B-\frac{\|c\|^2}{\pi^2}
\int_{[0,2\pi]\times [0,2\pi]} (\sigma A_\gamma (0,\gamma,\phi) +mA_\phi
(0,\gamma,\phi)) d\gamma d\phi=B
\end{align}
because of the above boundary conditions, and we used that $A_\alpha=\sigma
A_\gamma +mA_\phi$.


\subsection{Decoupling of Skyrme equations and free-force conditions}

To the Skyrmion equation coupled to a Maxwell field, obtained by shifting $%
\partial_{\mu}\alpha\rightarrow\partial_{\mu}\alpha-A_{\mu}$ in (\ref%
{FirstSkyrme}), (\ref{SecondSkyrme}) and (\ref{ThirdSkyrme}), we have to add
the Maxwell equations, which are given by 
\begin{align}
\nabla_{\nu}F^{\nu\mu} - \mbox{Tr}\left\{\frac{K}{2}D\left(\hat{\mathcal{R}}
^{\mu}+\frac{\lambda}{4}\left[\hat{\mathcal{R}}_{\nu},\hat{G}^{\mu\nu}\right]%
\right)\right\}=0.
\end{align}
In the generic Euler parameterization, they become 
\begin{align}
&\partial_{\nu}\partial^{\nu}A_{\mu}-\partial_{\mu}\left(\partial_{\nu}%
\partial^{\nu}\alpha\right)-\frac{K}{2}\|c\|^2\Bigr\{ \left(\partial_{\mu}%
\alpha-A_{\mu}\right)\Bigr[8\sin^2\left(\frac{a\chi}{2}\right)\left(1+\frac{%
a^2\lambda}{4}\partial_{\nu}\chi\partial^{\nu}\chi\right) \\
&\qquad\qquad\qquad\qquad\qquad\qquad+2a^2\lambda\sin^2(a\chi)\partial_{\nu}%
\xi\partial^{\nu}\xi\Bigr]-2a^2\lambda\left[\partial_{\mu}\xi(\partial_{\nu}%
\alpha-A_{\nu})\partial^{\nu}\xi+\partial_{\mu}\chi(\partial_{\nu}\alpha-A_{%
\nu})\partial^{\nu}\chi\right]\Bigr\}=0.  \notag
\end{align}
To look for explicit solutions, we aim to decouple the Skyrme equations from
the Maxwell field. Since $A_{\mu}$ appears in the products $%
\left(\partial_{\mu}\alpha-A_{\mu}\right)\left(\partial^{\mu}\alpha-A^{\mu}%
\right)$, $\left(\partial_{\mu}\alpha-A_{\mu}\right)\partial^{\mu}\xi$ and $%
\left(\partial_{\mu}\alpha-A_{\mu}\right)\partial^{\mu}\chi$ and in the
derivative $\partial_{\mu}\left(\partial^{\mu}\alpha-A^{\mu}\right)$, we can
separate the Skyrme equations from the rest by looking for solutions where
these terms are a priori fixed functions 
\begin{gather}  \label{ProdCondGen}
\left(\partial_{\mu}\alpha-A_{\mu}\right)\left(\partial^{\mu}\alpha-A^{\mu}%
\right)=f(t,r,\theta,\phi),\qquad\partial_{\mu}\left(\partial^{\mu}%
\alpha-A^{\mu}\right)=g(t,r,\theta,\phi), \\
\left(\partial_{\mu}\alpha-A_{\mu}\right)\partial^{\mu}\xi=p(t,r,\theta,%
\phi),\qquad\left(\partial_{\mu}\alpha-A_{\mu}\right)\partial^{\mu}%
\chi=q(t,r,\theta,\phi).  \notag
\end{gather}
Recall that, the quantity $\partial_{\mu}\alpha-A_{\mu}$ is gauge invariant.


\subsubsection{Free-force conditions}

Due to gauge invariance, we can introduce the new gauge field $\tilde{A}%
_{\mu}=A_{\mu}-\partial_{\mu}\alpha$. Imposing the conditions 
\begin{align}
f(t,r,\theta,\phi)=g(t,r,\theta,\phi)=p(t,r,\theta,\phi)=q(t,r,\theta,%
\phi)=0\qquad\mbox{and}\qquad \tilde{A}^{\nu}\partial_{\nu}\tilde{A}_{\mu}=0,
\end{align}
the so called \textit{free-force conditions} are satisfied \cite{crystal2},
namely 
\begin{align}
\tilde{F}_{\mu\nu}J^{\nu}=0,\qquad \tilde{J}^{\nu}=\nabla_{\rho}\tilde{F}%
^{\rho\nu},
\end{align}
where $\tilde{F}_{\mu\nu}$ is the field-strength of $\tilde{A}_{\mu}$. The
wave equations become 
\begin{align}  \label{FirstSkyrmeFF}
&\partial_{\mu}\partial^{\mu}\chi \left[1+b^2\lambda\partial_{\nu}\xi%
\partial^{\nu}\xi\cos^2\left(\frac{b\chi}{2}\right)\right]%
+b\sin(b\chi)\left(1-\frac{b^2\lambda}{4}\partial_{\mu}\chi\partial^{\mu}%
\chi\right)\partial_{\nu}\xi\partial^{\nu}\xi  \notag \\
&\qquad-b^2\lambda\left\{\cos^2\left(\frac{b\chi}{2}\right)\partial_{\mu}%
\partial^{\mu}\xi\partial_{\nu}\xi\partial^{\nu}\chi+\cos^2\left(\frac{b\chi%
}{2}\right)\left[\partial_{\mu}\xi\partial^{\mu}\left(\partial_{\nu}\xi%
\partial^{\nu}\chi\right)-\partial_{\mu}\chi\partial^{\mu}\left(\partial_{%
\nu}\xi\partial^{\nu}\xi\right)\right]\right\}  \notag \\
&\qquad\qquad\qquad\qquad+\frac{b^3\lambda}{4}\sin(b\chi)\left(\partial_{%
\mu}\xi\partial^{\mu}\chi\right)^2=0,
\end{align}
\begin{align}  \label{SecondSkyrmeFF}
&4b\cos\left(\frac{b\chi}{2}\right)\partial_{\mu}\chi\partial^{\mu}\xi-4
\sin\left(\frac{b\chi}{2}\right)\Bigl\{ \partial_{\mu}\partial^{\mu}\xi\left[%
1+\frac{b^2\lambda}{4}\partial_{\nu}\chi\partial^{\nu}\chi\right]  \notag \\
&\qquad\qquad\qquad\qquad-\frac{b^2\lambda}{4}\left[\partial_{\mu}\partial^{%
\mu}\chi\partial_{\nu}\chi\partial^{\nu}\xi+\partial_{\mu}\chi\partial^{\mu}%
\left(\partial_{\nu}\chi\partial^{\nu}\xi\right)-\partial_{\mu}\xi\partial^{%
\mu}\left(\partial_{\nu}\chi\partial^{\nu}\chi\right)\right] \Bigr\}=0,
\end{align}
\begin{align}  \label{ThirdSkyrmeFF}
&4b\sin\left(\frac{b\chi}{2}\right)\partial_{\mu}\chi\partial^{\mu}\xi-4\cos%
\left(\frac{b\chi}{2}\right) \Bigl\{ \partial_{\mu}\partial^{\mu}\xi\left[1+%
\frac{b^2\lambda}{4}\partial_{\nu}\chi\partial^{\nu}\chi\right]  \notag \\
&\qquad\qquad\qquad\qquad-\frac{b^2\lambda}{4}\left[\partial_{\mu}\partial^{%
\mu}\chi\partial_{\nu}\chi\partial^{\nu}\xi+\partial_{\mu}\chi\partial^{\mu}%
\left(\partial_{\nu}\chi\partial^{\nu}\xi\right)-\partial_{\mu}\xi\partial^{%
\mu}\left(\partial_{\nu}\chi\partial^{\nu}\chi\right)\right] \Bigr\}=0.
\end{align}
The last two equations imply that 
\begin{align}
\partial_{\mu}\chi\partial^{\mu}\xi=0,
\end{align}
so 
\begin{align}
\partial_{\mu}\partial^{\mu}\xi\left[1+\frac{b^2\lambda}{4}%
\partial_{\nu}\chi\partial^{\nu}\chi\right]=0
\end{align}
gives $\partial_{\mu}\partial^{\mu}\xi=0$. With these solutions, the Eq. (%
\ref{FirstSkyrmeFF}) takes the simpler form 
\begin{align}  \label{FirstSkyrmeFFSol}
&\partial_{\mu}\partial^{\mu}\chi \left[1+b^2\lambda\partial_{\nu}\xi%
\partial^{\nu}\xi\cos^2\left(\frac{b\chi}{2}\right)\right]%
+b\sin(b\chi)\left(1-\frac{b^2\lambda}{4}\partial_{\mu}\chi\partial^{\mu}%
\chi\right)\partial_{\nu}\xi\partial^{\nu}\xi=0.
\end{align}
We can also apply them to the Maxwell equations 
\begin{align}\label{EulerMaxwellDecoupled}
&\partial_{\nu}\partial^{\nu}A_{\mu}-\partial_{\mu}\left(\partial_{\nu}%
\partial^{\nu}\alpha\right)+\frac{K}{2}\|c\|^2
\left(\partial_{\mu}\alpha-A_{\mu}\right)\Bigr[8\sin^2\left(\frac{b\chi}{2}%
\right)\left(1+\frac{b^2\lambda}{4}\partial_{\nu}\chi\partial^{\nu}\chi%
\right) \\
&\qquad\qquad\qquad\qquad\qquad\qquad+2b^2\lambda\sin^2(b\chi)\partial_{\nu}%
\xi\partial^{\nu}\xi\Bigr]=0.  \notag
\end{align}
In particular, the energy density is 
\begin{align}\label{FFEnergyDensity}
\rho_E&=4K\|c\|^2\Bigr\{ \left[\tilde{A}_t^2\sin^2\left(\frac{b\chi}{2}%
\right)+(\partial_{t}\xi)^2\cos^2\left(\frac{b\chi}{2}\right)\right]\left(1-%
\frac{b^2\lambda}{4}\partial_{\rho}\chi\partial^{\rho}\chi\right)+\frac{%
b^2\lambda}{4}\tilde{A}_t^2\partial_{\rho}\xi\partial^{\rho}\xi\sin^2(b\chi)
\\
&+\left[1-b^2\lambda\partial_{\rho}\xi\partial^{\rho}\xi\cos^2\left(\frac{%
b\chi}{2}\right)\right]\frac{(\partial_{t}\chi)^2}{4}-\frac{g_{tt}}{2}%
\partial_{\rho}\xi\partial^{\rho}\xi\cos^2\left(\frac{b\chi}{2}%
\right)\left(1+\frac{b^2\lambda}{4}\partial_{\sigma}\chi\partial^{\sigma}%
\chi\right)-\frac{g_{tt}}{8}\partial_{\rho}\chi\partial^{\rho}\chi \Bigl\} 
\notag
\end{align}


\subsection{Example: Lasagna, again}

We can use the results of this section in order to study the behavior of
Lasagna when coupled to the $U(1)$ gauge field. To simplify the results, we
use the free-force conditions 
\begin{gather}  \label{FFLasagne}
\left(\partial_{\mu}\alpha-A_{\mu}\right)\left(\partial^{\mu}\alpha-A^{\mu}%
\right)=0,\qquad\partial_{\mu}\left(\partial^{\mu}\alpha-A^{\mu}\right)=0,%
\cr \left(\partial_{\mu}\alpha-A_{\mu}\right)\partial^{\mu}\xi=0,\qquad%
\left(\partial_{\mu}\alpha-A_{\mu}\right)\partial^{\mu}\chi=0,
\end{gather}
and for the gauge field we make the ansatz 
\begin{align}
A_{\mu}=\left(A_{t}(r),0,A_{\gamma}(r),A_{\phi}(r)\right).  \label{ansatzAmu}
\end{align}
If we simply shift the gauge field, we can write 
\begin{gather}  \label{FFLasagneShifted}
\tilde{A}_{\mu}\tilde{A}^{\mu}=0,\qquad\partial_{\mu}\tilde{A}%
^{\mu}=0,\qquad \tilde{A}_{\mu}\partial^{\mu}\xi=0,\qquad \tilde{A}%
_{\mu}\partial^{\mu}\chi=0.
\end{gather}
These conditions are easily solved by using (\ref{lasagnansatz}) (which also
apply to $\tilde{A}_{\mu}$) together with (\ref{ansatzAmu}). This leads to
the solution $\tilde{A}_t=-\frac{\tilde{A}_{\phi}}{L_\phi}$ and $\tilde{A}%
_{\gamma}=0$ ($A_\gamma=\frac{m}{2}$); so, only one gauge field results to
be independent, for instance we can take $\tilde A_{\phi}$. Thus, the wave
equations and the Maxwell equation become 
\begin{align}  \label{MaxwellFFSolLasagne}
\frac{\chi^{\prime \prime }}{L_r^2} \left[1+\frac{b^2\lambda}{4L_{\gamma}^2}%
\cos^2\left(\frac{b\chi}{2}\right)\right]+\frac{b}{4L_{\gamma}^2}%
\sin(b\chi)\left(1-\frac{b^2\lambda}{4L_r^2}\chi^{\prime 2}\right)=0, \\
\frac{\tilde{A}_{\phi}^{\prime \prime }}{L_r^2}+\frac{K}{2}\|c\|^2 \tilde{A}%
_{\phi}\Bigr[8\sin^2\left(\frac{b\chi}{2}\right)\left(1+\frac{b^2\lambda}{%
4L_r^2}\chi^{\prime 2}\right)+\frac{b^2\lambda}{2L_{\gamma}^2}\sin^2(b\chi)%
\Bigr]=0.
\end{align}
The first equation can be rewritten as 
\begin{align}
\frac{d}{dr}\left\{ \frac{\chi^{\prime 2}}{2L_r^2}\left[1+\frac{b^2\lambda}{%
4L_{\gamma}^2}\cos^2\left(\frac{b\chi}{2}\right)\right]-\frac{1}{%
2L_{\gamma}^2}\cos^2\left(\frac{b\chi}{2}\right) \right\}=0,
\end{align}
so that 
\begin{align}
\chi^{\prime 2}(r)=L_r^2\frac{M+\frac{1}{2L_{\gamma}^2}\cos^2\left(\frac{%
b\chi}{2}\right)}{1+\frac{b^2\lambda}{4L_{\gamma}^2}\cos^2\left(\frac{b\chi}{%
2}\right)},
\end{align}
where $M$ is an integration constant. This determines the boundary values of 
$\chi^{\prime }$ from the ones of $\chi$ 
\begin{align}
\chi^{\prime 2}(0)&=L_r^2\frac{M+\frac{1}{2L_{\gamma}^2}}{1+\frac{b^2\lambda%
}{4L_{\gamma}^2}},  \label{ChiPQ} \\
\chi^{\prime 2}(2\pi)&=L_r^2M.
\end{align}
Vice versa, we can write $M$ in terms of $\chi^{\prime }(0)$ 
\begin{align}
M=\frac{\chi^{\prime 2}(0)}{L_r^2}\left(1+\frac{b^2\lambda}{4L_{\gamma}^2}%
\right)-\frac{1}{2L_{\gamma}^2},
\end{align}
so that 
\begin{align}
\chi^{\prime 2}(2\pi)=\chi^{\prime 2}(0)\left(1+\frac{b^2\lambda}{%
4L_{\gamma}^2}\right)-\frac{L_r^2}{2L_{\gamma}^2}.
\end{align}
We can put this result into the Maxwell equations, getting 
\begin{align}
\frac{\tilde{A}_{\phi}^{\prime \prime }}{L_r^2}+\frac{K}{2}\|c\|^2 \tilde{A}%
_{\phi}\sin^2\left(\frac{b\chi}{2}\right)V_M(\chi)=0,
\label{MaxwellFFLasagneVM}
\end{align}
where 
\begin{align}
V_M(\chi)=8\left[1+\frac{b^2\lambda}{4L_{\gamma}^2}\cos^2\left(\frac{b\chi}{2%
}\right)\right]+2b^2\lambda\frac{M+\frac{1}{2L_{\gamma}^2}\cos^2\left(\frac{%
b\chi}{2}\right)}{1+\frac{b^2\lambda}{4L_{\gamma}^2}\cos^2\left(\frac{b\chi}{%
2}\right)}.
\end{align}
From (\ref{ChiPQ}) we can locally write $r$ in terms of $\chi$ as 
\begin{align}
r(\chi)=\pm\frac{1}{L_r}\int_{0}^{\chi}\sqrt{\frac{1+\frac{b^2\lambda}{%
4L_{\gamma}^2}\cos^2\left(\frac{b\hat{\chi}}{2}\right)}{M+\frac{1}{%
2L_{\gamma}^2}\cos^2\left(\frac{b\hat{\chi}}{2}\right)}}d\hat{\chi},
\end{align}
which leads to a definition of $\tilde{A}_{\phi}$ in terms of $\chi$, let us
call $B(\chi)=\tilde{A}_{\phi}(r(\chi))$. This way, equation (\ref%
{MaxwellFFLasagneVM}) can be entirely written in terms of $\chi$ 
\begin{align}  \label{MaxwellFFLasagneB}
B^{\prime \prime }(\chi)-\frac{b^2\lambda}{16L_{\gamma}^2}B^{\prime
}(\chi)\sin(b\chi)\frac{1-\hat{M}+D_1(\chi)}{D_1(\chi)D_{\hat{M}}(\chi)}%
+K\|c\|^2b^2\lambda B(\chi)\sin^2\left(\frac{b\chi}{2}\right)\frac{%
2D_1^2(\chi)+D_{\hat{M}}(\chi)}{D_{\hat{M}}(\chi)}=0,
\end{align}
where a prime indicates derivative with respect to $\chi$, and the following
quantities have been introduced 
\begin{align}
\hat{M}=M\frac{b^2\lambda}{2},\qquad D_a=a+\frac{b^2\lambda}{4L_{\gamma}^2}%
\cos^2\left(\frac{b\chi}{2}\right).
\end{align}
Replacing 
\begin{align}
B(\chi)=C(\chi)\exp\int_0^{\chi}\frac{b^2\lambda}{32L_{\gamma}^2}\sin(b\hat{%
\chi})\frac{1-\hat{M}+D_1(\hat{\chi})}{D_1(\hat{\chi})D_{\hat{M}}(\hat{\chi})%
}d\hat{\chi},
\end{align}
in (\ref{MaxwellFFLasagneB}), we get 
\begin{align}  \label{MaxwellFFLasagneHILL}
C^{\prime \prime }(\chi)+W_{\hat{M}}(\chi)C(\chi)=0,
\end{align}
with 
\begin{align}  \label{MaxwellFFLasagneHILLPotential}
W_{\hat{M}}(\chi)=&\frac{b^2\lambda}{32L_{\gamma}^2}\left[\cos(b\chi)\frac{1-%
\hat{M}+D_1(\hat{\chi})}{D_1(\hat{\chi})D_{\hat{M}}(\hat{\chi})} +\frac{%
b\lambda}{8L_{\gamma}^2}\sin^2(b\chi)\frac{1}{D_{\hat{M}}^2}\left(1+\frac{1-%
\hat{M}}{D_1}\right)\left(b+1+\frac{1-\hat{M}}{D_1}\right)\right]  \notag \\
&+K\|c\|^2b^2\lambda \sin^2\left(\frac{b\chi}{2}\right)\frac{2D_1^2(\chi)+D_{%
\hat{M}}(\chi)}{D_{\hat{M}}(\chi)}.
\end{align}
We can use 
\begin{align}
\frac{1}{D_a}=\frac{1}{a}\sum_{n=0}^{\infty}\left[-\frac{b^2\lambda}{%
4aL_{\gamma}^2}\cos^2\left(\frac{b\chi}{2}\right)\right]^n,
\end{align}
and 
\begin{align}
\cos^{2n}(x)=2^{-2n}\left\{ \frac{(2n)!}{(n!)^2}+2\sum_{s=0}^{n-1}{\binom{{2n%
}}{s}}\cos\left[2(n-s)x\right] \right\}.
\end{align}
to rewrite the potential (\ref{MaxwellFFLasagneHILLPotential}) in series of $%
\cos(kb\chi)$, where $k$ runs on all integers, and recognize (\ref%
{MaxwellFFLasagneHILL}) as a Hill equation. Notice that putting 
\begin{align}
x=b\chi/2,
\end{align}
we have that $x$ must vary in the interval $[0,\pi/2]$. Also, we have seen
that for the Lasagna states such interval must be one quarter of the period
over the cycle, which means that the solution we are looking for must be
periodic with period $2\pi$ as a function of $x$. Therefore, it is worth
mentioning the following result \cite{Magnus} (see also \cite{Schafke}):

\begin{prop}
Let $y_1(x)$ and $y_2(x)$ be the solutions of the Hill equation 
\begin{align}
y^{\prime \prime }+f(x)y=0,  \label{Hill1}
\end{align}
where $f(x)$ is an even function of the form 
\begin{align}
f(x)=2\sum_{n=1}^\infty f_n \cos (2n x), \qquad\ \sum_{n=1}^\infty
|f_n|<\infty,
\end{align}
with Cauchy conditions 
\begin{align}
y_1(0) =&1, \qquad\ y^{\prime }_1(0)=0, \\
y_2(0)=&0, \qquad\ y^{\prime }_2(0)=0.
\end{align}
Then, equation (\ref{Hill1}) has:

\begin{enumerate}
\item \label{a1} an even solution of period $\pi$ if and only if $y^{\prime
}_1(\pi/2)=0$;

\item \label{a2} an odd solution of period $\pi$ if and only if $%
y_2(\pi/2)=0 $;

\item \label{a3} an even solution of period $2\pi$ if and only if $%
y_1(\pi/2)=0$;

\item \label{a4} an odd solution of period $2\pi$ if and only if $y^{\prime
}_2(\pi/2)=0$.
\end{enumerate}
\end{prop}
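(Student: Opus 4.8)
The plan is to exploit the two reflection symmetries that the hypotheses force on the equation and then to read off everything from the $2\times 2$ monodromy matrix over one period. First I note that the stated Cauchy data for $y_2$ must be $y_2(0)=0,\ y_2'(0)=1$ (otherwise $y_2\equiv 0$); with this, $y_1,y_2$ form the standard normalized fundamental system with Wronskian $y_1y_2'-y_1'y_2\equiv 1$. The absolute convergence $\sum|f_n|<\infty$ guarantees that $f$ is continuous, so solutions exist, are unique, and depend smoothly on their Cauchy data. Two symmetries are decisive: (i) $f$ is even, and (ii) since $f$ is a cosine series in $2nx$ it is $\pi$-periodic, and one checks $f(\pi/2+x)=2\sum_n f_n(-1)^n\cos(2nx)=f(\pi/2-x)$, so $f$ is also even about $x=\pi/2$. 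From (i), if $y$ solves \eqref{Hill1} so does $y(-x)$; matching Cauchy data at $0$ and invoking uniqueness shows $y_1$ is even and $y_2$ is odd. Consequently the even solutions are exactly the multiples of $y_1$ and the odd solutions exactly the multiples of $y_2$.

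Next I set up Floquet theory over the period $\pi$. Let $M$ be the monodromy matrix sending $(y(0),y'(0))$ to $(y(\pi),y'(\pi))$; in the chosen basis $M=\begin{pmatrix} y_1(\pi)&y_2(\pi)\\ y_1'(\pi)&y_2'(\pi)\end{pmatrix}$ with $\det M=1$. A solution with data $\mathbf c=(c_1,c_2)$ has period $\pi$ iff $M\mathbf c=\mathbf c$, and has period $2\pi$ without period $\pi$ iff it is antiperiodic, $y(x+\pi)=-y(x)$, i.e. $M\mathbf c=-\mathbf c$; indeed, since the eigenvalues $\rho,1/\rho$ multiply to $1$, $M^2$ fixes a nonzero vector exactly when $\rho=\pm 1$. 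Thus the four assertions reduce to deciding when $M$ has eigenvalue $+1$ or $-1$ and pairing the corresponding eigenvector with the parity decomposition above (items (3)--(4) being the antiperiodic, genuinely period-$2\pi$ case).

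The key computational step is to express $M$ through the half-period data $p:=y_1(\pi/2)$, $p':=y_1'(\pi/2)$, $q:=y_2(\pi/2)$, $q':=y_2'(\pi/2)$ using symmetry (ii). Writing $N=\begin{pmatrix} p&q\\ p'&q'\end{pmatrix}$ for the transfer over $[0,\pi/2]$ and using that $\hat y(x):=y(\pi-x)$ solves \eqref{Hill1} (because $f(\pi-x)=f(x)$), a short manipulation with $R=\mathrm{diag}(1,-1)$ yields the transfer over $[\pi/2,\pi]$ as $P=RN^{-1}R$, whence $M=PN=RN^{-1}RN$. Using $\det N=pq'-p'q=1$ one finds the clean form $M=\begin{pmatrix} pq'+p'q & 2qq'\\ 2pp' & pq'+p'q\end{pmatrix}$, so that $\det(M-I)=-4p'q$ and $\det(M+I)=4pq'$. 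Hence eigenvalue $+1$ occurs iff $p'=0$ or $q=0$, and eigenvalue $-1$ iff $p=0$ or $q'=0$. To separate parities I check which fundamental solution is the eigenvector: the first column of $M$ is $(1,0)^{\mathsf T}$ exactly when $p'=0$ (so $y_1$ has period $\pi$) and the second column is $(0,1)^{\mathsf T}$ exactly when $q=0$ (so $y_2$ has period $\pi$); likewise the first column is $(-1,0)^{\mathsf T}$ iff $p=0$ (so $y_1$ is antiperiodic) and the second is $(0,-1)^{\mathsf T}$ iff $q'=0$ (so $y_2$ is antiperiodic). Recalling that \emph{even} $=$ multiple of $y_1$ and \emph{odd} $=$ multiple of $y_2$, these are precisely items (1)--(4).

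I expect the only delicate part to be the reflection bookkeeping giving $P=RN^{-1}R$: one must track the sign flip of the derivative under $x\mapsto\pi-x$ together with the reversal of the propagation direction, since an error there corrupts every entry of $M$. Everything else—the parity of $y_1,y_2$, the Floquet dichotomy, and the final reading-off—is then routine, and the hypothesis $\sum|f_n|<\infty$ enters only to legitimize existence, uniqueness, and the smooth dependence on data underlying the transfer-matrix manipulations.
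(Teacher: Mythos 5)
Your proof is correct. There is, however, no internal argument in the paper to compare it with: the paper does not prove this proposition at all, but quotes it from the literature on Hill's equation (\cite{Magnus}, see also \cite{Schafke}), and what you have reconstructed blind is essentially the classical symmetry proof one finds there (and in Magnus--Winkler's book). A few points are worth recording. First, you rightly repaired the typo in the statement: the Cauchy data for $y_2$ must be $y_2(0)=0$, $y_2'(0)=1$, otherwise $y_2\equiv 0$. Second, your reflection bookkeeping checks out: evenness of $f$ about $\pi/2$ follows from evenness at $0$ plus $\pi$-periodicity, the substitution $x\mapsto\pi-x$ (which flips the sign of the derivative and reverses propagation) indeed gives $P=RN^{-1}R$ with $R=\mathrm{diag}(1,-1)$, and hence $M=RN^{-1}RN=\begin{pmatrix} pq'+p'q & 2qq'\\ 2pp' & pq'+p'q\end{pmatrix}$, which recovers the classical identities $y_1(\pi)=y_2'(\pi)$, $y_2(\pi)=2y_2(\tfrac\pi2)y_2'(\tfrac\pi2)$, $y_1'(\pi)=2y_1(\tfrac\pi2)y_1'(\tfrac\pi2)$; combined with the Wronskian $pq'-p'q=1$, the column conditions $Me_1=\pm e_1$ and $Me_2=\pm e_2$ reduce exactly to $p'=0$, $p=0$, $q=0$, $q'=0$ (the degenerate coincidences such as $p=p'=0$ being excluded by uniqueness), which are items (1)--(4). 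Third, your dichotomy \textquotedblleft period $2\pi$ without period $\pi$ iff antiperiodic\textquotedblright\ is legitimate for a sharper reason than you state: since $\det M=1$, the eigenvalues $+1$ and $-1$ cannot occur simultaneously, so $M^2c=c$ forces $Mc=\pm c$ and no mixed periodic-plus-antiperiodic case can arise; this also makes explicit the intended (and standard) reading of \textquotedblleft period $2\pi$\textquotedblright\ in items (3)--(4) as the semi-periodic case, since a period-$\pi$ solution trivially also has period $2\pi$. In short, your argument is a complete and correct proof of a result the paper only cites, and it is the natural transfer-matrix route by which that cited result is established.
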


Now, the boundary conditions for our Hill equation (\ref{MaxwellFFLasagneVM}%
) are 
\begin{align}
\tilde A_\phi (0)=&-m, \\
\tilde A_\phi (2\pi)=&0.
\end{align}
Therefore, if we call $y_{(j)}$, $j=1,\ldots,4$ the solutions corresponding
to the four points of the above proposition, if they exist, we get that the
solutions of interest for us have the general form 
\begin{align}
\tilde A_\phi(r)= B(\chi)= -m y_{(3)}(x) + \kappa y_{(2)}(x),
\end{align}
for $\kappa$ an arbitrary constant. The question on the existence of such
solutions is investigated in \cite{Magnus}. For example, the existence of
solution $y_{(2)}$ is granted if and only if $\omega$ takes values for which
the determinant of the infinite dimensional matrix 
\begin{align}
M_{ab}=\delta_{ab}+\frac {f_{a-b}-f_{a+b}}{\omega^2-a^2}, \qquad\
a,b=1,2,3,\ldots,
\end{align}
vanishes.\footnote{%
We use the notation $f_{a-b}=0$ if $a-b\leq 0$.} However, for any practical
purposes, such a way is impracticable for looking for explicit solutions in
this very general case. Therefore, in place of pursuing this very general
analysis, we move now to a particular but more tractable case.


\subsubsection{Linear solution of the Skyrme equations}

In the particular case when $M=\frac{2}{b^2\lambda}$, we can find a very
simple solution: 
\begin{align}
\chi^{\prime 2}(r)=\frac{2L_r^2}{b^2\lambda}\qquad\Rightarrow\qquad\chi(r)=%
\pm\sqrt{\frac{2L_r^2}{b^2\lambda}}r,
\end{align}
paying the price of fixing $\frac{L_r^2}{\lambda}$ (which corresponds to $%
\hat{M}=1$). Indeed, the boundary conditions on $\chi$ give $\chi^2(2\pi)=%
\frac{2L_r^2}{b^2\lambda}4\pi^2=\frac{\pi^2}{b^2}$. The Maxwell equation
takes the form 
\begin{align}  \label{Whittaker-Hill}
\frac{\tilde{A}_{\phi}^{\prime \prime }}{L_r^2}+K\|c\|^2 \tilde{A}_{\phi}%
\left[\left(3+\frac{b^2\lambda}{8L_{\gamma}^2}\right)- 3\cos\left(\frac{r}{2}%
\right)-\frac{b^2\lambda}{8L_{\gamma}^2}\cos r\right]=0,
\end{align}
which is a Whittaker-Hill equation \cite{Arscott67,Whittaker-Hill}, see also 
\cite{ArscottPeriodicEq}. It is convenient to introduce the variable change $%
\hat{r}=\frac{r}{4}$, so $\hat r$ has range $0\leq \hat r\leq\frac{\pi}{2}$.
This way, equation (\ref{Whittaker-Hill}) takes the canonical form 
\begin{align}
\tilde{B}_{\phi}^{\prime \prime }+16K L_r^2 \|c\|^2 \left[\left(3+\frac{%
b^2\lambda}{8L_{\gamma}^2}\right)- 3\cos\left(2\hat{r}\right)-\frac{%
b^2\lambda}{8L_{\gamma}^2}\cos(4\hat{r})\right] \tilde{B}_{\phi}=0.
\label{Whittaker-Hill_C}
\end{align}
We can therefore determine the solutions $y_{(2)}$ and $y_{(3)}$, following 
\cite{Whittaker-Hill}. Using the same notations of that paper, we can
identify 
\begin{align}
\omega=& 4 \sqrt {K\lambda} L_r \|c\| \frac{b}{L_{\gamma}}, \\
\eta=& 48 K L_r^2 \|c\|^2, \\
\rho=& -12 \sqrt {\frac K\lambda} L_r L_\gamma \frac {\|c\|}b.
\end{align}
In particular, $\eta=-\omega \rho$. For the function $\phi_{(3)}$ we have to
take (see \cite{Whittaker-Hill}) 
\begin{align}
\phi_{(3)}(x)=Re \left[ e^{-\frac i2 \omega \cos (2x)} \psi_{(3)} (x) \right]%
,
\end{align}
where 
\begin{align}
\psi_{(3)}(x)=\sum_{n=0}^\infty A_n B_n \cos ((2n+1)x),
\end{align}
with 
\begin{align}
A_0=1, \qquad\ A_n=\prod_{j=1}^n (\rho+2ji), \quad j>0,
\end{align}
and $B_n$ solves the recursion relations 
\begin{align}
(2-\eta) B_0-\frac 1\omega (\eta^2+4\omega^2) B_1=&0, \\
-\omega B_n+2[(2n+1)^2-\eta] B_{n+1} -\frac 1\omega [\eta^2+4(n+1)^2]
B_{n+2}=&0, \quad n\geq 1.
\end{align}
To find the periodic solution of period $2\pi$, $\omega$ and $\eta$ must be
constrained by the following trascendental equation, expressed in terms of a
continued fraction (see \cite{Whittaker-Hill}, formula (5.1)) 
\begin{align}
1-\frac 12\eta=\frac {\frac 14 (\eta^2+4\omega^2)}{9-\eta-} \frac {\frac 14
(\eta^2+16\omega^2)}{25-\eta-}\frac {\frac 14 (\eta^2+36 \omega^2)}{49-\eta-}%
\cdots,  \label{trasc1}
\end{align}
which then gives the solution 
\begin{align}
\frac {B_{n}}{B_{n-1}}=\frac {\frac 12 \omega}{(2n+1)^2-\eta} \frac {\frac
14 [\eta^2+4(n+1)^2\omega^2]}{(2n+3)^2-\eta-}\cdots \frac {\frac 14
[\eta^2+4(n+s-1)^2\omega^2]}{(2n+2s-1)^2-\eta-}\cdots, \quad n\geq 1.
\end{align}
Finally, $B_0$ is fixed by the condition $\phi_{(3)}(0)=1$. \newline
As what concerns the solution $\kappa \phi_{(2)}(x)$, we have to consider 
\begin{align}
\kappa \phi_{(2)}(x)=Re \left[ e^{-\frac i2 \omega \cos (2x)} \psi_{(2)} (x) %
\right],
\end{align}
where 
\begin{align}
\psi_{(2)}(x)=\sum_{n=1}^\infty C_n D_n \sin (2nx),
\end{align}
with 
\begin{align}
C_1=1, \qquad\ C_n=\prod_{j=1}^{n-1} (\rho+(2j+1)i), \quad j>1,
\end{align}
and $D_n$ solves the recursion relations 
\begin{align}
(4-\eta) D_1-\frac 1{2\omega} (\eta^2+9\omega^2) D_2=&0, \\
-\omega D_{n-1}+2[4n^2-\eta] D_{n} -\frac 1\omega [\eta^2+(2n+1)^2]
D_{n+1}=&0, \quad n> 2.
\end{align}
To find the periodic solution of period $\pi$, $\omega$ and $\eta$ must be
constrained by the following trascendental equation, expressed in terms of a
continued fraction (see \cite{Whittaker-Hill}, formula (5.3)) 
\begin{align}
4-\eta=\frac {\frac 14 (\eta^2+9\omega^2)}{16-\eta-} \frac {\frac 14
(\eta^2+25\omega^2)}{36-\eta-}\frac {\frac 14 (\eta^2+49 \omega^2)}{64-\eta-}%
\cdots,  \label{trasc2}
\end{align}
which then gives the solution 
\begin{align}
\frac {D_{n}}{D_{n-1}}=\frac {\frac 12 \omega}{4n^2-\eta} \frac {\frac 14
[\eta^2+(2n+1)^2\omega^2]}{4(n+1)^2-\eta-}\cdots \frac {\frac 14
[\eta^2+(2n+2s-1)^2\omega^2]}{4(n+s-1)^2-\eta-}\cdots, \quad n\geq 2.
\end{align}
Since $\kappa$ is arbitrary, no normalization is required for $D_1$.
However, notice that $\kappa \phi_{(2)}(x)$ can be considered only if
equation (\ref{trasc2}) has common solutions with (\ref{trasc1}).


\subsection{Example: Spaghetti states coupled to an electromagnetic field}

In the case of Spaghetti, we do not use a parameterization of Euler type but
the exponential parameterization of Section \ref{SpaghettiGroup}. Still, the
analysis can be easily extended to this case. Following \cite{FSpaghettiMax}%
, the gauge field is described by 
\begin{align}
A_{\mu}^aT_a=A_{\mu}T_3.  \label{AmuT3}
\end{align}
Also in this case, the free-force conditions decouples the Skyrme equations
from the Maxwell field. In particular, we take 
\begin{align}  \label{FFSpaghetti}
A_{\mu}A^{\mu}=0,\qquad\partial_{\mu}A^{\mu}=0\qquad
A_{\mu}\partial^{\mu}\Phi=0,\qquad A_{\mu}\partial^{\mu}\Theta=0.
\end{align}
A reasonable explicit form of a gauge field with these properties is given
by 
\begin{align}
A_{\mu}(t,r,\theta,\phi)=(A_t(r,\theta),0,0,-L_\phi A_t(r,\theta)).
\end{align}
From (\ref{AmuT3}) we see that the equations of motion for the Skyrme field
are 
\begin{align}
D^{\mu}\left(\hat{\mathcal{L}}_{\mu}+\frac{\lambda}{4}\hat{\mathcal{G}}%
_{\mu}\right)=0,
\end{align}
where $\hat {\mathcal{L}}_\mu$ is given by (\ref{connessione}) with $T=T_3$,
and 
\begin{align}
\hat{\mathcal{G}}_{\mu}=\left[\hat{\mathcal{L}}^{\nu},\left[\hat{\mathcal{L}}%
_{\mu},\hat{\mathcal{L}}_{\nu}\right]\right].
\end{align}
Conditions (\ref{FFSpaghetti}) lead to 
\begin{align}
\partial^{\mu}\left(\mathcal{L}_{\mu}+\frac{\lambda}{4}\mathcal{G}%
_{\mu}\right)=0,
\end{align}
which are the uncoupled Skyrme equations. Notice that from (\ref%
{SkyrmeEqsSpaghetti}) we get 
\begin{align}
\frac{d}{dr}\left[2\chi^{\prime 2}\left(\lambda q^2\sin^2\left(\frac{\chi}{2}%
\right)+L_\theta^2\right)+4q^2L_r^2\cos\chi\right]=0,
\end{align}
which means that 
\begin{align}
2\chi^{\prime 2}\left(\lambda q^2\sin^2\left(\frac{\chi}{2}%
\right)+L_\theta^2\right)+4q^2L_r^2\cos\chi=2Z,
\end{align}
where $Z$ is a constant, which is equivalent to 
\begin{align}
\chi^{\prime 2}(r)=\frac{Z-2q^2L_r^2+4q^2L_r^2\sin^2\left(\frac{\chi}{2}%
\right)}{L_\theta^2+\lambda q^2\sin^2\left(\frac{\chi}{2}\right)}.
\end{align}
The Maxwell equations for $A^{\phi}=-L_{\phi}A_t$ are 
\begin{align}\label{SpaghettiMaxwellDecoupled}
\frac{1}{L_r^2}\partial_r^2A^{\phi}+\frac{1}{L_\theta^2}\partial_\theta^2A^{%
\phi}+2KI_{G,\rho}\sin^2(q\theta)\sin^2\left(\frac{\chi}{2}%
\right)\left(A^{\phi}-\frac{p}{L_{\phi}^2}\right)\left[1+\frac{\chi^{\prime
2}}{L_r^2}+\frac{4q^2}{L_\theta^2}\sin^2\left(\frac{\chi}{2}\right) \right]%
=0.
\end{align}
This is a stationary Schr\"odinger equation with a double periodic potential
of finite type. In particular, here one is interested in the zero eigenvalue
case. Both the direct and inverse problem for this kind of equation is well
studied and much more involuted than the one dimensional case (already
highly non-trivial). Here we simply defer the reader to the literature (see 
\cite{Lubcke} and reference therein), and limit ourselves to discuss the
boundary conditions.\newline
As we discussed in the previous sections, the boundary conditions on $%
A_{\mu} $ are outlined by the behavior of the Skyrme field in the edges of
the box, namely (once again, we fix $t=0$) 
\begin{align}
U(0,\theta,\phi)&=1, \\
U(2\pi,\theta,\phi)&=e^{2\pi\tau_1}, \\
U(r,0,\phi)&=e^{\chi T_3}, \\
U(r,\pi,\phi)&=e^{\chi T_3}.
\end{align}
This requires the following constraints 
\begin{align}
A_{\phi}(0,\theta)=A_{\phi}(r,0)=A_{\phi}(r,\pi)=0.
\end{align}
The contribution of the gauge field to the Baryonic density can be always
computed from (\ref{BaryonicDensityGen}). This gives 
\begin{align}
\hat{\rho}_B=\rho_B+3\partial_{\theta}\left[A_{\phi}\mbox{Tr}\left(\mathcal{L%
}_r\left(T_3+U^{-1}T_3U\right)\right)\right]-3\partial_{r}\left[A_{\phi}%
\mbox{Tr}\left(\mathcal{L}_\theta\left(T_3+U^{-1}T_3U\right)\right)\right],
\end{align}
where $\mathcal{L}_r$ and $\mathcal{L}_\theta$ are specified in (\ref{MLr})
and (\ref{MLthet}). We easily find 
\begin{align}
U^{-1}T_3U&=T_3+\sin(q\theta)\tau_2-\sin(q\theta)\cos\chi\tau_2+\sin(q%
\theta)\sin\chi\tau_3.
\end{align}
This leads to 
\begin{align}
\mbox{Tr}(\mathcal{L}_\theta T_3)&=\mbox{Tr}(\mathcal{L}_\theta
U^{-1}T_3U)=I_{G,\rho}q\sin\chi\sin(q\theta), \\
\mbox{Tr}(\mathcal{L}_r T_3)&=\mbox{Tr}(\mathcal{L}_r
U^{-1}T_3U)=-I_{G,\rho}\chi^{\prime }\cos(q\theta).
\end{align}
Thus, we can check that the contribution to the Baryon charge becomes 
\begin{align}
\hat{B}=B+\frac{I_{G,\rho}}{4\pi}\int_0^{2\pi}\left(A_\phi(r,\pi)+A_%
\phi(r,0)\right)\chi^{\prime }dr=B,
\end{align}
according to the boundary conditions specified above, and the general
results following Proposition \ref{prop4}.




\section*{Acknowledgments}

{F. C. has been funded by Fondecyt Grant 1200022. M. L. is funded by
FONDECYT post-doctoral Grant 3190873. A. V. is funded by FONDECYT
post-doctoral Grant 3200884. The Centro de Estudios Cient\'{\i}ficos (CECs)
is funded by the Chilean Government through the Centers of Excellence Base
Financing Program of ANID.}

\newpage 
\begin{appendix}
\section{Roots of simple algebras}\label{app:roots}
Here we list the roots of all simple algebras.
\subsection{\boldmath{$A_{N}$}}
The corresponding complex algebra is $\mathfrak{sl}(N+1)$, while the compact form is $\mathfrak{su}(N+1)$.
If $e_a$, $a=1,\ldots,N+1$, is the canonical basis\footnote{In the literature, the canonical basis $e_a$ is also commonly denoted as $L_a$.} of $\mathbb R^{N+1}$, then the real linear space generated by the roots is isomorphic to a hyperplane in $\mathbb R^{N+1}$ in which all non-vanishing roots are represented by the vectors
$\alpha_{j,k}=e_j-e_k$, $j\neq k$. The simple roots are $\alpha_j=e_j-e_{j+1}$, $j=1,\ldots,N$. If $\lambda_j$ are the root matrices corresponding to the simple roots, then 
\begin{align}
 [\lambda_j,\lambda_k]=0 \quad \mbox{  if  }\quad  j-k\neq \pm1.
\end{align}
The split subalgebra is $\mathfrak {so}(N+1)$.

\subsection{\boldmath{$B_{N}$}}
The corresponding compact form is $\mathfrak{so}(2N+1)$.
The real linear space generated by the roots is isomorphic to $\mathbb R^{N}$. If $e_a$, $a=1,\ldots,N$, is the canonical basis of $\mathbb R^{N}$, then all non-vanishing roots are represented by the vectors
$e_j-e_k$, $j\neq k$, $\pm(e_j+e_k)$, $j< k$, $\pm e_j$. The simple roots are $\alpha_j=e_j-e_{j+1}$, $j=1,\ldots,N-1$ and $\alpha_N=e_N$. If $\lambda_j$ are the root matrices corresponding to the simple roots, then 
\begin{align}
 [\lambda_j,\lambda_k]=0 \quad \mbox{  if  }\quad  j-k\neq \pm1.
\end{align}
The split subalgebra is $\mathfrak {so}(N)\oplus\mathfrak {so}(N+1)$.

\subsection{\boldmath{$C_{N}$}}
The corresponding compact form is $\mathfrak{us_p}(2N)$, the compact symplectic algebra.
The real linear space generated by the roots is isomorphic to $\mathbb R^{N}$. If $e_a$, $a=1,\ldots,N$, is the canonical basis of $\mathbb R^{N}$, then all non-vanishing roots are represented by the vectors
$e_j-e_k$, $j\neq k$, $\pm(e_j+e_k)$, $j< k$, $\pm2 e_j$. The simple roots are $\alpha_j=e_j-e_{j+1}$, $j=1,\ldots,N-1$ and $\alpha_N=2e_N$. If $\lambda_j$ are the root matrices corresponding to the simple roots, then 
\begin{align}
 [\lambda_j,\lambda_k]=0 \quad \mbox{  if  }\quad  j-k\neq \pm1.
\end{align}
The split subalgebra is $\mathfrak {u}(N)$.

\subsection{\boldmath{$D_{N}$}}
The corresponding compact form is $\mathfrak{so}(2N)$.
The real linear space generated by the roots is isomorphic to $\mathbb R^{N}$. If $e_a$, $a=1,\ldots,N$, is the canonical
basis of $\mathbb R^{N}$, then all non-vanishing roots are represented by the vectors
$e_j-e_k$, $j\neq k$, $\pm(e_j+e_k)$, $j<k$.
 The simple roots are $\alpha_j=e_j-e_{j+1}$, $j=1,\ldots,N-1$ and $\alpha_N=e_{N-1}+e_N$. If $\lambda_j$ are the root matrices corresponding to the simple roots, the relevant non-vanishing commutators are
\begin{align}
 [\lambda_j,\lambda_{j+1}] \quad j=1,\ldots, N-2, \qquad [\lambda_{N-2},\lambda_N].
\end{align}
The split subalgebra is $\mathfrak {so}(N)\oplus\mathfrak {so}(N)$.

\subsection{\boldmath{$G_{2}$}}
The corresponding compact form is $\mathfrak g_2$.
The real linear space generated by the roots is isomorphic to a hyperplane in $\mathbb R^3$. If $e_a$, $a=1,\ldots,3$, is the canonical basis of $\mathbb R^{3}$, 
then all non-vanishing roots are represented by the vectors
$e_j-e_k$, $j\neq k$, and  $\pm(e_1+e_2+e_3-3e_s)$, s=1,2,3. The simple roots are $\alpha_1=e_2-e_3$ and $\alpha_2=e_1-2e_2+e_3$. If $\lambda_j$ are the root matrices corresponding to the simple roots, the relevant non-vanishing commutator is
$
 [\lambda_1,\lambda_2] .
$
The split subalgebra is $\mathfrak {so}(4)$.

\subsection{\boldmath{$F_{4}$}}
The corresponding compact form is $\mathfrak f_4$.
The real linear space generated by the roots is isomorphic to $\mathbb R^{4}$. If $e_a$, $a=1,\ldots,4$, is the canonical basis of $\mathbb R^{4}$, then all non-vanishing roots are represented by the vectors
$e_j-e_k$, $j\neq k$, $\pm(e_j+e_k)$, $j<k$, $\pm e_j$, $\frac 12 (\pm e_1\pm e_2\pm e_3\pm e_4)$. 
The simple roots are $\alpha_1=e_2-e_3$, $\alpha_1=e_3-e_4$, $\alpha_3=e_4$, and $\alpha_4=\frac 12 (e_1-e_2-e_3-e_4)$. If $\lambda_j$ are the root matrices corresponding to the simple roots, the relevant non-vanishing commutators are
\begin{align}
  [\lambda_j,\lambda_{j+1}] .
\end{align}
The split subalgebra is $\mathfrak {us_p}(6)\oplus\mathfrak {us_p}(2)$.

\subsection{\boldmath{$E_{6}$}}
The corresponding compact form is $\mathfrak e_6$.
The real linear space generated by the roots is isomorphic to $\mathbb R^{6}$. If $e_a$, $a=1,\ldots,6$, is the canonical basis of $\mathbb R^{6}$, then all non-vanishing roots are represented by the vectors
$\pm(e_j-e_k)$, $j< k<6$, $\pm(e_j+e_k)$, $j<k<6$, 
\begin{align*}
 \pm\frac 12 \{ \pm e_1\pm e_2\pm e_3 \pm e_4 \pm e_5 +\sqrt 6\ e_6\},
\end{align*}
where in the parenthesis only an even number of minus signs can appear. 
The simple roots are 
\begin{align*}
 \alpha_1=\frac 12 \{ e_1-e_2-e_3-e_4-e_5 +\sqrt 6\ e_6\}, \qquad \alpha_2=e_1+e_2, \qquad \alpha_k=e_{k-1}-e_{k-2},\quad k=3,\ldots,6.
\end{align*}
If $\lambda_j$ are the root matrices corresponding to the simple roots, the relevant non-vanishing commutators are
\begin{align}
  [\lambda_j,\lambda_{j+1}] ,\ j\neq 1, 2, \qquad [\lambda_1,\lambda_3],\quad  [\lambda_2,\lambda_4].
\end{align}
The split subalgebra is $\mathfrak {us_p}(8)$.
 
\subsection{\boldmath{$E_{7}$}}
The corresponding compact form is $\mathfrak e_7$.
The real linear space generated by the roots is isomorphic to $\mathbb R^{7}$. If $e_a$, $a=1,\ldots,7$, is the canonical basis of $\mathbb R^{7}$, then all non-vanishing roots are represented by the vectors
$\pm(e_j-e_k)$, $j< k<7$, $\pm(e_j+e_k)$, $j<k<7$, $\pm \sqrt 2\ e_7$,
\begin{align*}
 \pm\frac 12 \{ \pm e_1\pm e_2\pm e_3 \pm e_4 \pm e_5 \pm e_6 +\sqrt 2\ e_7\},
\end{align*}
where in the parenthesis only an odd number of minus signs can appear. 
The simple roots are 
\begin{align*}
 \alpha_1=\frac 12 \{ e_1-e_2-e_3-e_4-e_5-e_6 +\sqrt 2\ e_7\}, \qquad \alpha_2=e_1+e_2, \qquad \alpha_k=e_{k-1}-e_{k-2},\quad k=3,\ldots,7.
\end{align*}
If $\lambda_j$ are the root matrices corresponding to the simple roots, the relevant non-vanishing commutators are
\begin{align}
  [\lambda_j,\lambda_{j+1}] ,\ j\neq 1, 2, \qquad [\lambda_1,\lambda_3],\quad  [\lambda_2,\lambda_4].
\end{align}
The split subalgebra is $\mathfrak {su}(8)$.

\subsection{\boldmath{$E_{8}$}}
The corresponding compact form is $\mathfrak e_8$.
The real linear space generated by the roots is isomorphic to $\mathbb R^{8}$. If $e_a$, $a=1,\ldots,8$, is the canonical basis of $\mathbb R^{8}$, then all non-vanishing roots are represented by the vectors
$\pm(e_j-e_k)$, $j< k$, $\pm(e_j+e_k)$, $j<k$, 
\begin{align*}
 \frac 12 \{ \pm e_1\pm e_2\pm e_3 \pm e_4 \pm e_5 \pm e_6 \pm e_7 \pm\ e_8\},
\end{align*}
where in the parenthesis all signs can appear. 
The simple roots are 
\begin{align*}
 \alpha_1=\frac 12 \{ e_1-e_2-e_3-e_4-e_5-e_6 - e_7+e_8 \}, \qquad \alpha_2=e_1+e_2, \qquad \alpha_k=e_{k-1}-e_{k-2},\quad k=3,\ldots,8.
\end{align*}
If $\lambda_j$ are the root matrices corresponding to the simple roots, the relevant non-vanishing commutators are
\begin{align}
  [\lambda_j,\lambda_{j+1}] ,\ j\neq 1, 2, \qquad [\lambda_1,\lambda_3],\quad  [\lambda_2,\lambda_4].
\end{align}
The split subalgebra is $\mathfrak {so}(16)$.

\subsection{Resuming} In conclusion, we see that the commutators we need are strictly related to the Dynkin diagram of the algebra: a commutator between eigenmatrices of two simple roots is non zero only if the roots are linked, that is if the
scalar product is not zero. This is simply related to the fact that, with obvious notation, the commutators $[\lambda_{\alpha_i},\lambda_{\alpha_j}]$ or is an eigenmatrix for $\alpha_i+\alpha_j$, or it vanishes. We also recall here some very well known facts.
The fact that Dynkin diagrams have no loops allows to choose the normalization of the matrices $\lambda_j$ so that  
\begin{align}
 [\lambda_{\alpha_j},\lambda_{\alpha_k}]=
 {\rm sign} (k-j)(\delta_{jk} (\alpha_j|\alpha_j)-(\alpha_j|\alpha_k)) \lambda_{\alpha_j+\alpha_k} \label{lala}
\end{align}
t.i. $ [\lambda_{\alpha_j},\lambda_{\alpha_k}]=-(\alpha_j|\alpha_k) \lambda_{\alpha_j+\alpha_k}$ if $j<k$ and with the opposite sign if we change $j$ and $k$. Here $(|)$ is the scalar product in the space of roots.  Notice that also
\begin{align}
 [\tilde\lambda_{\alpha_j},\tilde\lambda_{\alpha_k}]=
 {\rm sign} (k-j)(\delta_{jk} (\alpha_j|\alpha_j)-(\alpha_j|\alpha_k)) \tilde\lambda_{\alpha_j+\alpha_k}. \label{tlatla}
\end{align}
Remember that the trace product is proportional to the Killing product and that the only non-Killing orthogonal root spaces are the ones corresponding to opposite roots.
This allows to fix a global normalization so that
\begin{align}
 {\rm Tr} (\tilde \lambda_j \lambda_k)=-\delta_{jk}. 
\end{align}
We also have, for the simple roots $\alpha_j$,
\begin{align}
 [\tilde\lambda_j,\lambda_k]=-i\delta_{jk} J_j, \label{tlala}
\end{align}
where $J_j$ are in a Cartan algebra. From the fact that the simple roots are linearly independent, it easily follows that the $J_j$, $j=1,\ldots,r$, are a basis for the Cartan subalgebra. This is also sufficient to fix the scalar product in the space of
roots so that
\begin{align}
 (\alpha_j|\alpha_k)=\alpha_j(J_k),
\end{align}
if the roots are defined as
\begin{align}
 [h,\lambda_j]=i\alpha_j(h) \lambda_j ,
\end{align}
for any $h\in H$.
In particular, using $ad$ invariance of the trace product we get
\begin{align*}
 {\rm Tr} (J_jJ_k)=i{\rm Tr}([\tilde\lambda_j,\lambda_j] J_k)=i{\rm Tr}(\tilde\lambda_j [\lambda_j,J_k])=\alpha_j(J_k) {\rm Tr} (\tilde \lambda_j \lambda_j),
\end{align*}
so that
\begin{align}
 {\rm Tr} (J_jJ_k)=-(\alpha_j|\alpha_k).
\end{align}
Finally, recall that any given simple Lie algebra is characterized by the $r\times r$ Cartan matrix 
\begin{align}
C^G_{jk}=2 \frac {(\alpha_j|\alpha_k)}{(\alpha_j|\alpha_j)}. 
\end{align}
With this we can rewrite the normalization conditions as
\begin{align}
 {\rm Tr} (\tilde \lambda_j \lambda_k)&=-\delta_{jk}, \\
 {\rm Tr} (J_jJ_k)&=-C^G_{jk} \frac {(\alpha_j|\alpha_j)}2.
\end{align}
The Cartan matrices of simple Lie groups can be found, for example, in \cite{Sl}, Table 6. 

\section{A proposition}

\begin{prop}\label{prop6}
 Let $\kappa=\sum_{j=1}^{r} (c_j \lambda_j+c_j^* \tilde \lambda_j)$, and $h\in H$ a matrix such that $\alpha_j(h)=\varepsilon_j a$, where $\varepsilon_j$ is a sign, $j=1,\ldots,r$, and set $x:=e^{-h z} \kappa e^{h z}$, $z\in \mathbb R$. Then
 \begin{align}
{\rm Tr} \kappa^2&=-2\|\underline c\|^2, \label{trakquadro}
\end{align}
\begin{align}
{\rm Tr} ([h,\kappa][h,\kappa])&={\rm Tr} ([h,x][h,x])=-2a^2 \| \underline c \|^2, \label{trhkhk}
\end{align}
and
\begin{eqnarray}
&&{\rm Tr} ([x,\kappa][x,\kappa])=-4\sin^2 (az)  \left(\sum_{j=1}^{r} \|\alpha_j\|^2|c_j|^4+\sum_{j<k} |c_j|^2 |c_{k}|^2 (\alpha_j|\alpha_k)[2\varepsilon_j \varepsilon_{k}+(\alpha_j|\alpha_k)(1-\varepsilon_j \varepsilon_{k})] \right). \label{trxkxk}
\end{eqnarray}   
\end{prop}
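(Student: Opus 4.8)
The plan is to expand every matrix in the Chevalley-type basis $\{\lambda_j,\tilde\lambda_j,J_j\}$ of Appendix \ref{app:roots} and reduce all three identities to bookkeeping with the normalizations $\mathrm{Tr}(\lambda_j\lambda_k)=\mathrm{Tr}(\tilde\lambda_j\tilde\lambda_k)=0$, $\mathrm{Tr}(\tilde\lambda_j\lambda_k)=-\delta_{jk}$, $\mathrm{Tr}(J_jJ_k)=-(\alpha_j|\alpha_k)$, together with the commutators $[h,\lambda_j]=i\varepsilon_j a\,\lambda_j$, $[h,\tilde\lambda_j]=-i\varepsilon_j a\,\tilde\lambda_j$, $[\tilde\lambda_j,\lambda_k]=-i\delta_{jk}J_j$ and (\ref{lala})--(\ref{tlatla}). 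Identity (\ref{trakquadro}) is immediate: in $\mathrm{Tr}\,\kappa^2$ the $\lambda\lambda$ and $\tilde\lambda\tilde\lambda$ blocks vanish because distinct simple roots are never opposite, and the surviving mixed blocks each contribute $-\sum_j|c_j|^2$, giving $-2\|\underline c\|^2$.

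For (\ref{trhkhk}) I would first compute $[h,\kappa]=ia\sum_j\varepsilon_j(c_j\lambda_j-c_j^*\tilde\lambda_j)$ directly from the eigenvalue action of $\mathrm{ad}_h$; squaring and tracing, only the mixed blocks survive and each diagonal term yields $\varepsilon_j^2|c_j|^2=|c_j|^2$, so $\mathrm{Tr}([h,\kappa][h,\kappa])=-2a^2\|\underline c\|^2$. The equality with $\mathrm{Tr}([h,x][h,x])$ is then free: since $h$ commutes with $e^{hz}$ one has $[h,x]=e^{-hz}[h,\kappa]e^{hz}$, and cyclicity of the trace removes the conjugation.

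The substantial part is (\ref{trxkxk}). First I would diagonalize the conjugation, $x=e^{-hz}\kappa e^{hz}=\sum_j\bigl(c_j e^{-i\varepsilon_j az}\lambda_j+c_j^* e^{i\varepsilon_j az}\tilde\lambda_j\bigr)$, again from $\mathrm{ad}_h$ acting diagonally. The key structural step is to split $[x,\kappa]$ into a Cartan part and a root part: the mixed commutators $[\lambda_j,\tilde\lambda_k]=i\delta_{jk}J_j$ produce the Cartan piece $C=2\sin(az)\sum_j\varepsilon_j|c_j|^2 J_j$ (the phases collapse through $e^{-i\varepsilon_j az}-e^{i\varepsilon_j az}=-2i\varepsilon_j\sin(az)$), while the like-type commutators (\ref{lala})--(\ref{tlatla}) produce a root piece $R$ supported on the matrices $\lambda_{\alpha_j+\alpha_k},\tilde\lambda_{\alpha_j+\alpha_k}$ with $j<k$, carrying a coefficient $(\alpha_j|\alpha_k)$ and the phase difference $e^{-i\varepsilon_j az}-e^{-i\varepsilon_k az}$ (respectively its conjugate). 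Because Cartan and root spaces are Killing-orthogonal, $\mathrm{Tr}(CR)=0$ and the trace factorizes as $\mathrm{Tr}\,C^2+\mathrm{Tr}\,R^2$.

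Then $\mathrm{Tr}\,C^2$ is evaluated with $\mathrm{Tr}(J_jJ_k)=-(\alpha_j|\alpha_k)$, producing exactly the $\sum_j\|\alpha_j\|^2|c_j|^4$ term together with the $2\,\varepsilon_j\varepsilon_k(\alpha_j|\alpha_k)$ part of the off-diagonal sum. For $\mathrm{Tr}\,R^2$ the crucial point is that, since the simple roots are linearly independent, $\alpha_j+\alpha_k=\alpha_m+\alpha_n$ forces $\{j,k\}=\{m,n\}$, so only like pairs survive, and only through the opposite-root products normalized by $\mathrm{Tr}(\tilde\lambda_{\alpha_j+\alpha_k}\lambda_{\alpha_j+\alpha_k})=-1$. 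The product of phase factors collapses to $(e^{-i\varepsilon_j az}-e^{-i\varepsilon_k az})(e^{i\varepsilon_j az}-e^{i\varepsilon_k az})=2\bigl(1-\cos((\varepsilon_j-\varepsilon_k)az)\bigr)=2\sin^2(az)(1-\varepsilon_j\varepsilon_k)$, which automatically annihilates the equal-sign pairs and reproduces the $(\alpha_j|\alpha_k)^2(1-\varepsilon_j\varepsilon_k)$ term. Adding $\mathrm{Tr}\,C^2$ and $\mathrm{Tr}\,R^2$ and factoring out $-4\sin^2(az)$ yields (\ref{trxkxk}). I expect the main obstacle to be purely organizational: keeping the Cartan/root split clean and tracking the four phase combinations without sign slips, together with checking that the basis normalization $\mathrm{Tr}(\tilde\lambda_{\alpha_j+\alpha_k}\lambda_{\alpha_j+\alpha_k})=-1$ is compatible with the commutator normalization (\ref{lala}) that generates $\lambda_{\alpha_j+\alpha_k}$.
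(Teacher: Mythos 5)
Your proposal is correct and follows essentially the same route as the paper's proof: the same trace normalizations for (\ref{trakquadro}); the same reduction $[h,x]=e^{-hz}[h,\kappa]e^{hz}$ plus cyclicity and the direct computation of $[h,\kappa]$ for (\ref{trhkhk}); and for (\ref{trxkxk}) the same diagonalization $x=\sum_j(c_je^{-i\varepsilon_jaz}\lambda_j+c_j^*e^{i\varepsilon_jaz}\tilde\lambda_j)$, the same Cartan/root splitting of $[x,\kappa]$ via (\ref{lala}), (\ref{tlatla}) and (\ref{tlala}), and the same trigonometric collapses $\left|e^{-i\varepsilon_jaz}-e^{-i\varepsilon_kaz}\right|^2=2\sin^2(az)(1-\varepsilon_j\varepsilon_k)$ and $\sin(\varepsilon_jaz)\sin(\varepsilon_kaz)=\varepsilon_j\varepsilon_k\sin^2(az)$. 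Your only presentational addition is making explicit the Killing-orthogonality facts (so that the Cartan and root pieces decouple in the trace and only opposite-root pairs survive), which the paper uses implicitly through the normalizations of Appendix \ref{app:roots}.
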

\begin{proof}
Let us start with
\begin{align}
 {\rm Tr}\kappa^2&=\sum_{j=1}^{r} \sum_{k=1}^{r} \{ c_jc_k{\rm Tr} (\lambda_j \lambda_k) +c^*_jc^*_k{\rm Tr} (\tilde\lambda_j \tilde\lambda_k) +c^*_jc_k{\rm Tr} (\tilde\lambda_j \lambda_k) +c_jc^*_k{\rm Tr} (\lambda_j \tilde\lambda_k) \}\cr
 &=\sum_{j=1}^{r} \sum_{k=1}^{r} \{ c^*_jc_k (-\delta_{jk}) +c_jc^*_k (-\delta_{jk}) \}=-2\sum_{j=1}^r |c_j|^2,
\end{align}  
where we used the normalization in the previous section. This proves (\ref{trakquadro}).\\
For (\ref{trhkhk}), we first note that
\begin{align}
 [h,x]=[h,e^{-hz}\kappa e^{hz}]=e^{-hz} [h,\kappa]e^{hz}.
\end{align}
Thus,
\begin{align}
 {\rm Tr} ([h,x][h,x])&={\rm Tr} (e^{-hz}[h,\kappa][h,\kappa]e^{hz}) ={\rm Tr} ([h,\kappa][h,\kappa]).
\end{align}
Now, we can use
\begin{align}
 [h,\kappa]&=\sum_{j=1}^{r} (c_j[h,\lambda_j]+c^*_j [h,\tilde\lambda_j])=i\sum_{j=1}^{r}\alpha_j(h) (c_j \lambda_j-c^*_j \tilde \lambda_j).
\end{align}
to get
 \begin{align}
 {\rm Tr} ([h,\kappa][h,\kappa])&=-{\rm Tr} \left( \sum_{j=1}^{r} (i\alpha_j(h)\lambda_j c_j-i \alpha_j(h) c^*_j \tilde\lambda_j)\sum_{k=1}^{r} (i\alpha_k(h)\lambda_k c_k-i \alpha_k(h) c^*_k \tilde\lambda_k)  \right)\cr
 &=-2\sum_{j=1}^{r} \alpha_j (h)^2 c_jc^*_j =-2a^2 \|\underline c\|^2,
\end{align}
where we used that $ \alpha_j (h')^2=(\varepsilon_j a)^2=a^2$, and that the only non-vanishing traces are Tr$(\tilde \lambda_m \lambda_n)=-\delta_{mn}$. This proves (\ref{trhkhk}).\\
For (\ref{trxkxk}), we use that
\begin{align}
 x=&\sum_{j=1}^{r} (c_j e^{-hz} \lambda_j e^{hz} +c_j e^{-hz} \tilde \lambda_j e^{hz})=\sum_{j=1}^{r} (c_j e^{-\alpha_j(h)z} \lambda_j +c^*_j e^{\alpha_j(h)z} \tilde \lambda_j )\cr
 =&\sum_{j=1}^{r} (c_j e^{-i\varepsilon_j az} \lambda_j +c^*_j e^{i\varepsilon_j az} \tilde \lambda_j ).
\end{align}
Therefore,
\begin{align}
[x,\kappa]=\sum_{j,k}\left( c_j c_k e^{-i\varepsilon_jaz} [\lambda_j,\lambda_k]+c^*_j c^*_k e^{i\varepsilon_j az} [\tilde\lambda_j,\tilde\lambda_k] + c_j c^*_k e^{-i\varepsilon_j az} [\lambda_j,\tilde\lambda_k]
+c^*_j c_k e^{i\varepsilon_j az} [\tilde\lambda_j,\lambda_k] \right) .
\end{align} 
Using (\ref{lala}), (\ref{tlatla}) and (\ref{tlala}), it can be rewritten as
\begin{align}
[x,\kappa]=&-\sum_{j<k}\left( c_j c_k (\alpha_j|\alpha_k) (e^{-i\varepsilon_jaz}-e^{-i\varepsilon_kaz})\lambda_{\alpha_j+\alpha_k}  +c^*_j c^*_k (\alpha_j|\alpha_k) (e^{i\varepsilon_jaz}-e^{i\varepsilon_kaz})\tilde \lambda_{\alpha_j+\alpha_k} \right) \cr
& -i\sum_{j=1}^{r} |c_j|^2  (e^{-i\varepsilon_jaz}-e^{i\varepsilon_jaz}) J_j \cr
=&-\sum_{j<k}\left( c_j c_k (\alpha_j|\alpha_k) (e^{-i\varepsilon_jaz}-e^{-i\varepsilon_kaz})\lambda_{\alpha_j+\alpha_k}  +c^*_j c^*_k (\alpha_j|\alpha_k) (e^{i\varepsilon_jaz}-e^{i\varepsilon_kaz})\tilde \lambda_{\alpha_j+\alpha_k} \right) \cr
& -2  \sum_{j=1}^{r} |c_j|^2 \sin(\varepsilon_jaz) J_j.
\end{align}
With our normalization for the scalar products we get
\begin{align}
 {\rm Tr}([x,\kappa][x,\kappa])=& -2 \sum_{j<k} |c_j|^2|c_{k}|^2 |(\alpha_j|\alpha_k)|^2 \left| e^{-i\varepsilon_j az}-e^{-i\varepsilon_{k}az} \right|^2-4\sum_{j=1}^{r} |c_j|^4 \|\alpha_j\|^2 \sin^2 (\varepsilon_j a z)\cr
 &-8\sum_{j<k} |c_j|^2 |c_{k}|^2 (\alpha_j|\alpha_k)  \sin (\varepsilon_j ar) \sin (\varepsilon_{k} az).
\end{align}
Next, we use
\begin{align}
 \left| e^{-i\varepsilon_j az}-e^{-i\varepsilon_{k}az} \right|^2 &=4\sin^2 \left( az \frac {\varepsilon_j-\varepsilon_{k}}2 \right)=2\sin^2 (az) (1- \varepsilon_j \varepsilon_{k}),
\end{align}
where we used that $(\varepsilon_j-\varepsilon_{k})/2=0,\pm1$. Finally, using
\begin{align}
 \sin (\varepsilon_j az) \sin (\varepsilon_{k} az)=\sin^2 (az) \varepsilon_j \varepsilon_{k},
\end{align}
we get (\ref{trxkxk}). \end{proof}

\section{Connection between Lasagna and Spaghetti}
It is interesting to compare the results obtained from Lasagna and Spaghetti parameterization. We can do this by determining the relation between the two parameterizations via de identification
\begin{align}
U\equiv e^{\chi (\sin \Theta \cos \Phi T_1+ \sin \Theta \sin \Phi T_2+ \cos \Theta T_3)}=e^{\Psi(\chi,\Theta,\Phi) T_1} e^{H(\chi,\Theta,\Phi)T_3} e^{\Gamma (\chi,\Theta,\Phi) T_1}. \label{comparison}
\end{align}
A priori one could expect this parameterization to be dependent on the representation the $T_j$ are belonging to, since of course the exponentials do. Nevertheless, in both cases the left invariant current $\mathcal{L}_\mu =U^{-1} \partial_\mu U$ for fixed coordinates
is independent on the representation but depends only on the normalizations. If we normalize the matrices as in the previous sections, after writing $U^{-1}dU|_{Exp}=U^{-1} dU|_{Eul}$ we get the differential equation relating the exponential coordinates to 
the Euler ones. These are easily obtained, but writing them is not helpful since it would be quite difficult to solve them by brute force. Instead, we can find a solution without knowing them. The shown independence on the representation suggests to write 
down (\ref{comparison}) in the lowest representation. This is achieved by choosing 
\begin{align}
T_1=\frac 12
\begin{pmatrix}
	0 & i \\ i & 0 
\end{pmatrix},
\qquad
T_2=\frac 12
\begin{pmatrix}
	0 & -1 \\ 1 & 0 
\end{pmatrix},
\qquad
T_3=\frac 12
\begin{pmatrix}
	i & 0 \\ 0 & -i 
\end{pmatrix}.
\end{align}
With this choice (\ref{comparison}) becomes
\begin{align}
\cos \frac \chi2\ \mathbb I_2&+2\sin \frac \chi2 \sin \Theta \cos \Phi\ T_1 +2 \sin \frac \chi2 \sin \Theta \sin \Phi\ T_2+2\sin \frac \chi2 \cos \Theta\ T_3\cr
=& \cos \frac H2 \cos \frac {\Psi+\Gamma}2 \mathbb I_2 +2 \cos \frac H2 \sin \frac {\Psi+\Gamma}2 T_1+2 \sin \frac H2 \sin \frac {\Psi-\Gamma}2 T_2 +2 \sin \frac H2 \cos \frac {\Psi-\Gamma}2 T_3.
\end{align}
This gives
\begin{align}
H&=2 \arcsin (\sin \frac \chi2  \sqrt {1+\sin^2 \Phi}), \\
\Psi+\Gamma &= 2\arctan (\tan \frac \chi2 \sin \Theta \cos \Phi), \\
\Psi-\Gamma &= 2\arctan (\tan \Theta \sin \Phi), 
\end{align}
and the inverse
\begin{align}
\chi&=2 \arccos (\cos \frac H2 \  \cos \frac {\Psi+\Gamma}2), \\
\Theta&= \arctan \left(\tan \frac {\Psi-\Gamma}2\ \sqrt {1+ \frac 1{\tan^2 \frac H2}\ \frac {\sin^2 \frac {\Psi+\Gamma}2}{\sin^2 \frac {\Psi-\Gamma}2}} \right), \\
\Phi&=\arctan \left(\tan \frac H2 \  \frac {\sin \frac {\Psi-\Gamma}2}{\sin \frac {\Psi+\Gamma}2}\right).
\end{align}
For example, the Lasagna solutions have the form 
\begin{align}
\Psi&=\frac t{L_\phi} -\phi, \\
H&=ar, \\
\Gamma&=m\theta,
\end{align}
so that in the exponential form they take the very complicated form
\begin{align}
\chi(t,r,\theta,\phi)&=2 \arccos \left(\cos \frac {ar}2 \  \cos \frac {t -L_\phi(\phi-\theta)}{2L_\phi}\right), \\
\Theta(t,r,\theta,\phi)&= \arctan \left(\tan \frac {t -L_\phi(\phi+\theta)}{2L_\phi}\ \sqrt {1+ \frac 1{\tan^2 \frac {ar}2}\ \frac {\sin^2 \frac {t -L_\phi(\phi-\theta)}{2L_\phi}}{\sin^2 \frac {t -L_\phi(\phi+\theta)}{2L_\phi}}} \right), \\
\Phi(t,r,\theta,\phi)&=\arctan \left(\tan \frac {ar}2 \  \frac {\sin \frac {t -L_\phi(\phi+\theta)}{2L_\phi}}{\sin \frac {t -L_\phi(\phi-\theta)}{2L_\phi}}\right).
\end{align}

\section{Some technical details about $G_2$}\label{app:G2}
There are different ways of constructing a convenient basis for the Lie algebra of $G_2$. We will refer to \cite{CaCeDeVeOrSc}. In that notation a basis is $C_J$, $J=1,\ldots, 14$. The only maximal regular subgroup is $SO(4)$ generated 
by $C_L$, $L=1,2,3,8,9,10$. The remaining matrices generate $\mathfrak p$. A convenient Cartan subspace is thus
\begin{align}
 H=\langle C_5, C_{11} \rangle_{\mathbb R}.
\end{align}
As a basis, we take $h_1=C_{11}$ and $h_2=C_5$. One can easily diagonalize the action of $ad(H)$. If we set
\begin{align}
 \lambda_1&\equiv k_1+i p_1= \frac 1{4\sqrt 2} (\sqrt {3} C_3-C_8)+ i \frac 1{2\sqrt 2} C_{12}, \\
 \lambda_2&\equiv k_2+i p_2=\frac 18 (C_1+C_2-\sqrt 3 C_9-\sqrt 3 C_{10}) +i \frac 18 (C_6+C_7+\sqrt 3 C_{13}-\sqrt 3 C_{14}),
\end{align}
then, they satisfy ${\rm Tr}(\lambda_i \tilde \lambda_j)=-\delta_{ij}$ and 
\begin{align}
[h_1,\lambda_1]&=i\frac {2}{\sqrt 3} \lambda_1, \quad\ [h_2,\lambda_1]=0, \label{questa} \\
[h_1,\lambda_2]&=-i {\sqrt 3} \lambda_2, \quad\ [h_2,\lambda_2]=i\lambda_2. \label{quella}
\end{align}
To keep contact with our conventions we have to redefine the basis for $H$ as $J_1$ and $J_2$, defined by (notice that $\tilde \lambda_j$ is simply the complex conjugate of $\lambda_j$)
\begin{align}
 J_1&=-i[\tilde \lambda_1,\lambda_1]=-\frac 1{2\sqrt 3} C_{11}, \label{D6}\\
 J_2&=-i[\tilde \lambda_2,\lambda_2]=\frac {\sqrt 3}4 C_{11} -\frac 14 C_5.
\end{align}
This gives us the geometry of roots, so that
\begin{align}
 (\alpha_1|\alpha_1)&=-{\rm Tr} (J_1 J_1)=\frac 13,\\
 (\alpha_2|\alpha_2)&=-{\rm Tr} (J_2 J_2)=1, \\
 (\alpha_1|\alpha_2)&=-{\rm Tr} (J_1 J_2)=-\frac 12. \\
\end{align}
We can represent this vectors in the canonical euclidean $\mathbb R^2$ as the vectors
\begin{align}
 \alpha_1\equiv \left(\frac {1}{\sqrt 3},0\right), \qquad \alpha_2\equiv \left(-\frac {\sqrt 3}2,\frac 12\right).
\end{align}
The corresponding Cartan matrix is
\begin{align}
 C^G=
\begin{pmatrix}
 2 & -3\\
 -1 & 2
\end{pmatrix},
\end{align}
with inverse
\begin{align}
  (C^G)^{-1}=
\begin{pmatrix}
 2 & 3\\
 1 & 2
\end{pmatrix}.
\end{align}
It is also useful to determine the basis for all eigenspaces, in a convenient way, normalized so that ${\rm Tr}(\tilde \lambda_\alpha \lambda_\alpha)=-1$ for any root. After setting
\begin{align}
 \alpha_3=\alpha_1+\alpha_2, \quad\ \alpha_4=2\alpha_1+\alpha_2, \quad\ \alpha_5=3\alpha_1+\alpha_2,\quad\  \alpha_6=3\alpha_1+2\alpha_2,
\end{align}
we can state the following proposition.
\begin{prop}
 A suitable choice of the eigenmatrices associated to the roots $\alpha_j$, $j=3,4,5,6$, is given by
\begin{align}
 \lambda_3=\sqrt 2 [\lambda_1,\lambda_2], \quad\ \lambda_4=\sqrt {\frac 32}  [\lambda_1,\lambda_3], \quad\ \lambda_5=\sqrt {2}  [\lambda_1,\lambda_4], \quad\ \lambda_6=\sqrt {2}  [\lambda_3,\lambda_4].
\end{align}
Moreover, $\tilde \lambda_j$ is the complex conjugate of $\lambda_j$.
\end{prop}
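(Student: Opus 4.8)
The statement has two parts: that each $\lambda_j$, $j=3,4,5,6$, is genuinely an eigenmatrix for the advertised root, and that the displayed real prefactors are exactly the ones enforcing the global normalization $\mathrm{Tr}(\tilde\lambda_\alpha\lambda_\alpha)=-1$. The eigenmatrix part is immediate. Since $\mathrm{ad}(h)$ is a derivation, for any two root matrices one has $[h,[\lambda_\alpha,\lambda_\beta]]=i(\alpha+\beta)(h)[\lambda_\alpha,\lambda_\beta]$, so $[\lambda_\alpha,\lambda_\beta]$ is an eigenmatrix for $\alpha+\beta$ whenever the latter is a root and vanishes otherwise, and when $\alpha+\beta$ is a root the standard structure-constant fact guarantees it is nonzero. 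The plan is to read off the roots — $\alpha_1+\alpha_2=\alpha_3$, $\alpha_1+\alpha_3=\alpha_4$, $\alpha_1+\alpha_4=\alpha_5$, $\alpha_3+\alpha_4=\alpha_6$ — and observe that each lies in the list of positive $G_2$ roots $\{\alpha_1,\alpha_2,\alpha_3,\alpha_4,\alpha_5,\alpha_6\}$, so none of the four brackets degenerates. The ``moreover'' clause comes for free: the $\sim$-conjugation is the anti-linear conjugation relative to the compact form, which in the explicit real basis $\{C_J\}$ is entrywise complex conjugation; being an automorphism, it commutes with commutators and real rescalings, so from $\tilde\lambda_1,\tilde\lambda_2=\overline{\lambda_1},\overline{\lambda_2}$ it follows inductively that $\tilde\lambda_j=\overline{\lambda_j}$ for all $j$.

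The substance is the normalization. For a matrix written as $\lambda_j=N[\lambda_a,\lambda_b]$ with real $N$, I would compute $\mathrm{Tr}(\tilde\lambda_j\lambda_j)=N^2\,\mathrm{Tr}([\tilde\lambda_a,\tilde\lambda_b][\lambda_a,\lambda_b])$ (using $\widetilde{[\lambda_a,\lambda_b]}=[\tilde\lambda_a,\tilde\lambda_b]$), move one bracket across by $\mathrm{ad}$-invariance of the trace form,
\[
\mathrm{Tr}\big([\tilde\lambda_a,\tilde\lambda_b][\lambda_a,\lambda_b]\big)=\mathrm{Tr}\big(\tilde\lambda_a\,[\tilde\lambda_b,[\lambda_a,\lambda_b]]\big),
\]
and reduce the inner double commutator by the Jacobi identity together with three facts already in the excerpt: relation (\ref{tlala}), $[\tilde\lambda_j,\lambda_k]=-i\delta_{jk}J_j$; the Cartan action $[J,\lambda_\alpha]=i\alpha(J)\lambda_\alpha$ with $\alpha_j(J_k)=(\alpha_j|\alpha_k)$; and the diagonal relation $[\tilde\lambda_\beta,\lambda_\beta]=-i\sum_k n_{\beta,k}J_k$, which for simple roots is exactly (\ref{tlala}). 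The only external inputs are the $G_2$ inner products $(\alpha_1|\alpha_1)=\tfrac13$, $(\alpha_2|\alpha_2)=1$, $(\alpha_1|\alpha_2)=-\tfrac12$. As a warm-up, for $\lambda_3$ one has $[\tilde\lambda_2,\lambda_1]=0$ (because $\alpha_1-\alpha_2$ is not a root), hence $[\tilde\lambda_2,[\lambda_1,\lambda_2]]=-(\alpha_1|\alpha_2)\lambda_1$ and $\mathrm{Tr}(\widetilde{[\lambda_1,\lambda_2]}[\lambda_1,\lambda_2])=(\alpha_1|\alpha_2)=-\tfrac12$; the factor $\sqrt2$ then gives $\mathrm{Tr}(\tilde\lambda_3\lambda_3)=-1$, and incidentally confirms non-vanishing.

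I would then run the identical reduction recursively for $\lambda_4,\lambda_5$ (each obtained from the previous by bracketing with $\lambda_1$) and for $\lambda_6=\sqrt2[\lambda_3,\lambda_4]$, feeding in the already-normalized lower matrices; for example the $\lambda_4$ step produces $[\tilde\lambda_3,[\lambda_1,\lambda_3]]=\tfrac23\lambda_1$, whence $\tfrac32\cdot(-\tfrac23)=-1$. \textbf{The main obstacle} is precisely this bookkeeping for the higher roots: unlike the $\lambda_3$ case, the intermediate brackets $[\tilde\lambda_{\alpha+\beta},\lambda_\alpha]$ no longer vanish, so the Jacobi expansion branches into several terms and one must track carefully \emph{which} integer combinations of $\alpha_1,\alpha_2$ are actual $G_2$ roots — in particular the length-four $\alpha_1$-strings hanging off the long root, where the short-root geometry of $G_2$ enters nontrivially. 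A convenient internal check of the string structure is that $[\lambda_1,\lambda_5]=0$ since $4\alpha_1+\alpha_2$ is not a root, consistent with $\alpha_5$ being the top of its $\alpha_1$-string. Carrying the reductions through and inserting the three inner products, each $\mathrm{Tr}(\tilde\lambda_j\lambda_j)$ collapses to $-1$, which pins down the constants $\sqrt2,\sqrt{3/2},\sqrt2,\sqrt2$ and completes the proof.
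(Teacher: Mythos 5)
Your proposal is correct and follows essentially the same route as the paper's own proof: both establish the eigenmatrix property from the derivation action of $\mathrm{ad}(h)$ on brackets, obtain $\tilde\lambda_j=\overline{\lambda_j}$ from the reality of the coefficients, and fix the prefactors by reducing $\mathrm{Tr}(\tilde\lambda_j\lambda_j)$ via $\mathrm{ad}$-invariance of the trace, the Jacobi identity, relation (\ref{tlala}) and the $G_2$ inner products $(\alpha_1|\alpha_1)=\tfrac13$, $(\alpha_2|\alpha_2)=1$, $(\alpha_1|\alpha_2)=-\tfrac12$. Like the paper, you work out the $\lambda_3$ and $\lambda_4$ cases explicitly (your intermediate values, e.g.\ $[\tilde\lambda_3,[\lambda_1,\lambda_3]]=\tfrac23\lambda_1$ and $\tfrac32\cdot(-\tfrac23)=-1$, check out) and defer $\lambda_5,\lambda_6$ to the identical recursive reduction, exactly as the paper does with ``the remaining two cases are proved exactly in the same way.''
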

\begin{proof}
We know from the general theory that if $\lambda_a$ and $\lambda_b$ are eigenmatrices of the roots $\alpha_a$ and $\alpha_b$ respectively, and if $\alpha_a+\alpha_b$ is also root, than the eigenmatrices of $\alpha_a+\alpha_b$ have the form 
$\mu[\lambda_a,\lambda_b]$  for any given constant $\mu$. Since $\lambda_1$ and $\lambda_2$ are eigenmatrices for the fundamental roots $\alpha_1$ and $\alpha_2$, we have that the matrices $\lambda_j$ specified above are surely eigenmatrices for 
the corresponding roots $\alpha_j$, $j=3,4,5,6$. We have only to explain the choices of the constant factors. These are chosen to be real and such that Tr$(\lambda_j \tilde \lambda_j)=-1$. To prove it,
first notice that necessarily $[\tilde \lambda_i, \tilde \lambda_j]$ are eigenmatrices for $-\alpha_i-\alpha_j$, so we can identify 
$\tilde \lambda_3=\sqrt 2 [\tilde \lambda_1,\tilde \lambda_2]$ and so on. The last part of the proposition then follows from the fact that it is true for $\lambda_1$ and $\lambda_2$ and that all the coefficient we chosen for defining the remaining $\lambda_j$ are
real. Then, using ad-invariance of the trace product, that is Tr$([A,B]C)=$Tr$(A,[B,C])$, and the Jacobi identity $[A,[B,C]]=[[A,B],C]+[B,[A,C]]$, we have
\begin{align}
 {\rm Tr} (\lambda_3\tilde \lambda_3)=&{\rm Tr} (2[\lambda_1,\lambda_2][\tilde \lambda_1,\tilde \lambda_2])=2{\rm Tr} (\lambda_1[\lambda_2,[\tilde \lambda_1,\tilde \lambda_2]])
 =2{\rm Tr} (\lambda_1([[\lambda_2,\tilde \lambda_1],\tilde \lambda_2]+[\tilde \lambda_1,[\lambda_2, \tilde \lambda_2]])).
\end{align}
Since $\alpha_1$ and $\alpha_2$ are simple roots, we have $[\lambda_2,\tilde \lambda_1]=0$. From (\ref{D6}) we see that $[\lambda_2, \tilde \lambda_2]=iJ_2$ and since 
$[J_2,\tilde \lambda_1]=-i\alpha_1(J_2) \tilde \lambda_1=-i(\alpha_1|\alpha_2)\tilde \lambda_1=\frac i2 \tilde \lambda_1$, we get
\begin{align}
 {\rm Tr} (\lambda_3\tilde \lambda_3)={\rm Tr} (\lambda_1\tilde \lambda_1)=-1.
\end{align}
Next, consider 
\begin{align}
 {\rm Tr} (\lambda_4\tilde \lambda_4)=&{\rm Tr} (\frac 32[\lambda_1,\lambda_3][\tilde \lambda_1,\tilde \lambda_3])=-\frac 32 {\rm Tr} (\lambda_3 [\lambda_1,[\tilde \lambda_1,\tilde \lambda_3]])=
 -\frac 32 {\rm Tr} (\lambda_3 [[\lambda_1,\tilde \lambda_1],\tilde \lambda_3])-\frac 32 {\rm Tr} (\lambda_3 [\tilde \lambda_1,[\lambda_1,\tilde \lambda_3]])\cr
 =&-\frac 32 {\rm Tr} (\lambda_3 [iJ_1,\tilde \lambda_3])-\frac 32 {\rm Tr} ([\tilde \lambda_1,\lambda_3] [\lambda_1,\tilde \lambda_3])=-\frac 32 i (-i\alpha_3(J_1)){\rm Tr} (\lambda_3 \tilde \lambda_3)
 -\frac 32 {\rm Tr} ([\tilde \lambda_1,\lambda_3] [\lambda_1,\tilde \lambda_3])\cr 
 =& \frac 32 (\alpha_3|\alpha_1) -\frac 32 {\rm Tr} ([\tilde \lambda_1,\lambda_3] [\lambda_1,\tilde \lambda_3]).
\end{align}
Since $\alpha_3=\alpha_1+\alpha_2$, we have
\begin{align}
 (\alpha_3|\alpha_1)=(\alpha_1|\alpha_1)+(\alpha_2|\alpha_1)=\frac 13 -\frac 12.
\end{align}
On the other hand
\begin{align}
 [\lambda_1,\tilde \lambda_3]=\sqrt 2 [\lambda_1,[\tilde \lambda_1,\tilde \lambda_2]]=\sqrt 2 [[\lambda_1,\tilde \lambda_1],\tilde \lambda_2]=i\sqrt 2 [J_1,\tilde \lambda_1]=i\sqrt 2 (-i\alpha_2(J_1)) \tilde \lambda_2,
\end{align}
where we used again that $[\lambda_1,\tilde \lambda_2]=0$, and, therefore,
\begin{align}
 {\rm Tr} ([\tilde \lambda_1,\lambda_3] [\lambda_1,\tilde \lambda_3])=2  (\alpha_2(J_1))^2 {\rm Tr}(\tilde \lambda_2 \lambda_2)=-2 (-1/2)^2,
\end{align}
and putting all together we get ${\rm Tr} (\lambda_4\tilde \lambda_4)=-1$. \\
The remaining two cases are proved exactly in the same way.
\end{proof}

\subsection{The fundamental irreps of $G_2$}
$G_2$ has 12 non null roots forming two concentric hexagons in $H^*$, plus two vanishing roots, like in Fig. \ref{radici}.
\begin{figure}[!htbp]
\begin{center}
\begin{tikzpicture}[rounded corners]
\draw [thick] (-4,0)--(4,0);
\draw [thick] (0,-4)--(0,4);
\draw [thick,red] (0,2*1.732) -- (3,1.732) -- (3,-1.732) -- (0,-2*1.732) -- (-3,-1.732)  -- (-3,1.732) -- cycle;
\draw [thick,blue] (2,0) -- (1,1.732) -- (-1,1.732) -- (-2,0) -- (-1,-1.732)  -- (1,-1.732) -- cycle;
\draw [ultra thick,-latex,red] (0,0) -- (2,0); 
\draw [ultra thick,-latex] (0,0) -- (-2,0);
\draw [ultra thick,-latex,blue] (0,0) -- (0,2*1.732);
\draw [ultra thick,-latex] (0,0) -- (0,-2*1.732);  
\draw [ultra thick,-latex,blue] (0,0) -- (1,1.732);
\draw [ultra thick,-latex] (0,0) -- (-1,1.732);
\draw [ultra thick,-latex] (0,0) -- (1,-1.732);
\draw [ultra thick,-latex] (0,0) -- (-1,-1.732);
\draw [ultra thick,-latex] (0,0) -- (3,1.732);
\draw [ultra thick,-latex,red] (0,0) -- (-3,1.732);
\draw [ultra thick,-latex] (0,0) -- (3,-1.732);
\draw [ultra thick,-latex] (0,0) -- (-3,-1.732);
\draw [cyan,fill] (-0.2,0) circle (2pt);
\draw [cyan,fill] (0.2,0) circle (2pt);
\node at (2.2,0.2) {$\pmb{\alpha_1}$};
\node at (-2.2,-0.2) {$\pmb{-\alpha_1}$};
\node at (-3.2,1.932) {$\pmb{\alpha_2}$};
\node at (3.2,-1.932) {$\pmb{-\alpha_2}$};
\node at (-1.2,1.932) {$\pmb{\alpha_3}$};
\node at (1.2,-1.932) {$\pmb{-\alpha_3}$};
\node at (1.2,1.932) {$\pmb{\alpha_4}$};
\node at (-1.2,-1.932) {$\pmb{-\alpha_4}$};
\node at (3.2,1.932) {$\pmb{\alpha_5}$};
\node at (-3.2,-1.932) {$\pmb{-\alpha_5}$};
\node at (-0.2,2*1.832) {$\pmb{\alpha_6}$};
\node at (0.35,-2*1.832) {$\pmb{-\alpha_6}$};
\end{tikzpicture}
\caption{The twelve roots of $G_2$, two of which are zero. $\alpha_1$ and $\alpha_2$ are the simple roots. $\alpha_6$ and $\alpha_4$ are the fundamental weights.}\label{radici}
\end{center}
\end{figure}
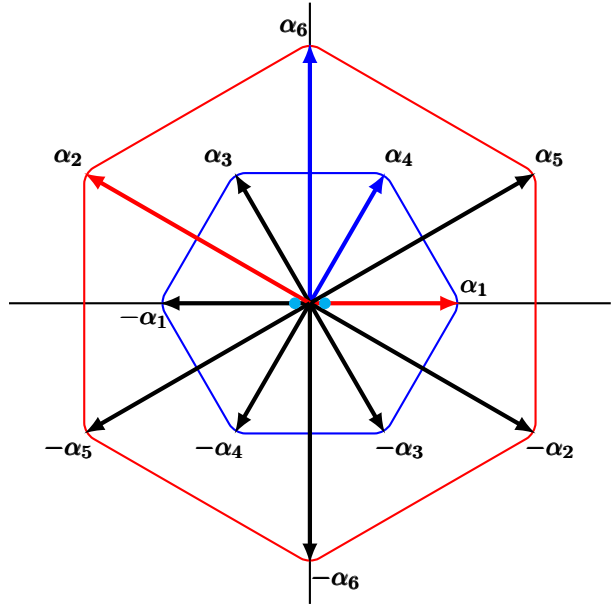

$\alpha_1, \ldots, \alpha_6$ are the positive roots. To each of them, $\alpha_j$, it corresponds an eigenmatrix $\lambda_j$ and to each negative root $-\alpha_j$ it corresponds $\tilde \lambda_j$. To the vanishing roots one associates the
matrices $h_a=i[\lambda_a,\tilde \lambda_a]$, $a=1,2$. The $14$ matrices $h_a$, $\lambda_j+\tilde \lambda_j$, $i(\lambda_j-\tilde \lambda_j)$, $a=1,2$, $j=1,\ldots,6$ form a basis for the adjoint representation $\pmb {14}$, with maximal
weight $\mu_1=\alpha_6$.\\ 
The second fundamental irreducible representation has maximal weight $\mu_2=\alpha_4$. The weights of such representation are $0, \pm\alpha_b$, $b=1,3,4$, each one with multiplicity 1, so that it is a seven dimensional representation, $\pmb 7$.
It is depicted in Fig. \ref{irrep7}.

\begin{figure}[!htbp]
\begin{center}
\begin{tikzpicture}[rounded corners]
\draw [thick] (-3,0)--(3,0);
\draw [thick] (0,-3)--(0,3);
\draw [thick,blue] (2,0) -- (1,1.732) -- (-1,1.732) -- (-2,0) -- (-1,-1.732)  -- (1,-1.732) -- cycle;
\draw [ultra thick,-latex] (0,0) -- (2,0); 
\draw [ultra thick,-latex] (0,0) -- (-2,0);
\draw [ultra thick,-latex,blue] (0,0) -- (1,1.732);
\draw [ultra thick,-latex] (0,0) -- (-1,1.732);
\draw [ultra thick,-latex] (0,0) -- (1,-1.732);
\draw [ultra thick,-latex] (0,0) -- (-1,-1.732);
\draw [cyan,fill] (0,0) circle (2pt);
\node at (2.2,0.2) {$\pmb{v_3}$};
\node at (-2.2,-0.2) {$\pmb{v_5}$};
\node at (-1.2,1.932) {$\pmb{v_2}$};
\node at (1.2,-1.932) {$\pmb{v_6}$};
\node at (1.2,1.932) {$\pmb{v_1}$};
\node at (-1.2,-1.932) {$\pmb{v_7}$};
\node at (-0.3,0.2) {$\pmb{v_4}$};
\end{tikzpicture}
\caption{The irrep $\pmb 7$ of $G_2$. To each weight it corresponds a vector of the basis defining a 7 dimensional vector space.}\label{irrep7}
\end{center}
\end{figure}
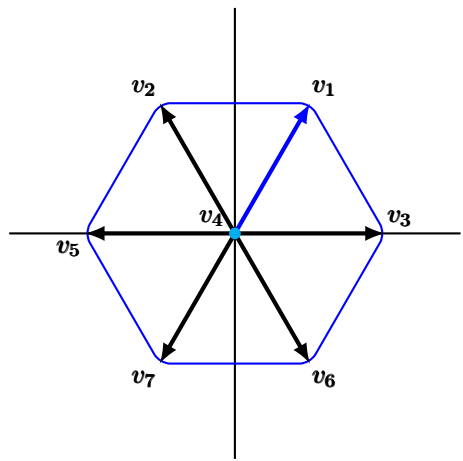

The matrices in this representation are $\rho_{\pmb 7}(h_a)$, $\rho_{\pmb 7}(\lambda_j)+\rho_{\pmb 7}(\tilde \lambda_j)$, $i(\rho_{\pmb 7}(\lambda_j)-\rho_{\pmb 7}(\tilde \lambda_j))$, $a=1,2$, $j=1,\ldots,6$, and can be understood
by noticing that $\rho_{\pmb 7}(\tilde \lambda_j)$ shifts the weights by $\alpha_j$, giving zero if and only if the result is not a weight, and similar for $\tilde \lambda_j$. 

\subsection{Irreducibility of $\chi_{28}$ in representation $\pmb 7$} 
Let us consider $\rho_{\pmb 7}(T_-)=3\rho_{\pmb 7}(\tilde \lambda_1)+\sqrt 5 \rho(\tilde \lambda_2)$ acting on the maximal vector $v_1$ of $\pmb 7$. Since $\alpha_4-\alpha_2$ is not a weight of $\pmb 7$, while $\alpha_4-\alpha_1=\alpha_3$ is,
we have that $\rho_{\pmb 7}(T_-)v_1=3 \rho(\tilde \lambda_1)v_1=k_2 v_2$, with\footnote{We could compute it explicitly, but it is not necessary for our purposes.} $k_2\neq 0$. In the same way we have the chain of relations:
\begin{align*}
 \rho_{\pmb 7}(T_-)v_2&=\sqrt 5\rho_{\pmb 7}(\tilde \lambda_2)v_2=k_3 v_3, \\
 \rho_{\pmb 7}(T_-)v_3&=3 \rho_{\pmb 7}(\tilde \lambda_1)v_3=k_4 v_4, \\
 \rho_{\pmb 7}(T_-)v_4&=3 \rho_{\pmb 7}(\tilde \lambda_1)v_4=k_5 v_5, \\
 \rho_{\pmb 7}(T_-)v_5&=\sqrt 5\rho_{\pmb 7}(\tilde \lambda_2)v_5=k_6 v_6, \\
 \rho_{\pmb 7}(T_-)v_6&=3 \rho_{\pmb 7}(\tilde \lambda_1)v_6=k_7 v_7, \\
 \rho_{\pmb 7}(T_-)v_7&=0,
\end{align*}
with all $k_j$ different from zero. Therefore, $\chi_{28}$ is a representation of spin $3$ and $\pmb 7$ is irreducible under $\chi_{28}$. 

\subsection{Explicit matrix realizations} 
Here we present the explicit matrix representation of the three dimensional subalgebras in the irrep $\pmb 7$ of $G_2$.
These are
\begin{align}
 T^{(1)}_1&= 
\begin{pmatrix}
 0 & 0 & 0 & 0 & 0 & 0 & 0 \\ 
 0 & 0 & -\frac 12 & 0 & 0 & 0 & 0 \\ 
 0 & \frac 12 & 0 & 0 & 0 & 0 & 0 \\ 
 0 & 0 & 0 & 0 & 0 & 0 & 0 \\ 
 0 & 0 & 0 & 0 & 0 & 0 & 0 \\ 
 0 & 0 & 0 & 0 & 0 & 0 & -\frac 12 \\ 
 0 & 0 & 0 & 0 & 0 & \frac 12 & 0
\end{pmatrix}
 ,\\
 T^{(1)}_2&=
\begin{pmatrix}
 0 & 0 & 0 & 0 & 0 & 0 & 0 \\ 
 0 & 0 & 0 & 0 & 0 & 0 & -\frac 12 \\ 
 0 & 0 & 0 & 0 & 0 & -\frac 12 & 0 \\ 
 0 & 0 & 0 & 0 & 0 & 0 & 0 \\ 
 0 & 0 & 0 & 0 & 0 & 0 & 0 \\ 
 0 & 0 & \frac 12 & 0 & 0 & 0 & 0 \\ 
 0 & \frac 12 & 0 & 0 & 0 & 0 & 0
\end{pmatrix}
 ,\\
 T^{(1)}_3&=
 \begin{pmatrix}
 0 & 0 & 0 & 0 & 0 & 0 & 0 \\ 
 0 & 0 & 0 & 0 & 0 & \frac 12 & 0 \\ 
 0 & 0 & 0 & 0 & 0 & 0 & -\frac 12 \\ 
 0 & 0 & 0 & 0 & 0 & 0 & 0 \\ 
 0 & 0 & 0 & 0 & 0 & 0 & 0 \\ 
 0 & -\frac 12 & 0 & 0 & 0 & 0 & 0 \\ 
 0 & 0 & \frac 12 & 0 & 0 & 0 & 0
\end{pmatrix},
\end{align}

\begin{align}
 T^{(3)}_1&=\frac 1{2\sqrt 2} 
\begin{pmatrix}
 0 & 1 & 1 & 0 & 0 & 0 & 0 \\ 
 -1 & 0 & 0 & 0 & 0 & 0 & 0 \\ 
 -1 & 0 & 0 & 0 & 0 & 0 & 0 \\ 
 0 & 0 & 0 & 0 & 0 & -1 & 1 \\ 
 0 & 0 & 0 & 0 & 0 & 2 & 2 \\ 
 0 & 0 & 0 & 1 & -2 & 0 & 0 \\ 
 0 & 0 & 0 & -1 & -2 & 0 & 0
\end{pmatrix}
 ,\\
 T^{(3)}_2&=\frac 1{2\sqrt 2} 
\begin{pmatrix}
 0 & 0 & 0 & 0 & 0 & -1 & 1 \\ 
 0 & 0 & 0 & 1 & -2 & 0 & 0 \\ 
 0 & 0 & 0 & 1 & 2 & 0 & 0 \\ 
 0 & -1 & -1 & 0 & 0 & 0 & 0 \\ 
 0 & 2 & -2 & 0 & 0 & 0 & 0 \\ 
 1 & 0 & 0 & 0 & 0 & 0 & 0 \\ 
 -1 & 0 & 0 & 0 & 0 & 0 & 0
\end{pmatrix}
 ,\\
 T^{(3)}_3&=\frac 1{4} 
 \begin{pmatrix}
 0 & 0 & 0 & 2 & 0 & 0 & 0 \\ 
 0 & 0 & 0 & 0 & 0 & 3 & 1 \\ 
 0 & 0 & 0 & 0 & 0 & -1 & -3 \\ 
 -2 & 0 & 0 & 0 & 0 & 0 & 0 \\ 
 0 & 0 & 0 & 0 & 0 & 0 & 0 \\ 
 0 & -3 & 1 & 0 & 0 & 0 & 0 \\ 
 0 & -1 & 3 & 0 & 0 & 0 & 0
\end{pmatrix},
\end{align}

\begin{align}
 T^{(4)}_1&= 
\begin{pmatrix}
 0 & 1 & 0 & 0 & 0 & 0 & 0 \\ 
 -1 & 0 & 0 & 0 & 0 & 0 & 0 \\ 
 0 & 0 & 0 & 0 & 0 & 0 & 0 \\ 
 0 & 0 & 0 & 0 & 0 & 0 & -1 \\ 
 0 & 0 & 0 & 0 & 0 & 0 & 0 \\ 
 0 & 0 & 0 & 0 & 0 & 0 & 0 \\ 
 0 & 0 & 0 & 1 & 0 & 0 & 0
\end{pmatrix}
 ,\\
 T^{(4)}_2&=
\begin{pmatrix}
 0 & 0 & 0 & 0 & 0 & -1 & 0 \\ 
 0 & 0 & 0 & 0 & 0 & 0 & 0 \\ 
 0 & 0 & 0 & -1 & 0 & 0 & 0 \\ 
 0 & 0 & 1 & 0 & 0 & 0 & 0 \\ 
 0 & 0 & 0 & 0 & 0 & 0 & 0 \\ 
 1 & 0 & 0 & 0 & 0 & 0 & 0 \\ 
 0 & 0 & 0 & 0 & 0 & 0 & 0
\end{pmatrix}
 ,\\
 T^{(4)}_3&=
 \begin{pmatrix}
 0 & 0 & 0 & 0 & 0 & 0 & 0 \\ 
 0 & 0 & 0 & 0 & 0 & 1 & 0 \\ 
 0 & 0 & 0 & 0 & 0 & 0 & -1 \\ 
 0 & 0 & 0 & 0 & 0 & 0 & 0 \\ 
 0 & 0 & 0 & 0 & 0 & 0 & 0 \\ 
 0 & -1 & 0 & 0 & 0 & 0 & 0 \\ 
 0 & 0 & 1 & 0 & 0 & 0 & 0
\end{pmatrix},
\end{align}

\begin{align}
 T^{(28)}_1&= \frac 12
\begin{pmatrix}
 0 & \sqrt 5 & \sqrt 5 & 0 & 0 & 0 & 0 \\ 
 -\sqrt 5 & 0 & \sqrt 6 & 0 & 0 & 0 & 0 \\ 
 -\sqrt 5 & -\sqrt 6 & 0 & 0 & 0 & 0 & 0 \\ 
 0 & 0 & 0 & 0 & -2\sqrt 6 & \sqrt 5 & -\sqrt 5 \\ 
 0 & 0 & 0 & 2\sqrt 6 & 0 & 0 & 0 \\ 
 0 & 0 & 0 & -\sqrt 5 & 0 & 0 & -\sqrt 6 \\ 
 0 & 0 & 0 & \sqrt 5 & 0 & \sqrt 6 & 0
\end{pmatrix}
 ,\\
 T^{(28)}_2&=\frac 12
\begin{pmatrix}
 0 & 0 & 0 & 0 & -2\sqrt 6 & -\sqrt 5 & \sqrt 5 \\ 
 0 & 0 & 0 & -\sqrt 5 & 0 & \sqrt 6 & 0 \\ 
 0 & 0 & 0 & -\sqrt 5 & 0 & 0 & \sqrt 6 \\ 
 0 & \sqrt 5 & \sqrt 5 & 0 & 0 & 0 & 0 \\ 
 2\sqrt 6 & 0 & 0 & 0 & 0 & 0 & 0 \\ 
 \sqrt 5 & -\sqrt 6 & 0 & 0 & 0 & 0 & 0 \\ 
 -\sqrt 5 & 0 & -\sqrt 6 & 0 & 0 & 0 & 0
\end{pmatrix}
 ,\\
 T^{(28)}_3&=\frac 12
 \begin{pmatrix}
 0 & 0 & 0 & 2 & 0 & 0 & 0 \\ 
 0 & 0 & 0 & 0 & 0 & 5 & 1 \\ 
 0 & 0 & 0 & 0 & 0 & -1 & -5 \\ 
 -2 & 0 & 0 & 0 & 0 & 0 & 0 \\ 
 0 & 0 & 0 & 0 & 0 & 0 & 0 \\ 
 0 & -5 & 1 & 0 & 0 & 0 & 0 \\ 
 0 & -1 & 5 & 0 & 0 & 0 & 0
\end{pmatrix}.
\end{align}

\end{appendix}

\newpage


\begin{thebibliography}{99}
\bibitem{pasta1} P. de Forcrand, Proc. Sci., LAT2009 (2009) 010
[arXiv:1005.0539].

\bibitem{pasta2} N. Brambilla et al., Eur. Phys. J. C 74, 2981 (2014).

\bibitem{pasta3} D. G. Ravenhall, C. J. Pethick, and J. R. Wilson, Phys.
Rev. Lett. 50, 2066 (1983).

\bibitem{pasta4} M. Hashimoto, H. Seki, and M. Yamada, Prog. Theor. Phys.
71, 320 (1984).

\bibitem{pasta5} C. J. Horowitz, D. K. Berry, C.M. Briggs, M. E. Caplan, A.
Cumming, and A. S. Schneider, Phys. Rev. Lett. 114, 031102 (2015).

\bibitem{pasta6} D. K. Berry, M. E. Caplan, C. J. Horowitz, G. Huber, and A.
S. Schneider, Phys. Rev. C 94, 055801 (2016).

\bibitem{skyrme} T. Skyrme, \textit{Proc. R. Soc. London} \textbf{A 260},
127 (1961); \textit{Proc. R. Soc. London} \textbf{A 262}, 237 (1961); 
\textit{Nucl. Phys.}\textbf{\ 31}, 556 (1962).

\bibitem{Witten} C. G. Callan Jr. and E. Witten, \textit{Nucl. Phys. B} 
\textbf{239} (1984) 161-176.

\bibitem{GSkyrme} B.M.A.G. Piette, D.H. Tchrakian, Phys. Rev. D 62, 025020
(2000).

\bibitem{witten0} E. Witten, Nucl. Phys. \textbf{B 223} (1983), 422; Nucl.
Phys. \textbf{B 223}, 433 (1983).

\bibitem{bala0} A. P. Balachandran, V. P. Nair, N. Panchapakesan, S. G.
Rajeev, 
Phys. Rev. \textbf{D 28} (1983), 2830.

\bibitem{Bala1} A. P. Balachandran, A. Barducci, F. Lizzi, V.G.J. Rodgers,\
A. Stern, 
Phys. Rev. Lett. \textbf{52 }(1984), 887; A.P. Balachandran, F. Lizzi,
V.G.J. Rodgers,\ A. Stern, 
Nucl. Phys. \textbf{B 256}, 525-556 (1985).

\bibitem{ANW} G. S. Adkins, C. R. Nappi, E. Witten, \textit{Nucl. Phys}. 
\textbf{B 228} (1983), 552-566.

\bibitem{manton} N.~Manton and P.~Sutcliffe, \textit{Topological Solitons},
(Cambridge University Press, Cambridge, 2007).

\bibitem{BaMa} A. Balachandran, G. Marmo, B. Skagerstam, A. Stern, \textit{%
Classical Topology and Quantum States}, World Scientific (1991).

\bibitem{manton0.5} N.S. Manton, P.J. Ruback, Phys.Lett.B 181 (1986) 137-140.

\bibitem{manton0} A. S. Goldhaber, N.S. Manton, Phys.Lett.B 198 (1987)
231-234.

\bibitem{manton0.25} N. S. Manton, Comm. Math. Phys. 111, 469--478 (1987).

\bibitem{manton0.3} M. F. Atiyah, N. S. Manton, Comm. Math. Phys. 153,
391--422 (1993).

\bibitem{manton1} N.S. Manton, B.J. Schroers, M.A. Singer, Commun. Math.
Phys. 245 (2004) 123-147.

\bibitem{spreight} J. M. Speight, Comm. Math. Phys. 332, 355--377 (2014).

\bibitem{56b} S. Chen, Y. Li, Y. Yang, \textit{Phys. Rev.} \textbf{D 89}
(2014), 025007.

\bibitem{56} F. Canfora, \textit{Phys. Rev.} \textbf{D 88}, (2013), 065028;
E. Ayon-Beato, F. Canfora, J. Zanelli, \textit{Phys. Lett.} \textbf{B 752, }%
(2016) 201-205; L. Aviles, F. Canfora, N. Dimakis, D. Hidalgo, \textit{Phys.
Rev. }\textbf{D 96} (2017), 125005; F.~Canfora, M.~Lagos, S.~H.~Oh, J.~Oliva
and A.~Vera, Phys.\ Rev.\ D \textbf{98}, no. 8, 085003 (2018); F. Canfora,
N. Dimakis, A. Paliathanasis, \textit{Eur.Phys.J.} \textbf{C79} (2019) no.2,
139.

\bibitem{56b1} E.~Ayon-Beato, F.~Canfora, M.~Lagos, J.~Oliva, A.~Vera, Eur.\
Phys.\ J.\ C \textbf{80}, no. 5, 384 (2020).

\bibitem{56c} P. D. Alvarez, F. Canfora, N. Dimakis and A. Paliathanasis, 
\textit{Phys. Lett}. \textbf{B 773}, (2017) 401-407.

\bibitem{crystal1} F.~Canfora, \textit{Eur.\ Phys.\ J}.\ \textbf{C 78}, no.
11, 929 (2018).

\bibitem{crystal2} F. Canfora, S.-H. Oh, A. Vera, \textit{Eur.Phys. J}. 
\textbf{C 79} (2019) no.6, 485.

\bibitem{crystal3} F.~Canfora, M.~Lagos and A.~Vera, ``Crystals of
superconducting Baryonic tubes in the low energy limit of QCD at finite
density,'' \textit{Eur.\ Phys.\ J.}\ \textbf{C} \textbf{80}, no. 8, 697
(2020).

\bibitem{crystal3.1} M. Barsanti, S. Bolognesi, F. Canfora, G. Tallarita, 
\textit{Eur.Phys.J. }\textbf{C 80} (2020) 12, 1201.

\bibitem{crystal4} F.~Canfora, S.~Carignano, M.~Lagos, M.~Mannarelli and
A.~Vera, ``Pion crystals hosting topologically stable baryons,'' Phys.\
Rev.\ D \textbf{103}, no. 7, 076003 (2021).

\bibitem{firstube} F. Canfora, A. Giacomini, M. Lagos, S. H. Oh, A. Vera, 
\textit{Eur.Phys. J.} \textbf{C 81} (2021) 1, 55.

\bibitem{firsttube2} F.~Canfora, A.~Cisterna, D.~Hidalgo and J.~Oliva,
``Exact $pp$-waves, (A)dS waves, and Kundt spaces in the Abelian-Higgs
model,'' Phys.\ Rev.\ D \textbf{103}, no. 8, 085007 (2021).

\bibitem{gaugsksu(n)} P.~D.~Alvarez, S.~L.~Cacciatori, F.~Canfora and
B.~L.~Cerchiai, ``Analytic SU(N) Skyrmions at finite Baryon density,''
Phys.\ Rev.\ D \textbf{101}, no. 12, 125011 (2020).

\bibitem{euler1} S. Bertini, S. L. Cacciatori, B. L. Cerchiai, J. Math.
Phys. (N.Y.) 47, 043510 (2006).

\bibitem{euler2} S. L. Cacciatori, F. Dalla Piazza, and A. Scotti, Trans.
Am. Math. Soc. 369, 4709 (2017).

\bibitem{euler3} T. E. Tilma and G. Sudarshan, J. Geom. Phys. 52, 263 (2004).

\bibitem{Cacciatori:2021neu} S.~L.~Cacciatori, F.~Canfora, M.~Lagos,
F.~Muscolino and A.~Vera, ``Analytic multi-Baryonic solutions in the
SU(N)-Skyrme model at finite density and a novel transition,''
[arXiv:2105.10789 [hep-th]].


\bibitem{Dy-57} E. B. Dynkin, ``Semisimple subalgebras of semisimple Lie
algebras,'' Am. Math. Soc. Transl., Ser. 2 6, 111, 245 (1957).

\bibitem{FFP1} T. Wiegelmann and T. Sakurai, Living Rev. Solar Phys. 9, 5
(2012).

\bibitem{FFP2} T. Wiegelmann, J. Geophys. Res. 113, A03S02 (2008).

\bibitem{FFP3} R. D. Blandford and R. L. Znajek, Mon. Not. R. Astron. Soc.
179, 433 (1977).

\bibitem{FFP4} B. Carter, in General Relativity: An Einstein Centenary
Survey (1979), pp. 294--369.

\bibitem{FFP5} T. D. Brennan, S. E. Gralla, and T. Jacobson, Classical
Quantum Gravity 30, 195012 (2013).

\bibitem{Magnus} W. Magnus, ``Infinite determinants associated with Hill's
equation,'' Pacific J. Math. 5(S2): 941-951 (1955).

\bibitem{Whittaker-Hill} Urwin,~K., and Arscott,~F. (1970). ``Theory of the
Whittaker Hill Equation,`` Proceedings of the Royal Society of Edinburgh.
Section A. Mathematical and Physical Sciences, 69(1), 28-44.
doi:10.1017/S0080454100008530

\bibitem{AlCaCaCe} P.~D.~Alvarez, S.~L.~Cacciatori, F.~Canfora and
B.~L.~Cerchiai, ``Analytic SU(N) Skyrmions at finite Baryon density,'' Phys.
Rev. D \textbf{101} (2020) no.12, 125011

\bibitem{CaDaPiSc} S.~L.~Cacciatori, F.~Dalla Piazza and A.~Scotti,
``Compact Lie groups: Euler constructions and generalized Dyson
conjecture,'' Trans. Am. Math. Soc. \textbf{369} (2017) no.7, 4709-4724

\bibitem{He} S. Helgason, Differential geometry, Lie groups, and symmetric
spaces, Pure and Applied Mathematics, vol. 80, Academic Press, Inc.
[Harcourt Brace Jovanovich, Publishers], New York-London, 1978.

\bibitem{CaWi} Jr.~C.~G.~Callan and E.~Witten, \textquotedblleft Monopole
Catalysis of Skyrmion Decay,\textquotedblright\ Nucl. Phys. B \textbf{239}
(1984), 90088-9.

\bibitem{Schafke} F. W. Sch\"afke, ``\"Uber die Stabilit\"atskarte der
Mathieuschen Differentialglaichung,'' Math. Nachr., \textbf{4} (1951),
176--183.

\bibitem{Arscott67} Arscott,~FM. ``The Whittaker-Hill Equation and the Wave
Equation in Paraboloidal Co-ordinates,`` Proceedings of the Royal Society of
Edinburgh Section A Mathematical and Physical Sciences. 1967;67(4):265-276.
doi:10.1017/S008045410000813X

\bibitem{ArscottPeriodicEq} Reiszig,~R., F.~M.~Arscott, ``Periodic
Differential Equations,`` (International Series of Monographs in Pure and
Applied Mathematics, Volume 66) X + 283 S. m. Abb.
Oxford/London/Edinburgh/New York/Paris/Frankfurt 1964. Pergamon Press. Preis
geb. 60 s.net. Z. angew. Math. Mech., 45: 453-454.
https://doi.org/10.1002/zamm.19650450632

\bibitem{FSpaghettiMax} Canfora,~F., Oh,~S.H., Vera,~A. ``Analytic crystals
of solitons in the four dimensional gauged non-linear sigma model,`` Eur.
Phys. J. C 79, 485 (2019). https://doi.org/10.1140/epjc/s10052-019-6994-y

\bibitem{Lubcke} L\"ubcke, Eva. ``The direct and the inverse problem of
finite type Fermi curves of two-dimensional double-periodic Schr\"{o}dinger
operators,'' Inauguraldissertation zur Erlangung des akademischen Grades
eines Doktors der Naturwissenschaften der Universit\"{a}t Mannheim (2017),
https://madoc.bib.uni-mannheim.de/43675

\bibitem{Sl} R.~Slansky, ``Group Theory for Unified Model Building,'' Phys.
Rept. \textbf{79} (1981), 1-128

\bibitem{CaCeDeVeOrSc} S.~L.~Cacciatori, B.~L.~Cerchiai, A.~Della Vedova,
G.~Ortenzi and A.~Scotti, \textquotedblleft Euler angles for
G(2),\textquotedblright\ J. Math. Phys. \textbf{46} (2005), 083512

\bibitem{wittenstrings} E. Witten, Nucl. Phys. B 249, 557 (1985).

\bibitem{Kaup} D.~J.~Kaup, Phys. Rev. 172, 1331 (1968)

\bibitem{Liebling} S.~L.~Liebling, C.~Palenzuela, Living Rev. Relativ. 15,6 (2012)

\bibitem{Derrick1964} G.~H.~Derrick, J. Math. Phys. 1252 (1964)

\end{thebibliography}
\end{document}